\newtheorem{theorem}{Theorem}
\newtheorem{definition}{Definition}
\newtheorem{lemma}{Lemma}
\newtheorem{corollary}{Corollary}
\newtheorem{remark}{Remark}
\newtheorem{example}{Example}
\newtheorem{assumption}{Assumption}
\DeclareMathOperator{\rank}{rank}
\DeclareMathOperator{\mat}{mat}
\DeclareMathOperator{\abs}{abs}
\newcommand*{\QEDA}{\hfill\ensuremath{\blacksquare}}   
\newcounter{protocol}
\newenvironment{protocol}[1]
{\par\addvspace{\topsep}
	\noindent
	\tabularx{\linewidth}{@{} X @{}}
	\hline
	\refstepcounter{protocol}\textbf{Protocol \theprotocol} #1 \\
	\hline}
{ \\
	\hline
	\endtabularx
	\par\addvspace{\topsep}}
\def\BibTeX{{\rm B\kern-.05em{\sc i\kern-.025em b}\kern-.08em
		T\kern-.1667em\lower.7ex\hbox{E}\kern-.125emX}}
\begin{document}
\title{Angle-Based Sensor Network Localization}
\author{Gangshan Jing, Changhuang Wan and Ran Dai 
\thanks{This supplementary paper contains all the theoretical proofs missed in the journal paper ``Angle-Based Sensor Network Localization", which is published in IEEE Transactions on Automatic Control.} 
\thanks{Gangshan Jing and Changhuang Wan are with Department of Mechanical and Aerospace Engineering, The Ohio State University, Columbus, OH 43210, USA.
        Emails: {nameisjing@gmail.com and wan.326@osu.edu }}%
    \thanks{Ran Dai is with the School of Aeronautic and Astronautics, Purdue University, West Lafayette, IN 47907, USA. Email: {randai@purdue.edu}}
}

\maketitle

\begin{abstract}
This paper studies angle-based sensor network localization (ASNL) in a plane, which is to determine locations of all sensors in a sensor network, given locations of partial sensors (called anchors) and angle measurements obtained in the local coordinate frame of each sensor. Firstly it is shown that a framework with a non-degenerate bilateration ordering must be angle fixable, implying that it can be uniquely determined by angles between edges up to translations, rotations, reflections and uniform scaling. Then ASNL is proved to have a unique solution if and only if the grounded framework is angle fixable and anchors are not all collinear. Subsequently, ASNL is solved in centralized and distributed settings, respectively. The centralized ASNL is formulated as a rank-constrained semi-definite program (SDP) in either a noise-free or a noisy scenario, with a decomposition approach proposed to deal with large-scale ASNL. The distributed protocol for ASNL is designed based on inter-sensor communications. Graphical conditions for equivalence of the formulated rank-constrained SDP and a linear SDP,  decomposition of the SDP, as well as the effectiveness of the distributed protocol, are proposed, respectively. Finally, simulation examples demonstrate our theoretical results.

\end{abstract}

\begin{IEEEkeywords}
	Network localization, angle rigidity, rank-constrained optimization, non-convex optimization, chordal decomposition
\end{IEEEkeywords}
\section{Introduction}

A sensor network localization problem is to determine locations of all sensors when locations of partial sensors (called anchors) and relative measurements between some pairs of sensors are available. It has received significant attention due to the importance of sensor locations in many scenarios, e.g., fusion of sensor measurements according to locations, searching sensors in specified areas, and tracking a moving target \cite{Aspnes06,Mao07,Paul17}. 

In the literature, depending on sensing capabilities of sensors, sensor network localization (SNL) has been studied via relative position-based \cite{Barooah08}, range-based \cite{Aspnes06}, \cite{Biswas04}-\cite{Wan19} and bearing (angle of arrival)-based \cite{Eren06}-\cite{Li19} approaches. Among them, bearing-based SNL (BSNL) is a popular topic in recent years since bearings can be captured by vision sensors \cite{Zhao18}. Nevertheless, BSNL requires each sensor to know bearing measurements with respect to the global coordinate frame, which can be realized by either equipping each sensor with specific devices (e.g., GPS, compasses) \cite{Zhu14,Zhao16} or implementing coordinate frame alignment algorithms \cite{Trinh18,Li19,Van18} via inter-sensor communications. As a result, these methods either become invalid in GPS-denied environments (e.g., underwater, indoor) or require frequent inter-sensor communications before or during implementation of the localization protocol. Although the authors in \cite{Lin16} proposed an algorithm based on bearings measured in local coordinate frames, the sensing graph has to contain more edges for solvability of SNL compared to localization via global bearing measurements (e.g., \cite{Zhu14,Zhao16}). In addition, extensive efforts have been carried out on range-based SNL (RSNL), where range measurements are independent of the global coordinate frame. Unfortunately, in many circumstances the solvability of RSNL requires more sensing than BSNL.

In SNL problems, it is important to distinguish what kind of sensor network is localizable given available anchor locations and measurements from sensors. This problem is usually tackled by checking whether the shape of the grounded graph can be uniquely determined by measurements. In BSNL and RSNL, bearing rigidity theory \cite{Eren06}-\cite{Li19}, \cite{Trinh18b,Trinh19} and distance rigidity theory \cite{Aspnes06,So07,Wan19}, \cite{Asimow78}-\cite{Henneberg11} are employed to propose conditions for localizability, respectively. 
In \cite{Jing18}, the authors first developed an angle-based shape determination approach (namely, angle rigidity theory), where the minimum number of edges required for shape determination is the same as that for the bearing-based approach. Note that an angle between two edges joining one sensor is independent of the global coordinate frame. In practice, bearing (angle) measurements in a local coordinate system are usually low cost, reliable, and can be captured by vision sensors (e.g., monocular pinhole cameras \cite{Buckley17}). In recent years, angle-based formation control has attracted a growing interest due to the above-mentioned advantages of using angles as constraints or measurements \cite{Jing18}-\cite{Chen19}. However, the application of angle measurements to SNL has not been fully explored. Although a distributed SNL problem is equivalent to a distributed formation control problem in special cases, they are generally different because sensors may not be subject to dynamics constraints \footnote{In this paper, we study SNL from an optimization perspective, where sensors are not considered to have specific dynamics on their estimated states. In the case when sensors are subject to dynamics constraints, the distributed SNL problem becomes how to solve the optimizations formulated in this paper in a multi-agent cooperative control setting.}. In \cite{Fang20}, angle measurements are utilized in SNL, but the proposed approach requires the network to have more sensing and communication links than being angle rigid. In the present work, SNL based on angle  measurements will be studied under a milder graphical condition than that in \cite{Jing18}-\cite{Fang20}, and the proposed algorithms achieve guaranteed global convergence. SNL based on angle measurements is named angle-based sensor network localization (ASNL), which is to determine locations of the sensors other than anchors, given locations of anchors and angle measurements obtained in the local coordinate system of each sensor.

In this paper, we propose the concept of angle fixability based on angle rigidity theory in \cite{Jing18,Jing19} to characterize the property of a network that can be determined by angles uniquely up to translations, rotations, uniform scaling and reflections. By establishing connections between angle fixability and angle localizability, the results on angle fixability are applied to ASNL problems. ASNL will be studied in centralized and distributed frameworks, respectively. In both centralized ASNL (CASNL) and distributed ASNL (DASNL), each sensor is only capable of obtaining angle measurements with respect to its own local coordinate frame. 

A preliminary version of the centralized case has been presented in \cite{Jing19con}. This paper extends our former work in \cite{Jing19con} by presenting the generic property of angle fixability, chordal decomposition, ASNL with noises, DASNL and detailed proofs for Theorems \ref{th rank} and \ref{th D rank1}.

Our main contributions are summarized as follows:
\begin{itemize}
\item Equivalent algebraic conditions (Lemmas \ref{le f=ig}, \ref{le f=gn}) and a sufficient graphical condition  (Theorem \ref{th angle fixable}) for angle fixability in a plane are proposed. This graphical condition is milder than that in angle-based cooperative control references \cite{Jing18}-\cite{Fang20}, and it implicitly implies an approach to constructing angle fixable frameworks (Defnition \ref{de af framework}).
\item A graphical condition for localizability of ASNL is given (Theorem \ref{th localizability=fixability}). The CASNL problem is formulated as a rank-constrained semi-definite program (SDP) (Lemma \ref{le rank Z}). It is shown that if the grounded framework is acute-triangulated, then ASNL is equivalent to a linear SDP, which can be solved in polynomial time; see Theorem \ref{th equivalence}.
\item To handle large scale ASNL problems, we formulate ASNL as an SDP with two unknown matrices (problem (\ref{YD optimization})). When the grounded graph has a bilateration ordering, the first unknown matrix can be decomposed via chordal decomposition (Theorem \ref{th decomposeY}); when the grounded framework is acute-triangulated, the second unknown matrix can be decomposed into matrices in reduced sizes as well (Theorem \ref{th decomposeD}).
\item In a noisy environment, from the maximum likelihood estimation perspective, we model ASNL as an SDP with multiple rank-1 constraints and semi-definite constraints, which can be solved by algorithms in \cite{Sun17,You19,Miller19}.
\item Based on communications between adjacent sensors, a distributed protocol (Protocol 1) is proposed, which solves ASNL with guaranteed finite-time convergence (Theorem \ref{th BLP}). The upper bound of the convergence step is shown to be the number of sensors to be localized.
\end{itemize}
 
The main advantages of the proposed ASNL approach can be summarized from the following two perspectives: (i) Compared with BSNL, each sensor does not need bearing information in the global coordinate frame. In \cite{Trinh18,Li19,Van18}, sensors obtain  global bearing measurements by communicating with each other, which is a necessary procedure before or during implementation of the bearing-based localization algorithm. In constrast, the proposed CASNL approach does not require communications between sensors at all. Moreover, in DASNL, the coordinate frame alignment procedure can be avoided, and each sensor only communicates with its neighboring sensors for finite times. Hence, the ASNL approach requires lower communication costs. (ii) Compared with RSNL, the angle-based approach is applicable to a set of SNL that cannot be resolved by the existing range-based approaches (e.g., examples in Fig. \ref{fig angle rigidity} (b), (c), (d), and Fig. \ref{fig example for asnl}).

The outline of this paper is as follows. Section \ref{sec angle rigidity} provides preliminaries of angle rigidity theory and chordal decomposition. Section \ref{sec angle fixability} introduces the concept of angle fixability and provides criteria for angle fixability and relevant properties. Section \ref{sec ASNL} formulates ASNL as a QCQP and gives the necessary and sufficient conditions for ASNL to have a unique solution. Section \ref{sec centralized ASNL} solves the noise-free and noisy ASNL using a centralized framework. Section \ref{sec distributed ASNL} proposes a distributed protocol via inter-sensor communications for ASNL. Section \ref{sec simulation} exhibits several simulation examples. The concluding remarks are addressed in Section \ref{sec conclusion}.

\textbf{Notation}: Throughout the paper, $\mathcal{G}=(\mathcal{V},\mathcal{E})$ denotes an undirected graph, where $\mathcal{V}$ and $\mathcal{E}\subset\mathcal{V}\times\mathcal{V}$ denote the vertex set and edge set, respectively. The neighbor set of each vertex $i$ is denoted by $\mathcal{N}_i=\{j\in\mathcal{V}:(i,j)\in\mathcal{E}\}$. A $m\times n$ zero matrix is denoted by $\mathbf{0}_{m\times n}$, where ``$m\times n$" may be omitted if the dimension of the zero matrix can be observed. Given sets $A$ and $B$, $|A|$ is the cardinality of $A$, $A\setminus B$ is the set of elements in $A$ but not in $B$. The $d$-dimensional orthogonal group is written as $\text{O}(d)$. Given a matrix $X$, $\rank(X)$ is the rank of $X$, $X\succeq0$ implies that $X$ is positive semi-definite, $\det(X)$ denotes the determinant of $X$. A vector $p=(p_1^{\top},...,p_s^{\top})^{\top}$ is degenerate if $p_1,...,p_s$ are collinear. We use $\mathcal{K}$ to represent a complete graph with appropriate number of vertices, $I_d$ to denote the $d\times d$ identity matrix, $\otimes$ to denote the Kronecker product, $X_{a:b,c:d}$ is the submatrix of $X$ consisting of elements from $a$-th to $b$-th rows and $c$-th to $d$-th columns of $X$. Given matrices $X$ and $Y$, $\langle X,Y\rangle=\text{trace}(X^\top Y)$.

\section{Preliminaries}\label{sec angle rigidity}

In this section, some preliminaries of angle rigidity theory and chordal graphs will be introduced, which are important for studying  solvability and decomposability of an ASNL problem.

\subsection{Angle Rigidity Theory}

In \cite{Jing18}, angle rigidity theory is developed to study what kind of geometric shapes can be uniquely determined by angles subtended in the graph only. Similar to distance rigidity theory in RSNL and bearing rigidity theory in BSNL, angle rigidity theory plays an important role in solving ASNL. In \cite{Chen19}, the authors presented a different angle rigidity theory by taking the sign of each angle into account, which implies that all angles are defined in a common counterclockwise direction. Different from \cite{Chen19}, the angle considered in this paper does not have a specific sign. As a result, different sensors are allowed to have different definitions about the rotational direction. In this subsection, we will briefly review several definitions regarding angle rigidity theory proposed in \cite{Jing18} that will be used later. 

A graph $\mathcal{G}=(\mathcal{V},\mathcal{E})$ with $|\mathcal{V}|=n$ can be embedded in a plane by giving each vertex $i$ a position $p_i\in\mathbb{R}^2$. The vector $p=(p_1^{\top},...,p_n^{\top})^{\top}\in\mathbb{R}^{2n}$ is called a {\it configuration}, $(\mathcal{G},p)$ is called a {\it framework}. Each angle we use to determine the framework shape is an angle between two edges joining one common vertex, and the cosine of this angle will be constrained. For example, for the angle between edge $(i,j)$ and $(i,k)$, the cosine of this angle, i.e., $g_{ij}^{\top}g_{ik}$, will be constrained, where $g_{ij}=\frac{p_i-p_j}{||p_i-p_j||}$ is the bearing between vertices $i$ and $j$. The set of angle constraints in a graph $\mathcal{G}$ can be denoted by $\{g_{ij}^{\top}g_{ik}=a_{ijk}: a_{ijk}\in[-1,1],(i,j,k)\in\mathcal{T}_\mathcal{G}\}$, $\mathcal{T}_\mathcal{G}=\{(i,j,k)\in\mathcal{V}^3:(i,j),(i,k)\in\mathcal{E},j<k\}$, here ``$j<k$" avoids repeating each angle. Let $\theta_{ijk}$ denote the angle between $p_i-p_j$ and $p_i-p_k$, when $a_{ijk}$ is given, we can obtain a unique $\theta_{ijk}=\arccos a_{ijk}\in [0,\pi]$. That is, each angle constraint actually constrains an angle within the range $[0,\pi]$. Similar settings are considered in \cite{Eren03}-\cite{Jing19}. Note that when an angle is defined under a specified counterclockwise direction, it should be within the range $[0,2\pi)$ \cite{Chen19}. 

The {\it angle rigidity function} \cite{Jing18} of a framework $(\mathcal{G},p)$ is defined as
\begin{equation}
f_{\mathcal{G}}(p)=(..., g_{ij}^{\top}(p)g_{ik}(p),...)^{\top}, (i,j,k)\in\mathcal{T}_\mathcal{G}.
\end{equation}
A framework $(\mathcal{G},p)$ is {\it globally angle rigid} if $f_{\mathcal{G}}^{-1}(f_\mathcal{G}(p))=f_\mathcal{K}^{-1}(f_\mathcal{K}(p))$, here $\mathcal{K}$ is the complete graph with the same vertex set as $\mathcal{G}$. $(\mathcal{G},p)$ is {\it infinitesimally angle rigid} if all the infinitesimal angle motions are trivial. Here, the {\it infinitesimal angle motion} is a motion of the framework such that all angles in the framework (i.e., $f_\mathcal{G}(p)$) are invariant, a motion is {\it trivial} if it is a combination of  translations, rotations, and uniform scaling. An alternative condition for infinitesimally angle rigidity in $\mathbb{R}^2$ is $\rank(\frac{\partial f_{\mathcal{G}}(p)}{\partial p})=2n-4$. In \cite{Jing18}, the definitions of global angle rigidity and infinitesimal angle rigidity are based on existence of a subset of $\mathcal{T}_\mathcal{G}$, which are actually equivalent to our definitions here. Compared with bearing rigidity \cite{Zhao16}, distance rigidity \cite{Asimow78}, and weak rigidity \cite{Kwon18,Jing18w}, the essential novelty of angle rigidity theory is that only subtended angles are used in the rigidity function.

Three examples are presented in Fig. \ref{fig angle rigidity} to illustrate these definitions. In Fig. \ref{fig angle rigidity}, frameworks (a) and (e) are nonrigid. In framework (a), vertices 1, 2, 3 and 4 can move simultaneously to deform the shape while maintaining all subtended angles. In framework (e), vertices 4 and 5 can move freely along the line between 1 and 4 and the line between 2 and 5, respectively. Frameworks (b), (c) and (d) are globally and infinitesimally angle rigid because the angles in each framework are sufficient to determine the entire shape uniquely; in Fig. \ref{fig angle rigidity} (f), since the graph is complete, the framework is globally angle rigid. It is not infinitesimally angle rigid because vertex 2 can move freely along the line between vertices 1 and 3.

\begin{figure}
	\centering
	\includegraphics[width=9cm]{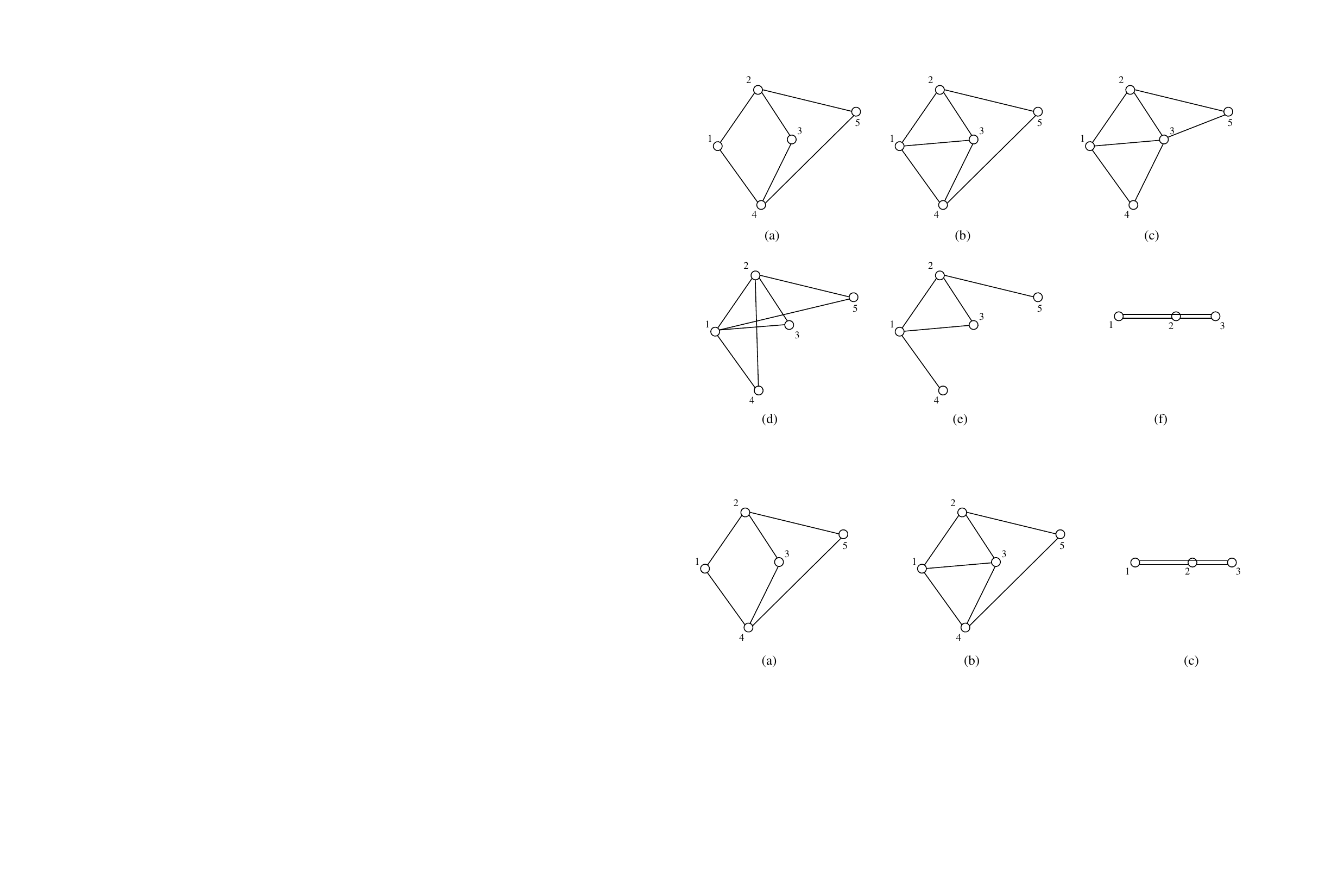}
	\caption{Some frameworks (graphs) in the plane, frameworks (a)-(e) have the same configuration but different graphs.}
	\label{fig angle rigidity}
\end{figure}

\subsection{Chordal Graphs and Chordal Decomposition}

A graph is said to be {\it chordal} if each cycle with more than three vertices in this graph has a {\it chord}. Here a chord is an edge between two nonconsecutive vertices in the cycle. A {\it clique} $\mathcal{C}$ of a graph $\mathcal{G}=(\mathcal{V},\mathcal{E})$ is a subset of $\mathcal{V}$ such that each pair of vertices in $\mathcal{C}$ are adjacent. In Fig. \ref{fig angle rigidity}, graphs (a) and (b) are not chordal, graphs (c)-(f) are all chordal.  We say a clique $\mathcal{C}$ is a {\it $s$-point clique} if $|\mathcal{C}|=s$. A clique $\mathcal{C}$ is said to be a {\it maximal clique} if there is no other clique containing this clique. In Fig. \ref{fig angle rigidity} (c), there are 3 maximal cliques: $\mathcal{C}_1=\{1,2,3\}$, $\mathcal{C}_2=\{1,3,4\}$, $\mathcal{C}_3=\{2,3,5\}$. Given a maximal clique $\mathcal{C}$, we define a transformation matrix $Q_{\mathcal{C}}\in\mathbb{R}^{|\mathcal{C}|\times n}$ such that $Q_{\mathcal{C}}\eta=(\eta_{\mathcal{C}(1)},...,\eta_{\mathcal{C}(|\mathcal{C}|)})^{\top}\in\mathbb{R}^{|\mathcal{C}|}$ for any $n$-dimensional vector $\eta=(\eta_1,...,\eta_n)^{\top}\in\mathbb{R}^n$, $\mathcal{C}(i)$ denotes the $i$-th element of $\mathcal{C}$. Each element of $Q_\mathcal{C}$ is defined as:

\begin{equation}  
(Q_\mathcal{C})_{ij}=
\begin{cases}
1, &  j=\mathcal{C}(i),\\  
0, &  \text{otherwise}.    
\end{cases}  
\end{equation} 
The following lemma gives a condition for equivalence between positive semi-definiteness of a matrix and positive semi-definiteness of its submatrices corresponding to maximal cliques.
\begin{lemma}\cite{Grone84}\label{le chordal decomposition}
	Given $\mathcal{G}=(\mathcal{V},\mathcal{E})$ as a chordal graph and a matrix $X\in\mathbb{R}^{|\mathcal{V}|\times|\mathcal{V}|}$, let $\{\mathcal{C}_1, \mathcal{C}_2,..., \mathcal{C}_p\}$ be the set of its maximal clique sets. Then, $X\succeq 0$ if and only if $Q_{\mathcal{C}_k}XQ_{\mathcal{C}_k}^{\top}\succeq0$, $k=1,...,p$.
\end{lemma}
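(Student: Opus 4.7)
The plan is to treat the two directions separately, observing that only the reverse implication uses the chordality of $\mathcal{G}$.

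For the forward direction ($\Rightarrow$), I would simply note that $Q_{\mathcal{C}_k}XQ_{\mathcal{C}_k}^\top$ is the principal submatrix of $X$ indexed by $\mathcal{C}_k$. For any $y\in\mathbb{R}^{|\mathcal{C}_k|}$, setting $z = Q_{\mathcal{C}_k}^\top y$ gives $y^\top Q_{\mathcal{C}_k} X Q_{\mathcal{C}_k}^\top y = z^\top X z \geq 0$, so each clique block is PSD without any structural assumption on $\mathcal{G}$.

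The reverse direction ($\Leftarrow$) is the substantive half, and the plan is to argue by induction on the number $p$ of maximal cliques, using the fact that every chordal graph admits a clique tree with the running intersection property. The base $p=1$ is immediate because $\mathcal{C}_1 = \mathcal{V}$ and $Q_{\mathcal{C}_1}$ is a permutation. For the inductive step I would pick a leaf clique $\mathcal{C}_p$ of a clique tree of $\mathcal{G}$, let $\mathcal{S} = \mathcal{C}_p \cap \bigcup_{k<p}\mathcal{C}_k$ be the separator (contained in some neighboring clique $\mathcal{C}_{k^\ast}$ by running intersection), and let $\mathcal{R} = \mathcal{C}_p\setminus\mathcal{S}$ be the private vertices. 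After permuting indices so $\mathcal{R}$ is last, $X$ becomes
\[
X=\begin{pmatrix} A & B \\ B^\top & D \end{pmatrix},
\]
where $D = X_{\mathcal{R},\mathcal{R}}$, and the hypothesis on $\mathcal{C}_p$ gives $D\succeq 0$ together with PSDness of $X_{\mathcal{C}_p,\mathcal{C}_p}$. The Schur complement identity then reduces $X\succeq 0$ to $A - B D^\dagger B^\top \succeq 0$ (plus the range condition $\mathrm{range}(B)\subseteq\mathrm{range}(D)$, which follows from $X_{\mathcal{C}_p,\mathcal{C}_p}\succeq 0$).

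The crux — and the part I expect to be the main obstacle — is to show that the Schur-complement update $BD^\dagger B^\top$ has support confined to $\mathcal{C}_{k^\ast}$, so that the reduced matrix $A - BD^\dagger B^\top$ is itself a matrix on the chordal subgraph $\mathcal{G}\setminus \mathcal{R}$ whose maximal cliques are exactly $\mathcal{C}_1,\ldots,\mathcal{C}_{p-1}$, with the same clique-wise PSD property for $k<p$ (the only clique whose block changes is $\mathcal{C}_{k^\ast}$, and the update leaves it PSD by standard Schur complement positivity). This step requires the sparsity pattern of $X$ to align with $\mathcal{G}$ (so that rows of $B$ indexed outside $\mathcal{S}$ vanish), which is the implicit compatibility hypothesis behind the statement. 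Once this sparsity inheritance is verified, the induction hypothesis applied to the smaller chordal graph yields $A - BD^\dagger B^\top \succeq 0$, hence $X\succeq 0$, completing the proof.
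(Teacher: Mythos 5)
The paper gives no proof of this lemma---it is quoted from Grone et al.\ \cite{Grone84}---so your attempt can only be judged on its own terms. Your forward direction is correct. The reverse direction, however, breaks exactly at the step you flag as the crux, and the break is not repairable within your framework, because the implication you are trying to establish is false even under the implicit sparsity hypothesis you correctly identify. Take the chordal path graph on $\mathcal{V}=\{1,2,3\}$ with edges $(1,2),(2,3)$, maximal cliques $\mathcal{C}_1=\{1,2\}$, $\mathcal{C}_2=\{2,3\}$, and
\begin{equation*}
X=\begin{pmatrix}1&1&0\\ 1&1&1\\ 0&1&1\end{pmatrix},
\end{equation*}
which respects the sparsity pattern. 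Both clique blocks equal $\left(\begin{smallmatrix}1&1\\1&1\end{smallmatrix}\right)\succeq0$, yet $\det X=-1$, so $X\not\succeq0$. Running your induction on this example: $\mathcal{C}_2$ is the leaf, $\mathcal{S}=\{2\}$, $\mathcal{R}=\{3\}$, $D=1$, $B=(0,1)^{\top}$, and $A-BD^{\dagger}B^{\top}=\left(\begin{smallmatrix}1&1\\1&0\end{smallmatrix}\right)$, which is not positive semi-definite. So the assertion that the update ``leaves $\mathcal{C}_{k^{\ast}}$'s block PSD by standard Schur complement positivity'' is wrong: subtracting $B_{\mathcal{S}}D^{\dagger}B_{\mathcal{S}}^{\top}$ from the $\mathcal{S}\times\mathcal{S}$ corner of $X_{\mathcal{C}_{k^{\ast}},\mathcal{C}_{k^{\ast}}}$ is controlled by the positive semi-definiteness of $X_{\mathcal{C}_p,\mathcal{C}_p}$ only on that corner, not on all of $\mathcal{C}_{k^{\ast}}$, so the inductive hypothesis cannot be invoked.

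What Grone et al.\ actually prove is a \emph{completion} theorem: if all maximal-clique blocks are PSD, the unspecified entries (those off the chordal pattern) can be \emph{chosen} so that the completed matrix is PSD; they cannot be fixed at zero in advance, which is what your zero-filled reduction implicitly does. This is also how the lemma is used in Theorems \ref{th decomposeY} and \ref{th decomposeD}: the off-clique entries of $Y$ and $D$ are free SDP variables, so replacing $Y\succeq0$ by its clique constraints changes the feasible set only up to completion. To repair your induction you must carry the completion along: having completed the partial matrix on $\mathcal{V}\setminus\mathcal{R}$ to some $\tilde{A}\succeq0$ by the inductive hypothesis, fill the block between $\mathcal{U}=\mathcal{V}\setminus\mathcal{C}_p$ and $\mathcal{R}$ with $\tilde{A}_{\mathcal{U}\mathcal{S}}\,X_{\mathcal{S}\mathcal{S}}^{\dagger}\,X_{\mathcal{S}\mathcal{R}}$ rather than zero; a Schur-complement computation then shows the resulting matrix is PSD. (Alternatively one completes one missing entry at a time via a $3\times3$ completion lemma of exactly the type the paper proves as Lemma \ref{le M_{23}} in the Appendix.) As stated, with a fixed dense or zero-filled $X$, the ``if'' direction of the lemma does not hold, and no argument can close that gap without reinterpreting the statement as a completion result.
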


\section{Angle fixability}\label{sec angle fixability}
To better understand what kind of geometric shapes can be uniquely determined by angles, we introduce the notion of {\it angle fixability} in this section, which is stronger than global angle rigidity and infinitesimal angle rigidity.

The formal definition of angle fixability is given below.

\begin{definition}\label{de fixability}
	A framework $(\mathcal{G},p)$ is angle fixable in $\mathbb{R}^d$ if $f_\mathcal{G}^{-1}(f_\mathcal{G}(p))=\mathscr{S}_p$, where
\begin{equation}\label{S_p}
\begin{split}
\mathscr{S}_p=\{q\in\mathbb{R}^{nd}: &q=c(I_n\otimes\mathscr{R})p+\mathbf{1}_n\otimes\xi, \\ &\mathscr{R}\in\text{O}(d),c\in\mathbb{R}\setminus\{0\}, \xi\in\mathbb{R}^d\}.
\end{split}
\end{equation}	
\end{definition}

From Definition \ref{de fixability}, we observe that the set $\mathscr{S}_p$ actually defines a set of configurations forming the same shape as the one formed by $p$. That is, if $q\in\mathscr{S}_p$, then $q$ can be obtained from $p$ by a combination of rotations, translations, uniform scaling and reflections. In this paper, we mainly focus on angle fixability in $\mathbb{R}^2$. Since the definition of angle fixability in $\mathbb{R}^d$ with $d\geq 3$ will be used in Lemma \ref{le universal fixability} and Theorem \ref{th rank} that state important conditions for removing the rank constraint in ASNL (Theorem \ref{th equivalence}), angle fixability is defined in an arbitrary dimensional space in Definition \ref{de fixability}.

\subsection{Equivalent Conditions for Angle Fixability in $\mathbb{R}^2$}

The following two lemmas give two necessary and sufficient conditions for angle fixability in $\mathbb{R}^2$. 

\begin{lemma}\label{le f=ig}
	In $\mathbb{R}^2$, $(\mathcal{G},p)$ is angle fixable if and only if it is globally and infinitesimally angle rigid.
\end{lemma}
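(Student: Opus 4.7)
The plan is to split the equivalence into two implications, each reduced to a containment argument together with a local-to-global argument based on the orbit structure of $\mathscr{S}_p$.

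For the forward direction, assume $(\mathcal{G},p)$ is angle fixable, i.e., $f_\mathcal{G}^{-1}(f_\mathcal{G}(p)) = \mathscr{S}_p$. Global angle rigidity follows from the sandwich $\mathscr{S}_p \subseteq f_\mathcal{K}^{-1}(f_\mathcal{K}(p)) \subseteq f_\mathcal{G}^{-1}(f_\mathcal{G}(p)) = \mathscr{S}_p$, where the left inclusion is immediate (similarity transformations preserve every angle) and the right one uses $\mathcal{G}\subseteq\mathcal{K}$. For infinitesimal angle rigidity I would first show that angle fixability forces $p$ to be non-degenerate: if all $p_i$ were collinear, sliding any single $p_i$ along the line by a finite amount would keep every inter-edge bearing on the line and so keep every angle in $\{0,\pi\}$, producing a point of $f_\mathcal{G}^{-1}(f_\mathcal{G}(p))$ outside $\mathscr{S}_p$. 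For non-degenerate $p$, the orbit $\mathscr{S}_p$ is a smooth $4$-dimensional manifold at $p$ (the similarity group acts with trivial stabilizer on non-degenerate configurations), and hence the level set $f_\mathcal{G}^{-1}(f_\mathcal{G}(p))$ is smooth at $p$ of dimension $4$. The Zariski tangent space, namely the null space of $\partial f_\mathcal{G}/\partial p$ at $p$, therefore coincides with the geometric tangent space, which is spanned by the infinitesimal translations, rotations, and uniform scaling. This is precisely the statement that every infinitesimal angle motion is trivial.

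For the backward direction, assume $(\mathcal{G},p)$ is globally and infinitesimally angle rigid. The same single-vertex-slide construction shows that infinitesimal rigidity also forces $p$ to be non-degenerate, since otherwise the slide would be a non-trivial null vector of $\partial f_\mathcal{G}/\partial p$. For non-degenerate $p$, the standard planar-similarity argument gives $f_\mathcal{K}^{-1}(f_\mathcal{K}(p)) = \mathscr{S}_p$: after using a translation, a rotation, and a scaling to align $q_1=p_1$ and $q_2=p_2$, each preserved angle $\theta_{1,2,i}$ and $\theta_{2,1,i}$ pins $q_i$ to the intersection of two fixed rays, with only the reflection across $p_1p_2$ left undetermined because $\theta_{ijk}\in[0,\pi]$ is signless. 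Global angle rigidity then upgrades this to $f_\mathcal{G}^{-1}(f_\mathcal{G}(p)) = f_\mathcal{K}^{-1}(f_\mathcal{K}(p)) = \mathscr{S}_p$, which is angle fixability.

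The hard part will be the forward infinitesimal-rigidity step: one must justify that the Zariski tangent space of $f_\mathcal{G}^{-1}(f_\mathcal{G}(p))$ at $p$ is not strictly larger than the geometric tangent space of $\mathscr{S}_p$, i.e., that $p$ is a smooth (regular) point of the level set. The enabling fact is that angle fixability upgrades the level set to the full smooth orbit $\mathscr{S}_p$, after which the non-degeneracy of $p$ (also forced by angle fixability) guarantees smoothness and rules out any Zariski inflation. Once this is in hand, both directions are essentially bookkeeping.
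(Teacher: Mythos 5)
Your treatment of global angle rigidity (the sandwich $\mathscr{S}_p\subseteq f_\mathcal{K}^{-1}(f_\mathcal{K}(p))\subseteq f_\mathcal{G}^{-1}(f_\mathcal{G}(p))=\mathscr{S}_p$) is exactly the paper's argument, and your overall strategy for infinitesimal rigidity --- a dimension count against the $4$-dimensional orbit $\mathscr{S}_p$ --- is also the paper's. The gap is in the step you yourself flag as the hard part. You infer that because $f_\mathcal{G}^{-1}(f_\mathcal{G}(p))=\mathscr{S}_p$ is a smooth $4$-manifold and $p$ is non-degenerate, the null space of $\partial f_\mathcal{G}/\partial p$ must coincide with $T_p\mathscr{S}_p$. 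That inference is false as a general principle: for $f(x,y)=x^2$ the fiber $f^{-1}(0)$ is a smooth $1$-manifold through the origin, yet the Jacobian there vanishes and its kernel is all of $\mathbb{R}^2$. Smoothness of the level set, even as the orbit of a freely acting group, does not prevent the kernel of the differential from strictly containing the tangent space; the reverse containment holds only if $p$ is a regular point of $f_\mathcal{G}$, which is precisely what needs to be shown. Your closing sentence re-asserts the conclusion (``non-degeneracy \ldots rules out any Zariski inflation'') rather than proving it, so the forward infinitesimal-rigidity implication is not established.

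The paper runs the same dimension count in the contrapositive direction: if $\rank(\partial f_\mathcal{G}/\partial p)=s<2n-4$, then the fiber through $p$ is locally a $(2n-s)$-dimensional manifold with $2n-s>4$, contradicting $f_\mathcal{G}^{-1}(f_\mathcal{G}(p))=\mathscr{S}_p$. That is the standard constant-rank argument from rigidity theory (it leans on the rank being locally constant near such configurations, supplied by the machinery of \cite{Jing18}), and it produces actual fiber points outside $\mathscr{S}_p$ rather than only tangent vectors, which is what makes the contradiction bite. To keep your direction of argument you would have to separately prove that $p$ is a regular point of $f_\mathcal{G}$. Two smaller remarks: your single-vertex-slide argument for non-degeneracy is fine; and your synthetic proof of the converse (which the paper simply cites from \cite{Jing18}) is incomplete as stated, because pinning each $q_i$ by the two angles $\theta_{12i}$ and $\theta_{21i}$ only determines $q_i$ up to its own individual reflection across the line through $p_1$ and $p_2$, so you must additionally use angles of the form $\theta_{1ij}$ available in the complete graph to force all vertices to take the same reflection.
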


\begin{proof}
	The sufficiency has been proven in \cite[Theorem 1]{Jing18}, next we prove the necessity. Suppose that $(\mathcal{G},p)$ is angle fixable but not infinitesimally angle rigid, then $\rank(\frac{\partial f_\mathcal{G}}{\partial p})=s< 2n-4$. As a result, there exists a neighborhood of $p$ in which $f_{\mathcal{G}}^{-1}(f_{\mathcal{G}}(p))$ is a $2n-s>4$ dimensional manifold, this conflicts with the fact that $f_{\mathcal{G}}^{-1}(f_{\mathcal{G}}(p))=\mathscr{S}_p$ is a 4-dimensional manifold. For global angle rigidity, since it always holds that $f_{\mathcal{K}}^{-1}(f_{\mathcal{K}}(p))\subset f_{\mathcal{G}}^{-1}(f_{\mathcal{G}}(p))$, it suffices to prove $f_{\mathcal{G}}^{-1}(f_{\mathcal{G}}(p))\subset f_{\mathcal{K}}^{-1}(f_{\mathcal{K}}(p))$. For any $q\in f_{\mathcal{G}}^{-1}(f_{\mathcal{G}}(p))$, we have $q\in\mathscr{S}_p$, then $g_{ij}^{\top}(q)g_{ik}(q)=g_{ij}^{\top}(p)g_{ik}(p)$ for all $i,j,k\in\mathcal{V}$. That is, $(\mathcal{G},p)$ is globally angle rigid. 
\end{proof}

\begin{lemma}\label{le f=gn}
	In $\mathbb{R}^2$, $(\mathcal{G},p)$ is angle fixable if and only if it is globally angle rigid and $p$ is non-degenerate.
\end{lemma}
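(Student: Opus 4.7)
The plan is to express angle fixability in terms of a property of the complete graph and then establish that property by an elementary reconstruction argument. Observe that the inclusions $\mathscr{S}_p\subseteq f_\mathcal{K}^{-1}(f_\mathcal{K}(p))\subseteq f_\mathcal{G}^{-1}(f_\mathcal{G}(p))$ hold unconditionally, since trivial motions preserve all angles and $\mathcal{G}$ has fewer angle constraints than $\mathcal{K}$. Combining this chain with the definitions of angle fixability and global angle rigidity, a short sandwich argument yields that $(\mathcal{G},p)$ is angle fixable if and only if it is globally angle rigid and $f_\mathcal{K}^{-1}(f_\mathcal{K}(p))=\mathscr{S}_p$. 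The lemma therefore reduces to showing that $f_\mathcal{K}^{-1}(f_\mathcal{K}(p))=\mathscr{S}_p$ if and only if $p$ is non-degenerate.

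For the forward direction I would argue by the contrapositive: if all $p_i$ lie on a common line $L$, I construct a $q\in f_\mathcal{K}^{-1}(f_\mathcal{K}(p))\setminus\mathscr{S}_p$ by sliding a single vertex a small distance along $L$ without crossing any other vertex. Every inter-edge cosine at every vertex remains $\pm 1$ just as for $p$, so $q$ satisfies all complete-graph angle constraints; but for $n\geq 3$ the ratios of successive distances along $L$ are perturbed, so $q$ cannot arise from $p$ by a uniform scaling composed with an isometry, i.e.\ $q\notin\mathscr{S}_p$. The cases $n\leq 2$ carry no angle constraints and are handled separately.

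The main obstacle is the backward direction: when $p$ is non-degenerate, every $q$ with $f_\mathcal{K}(q)=f_\mathcal{K}(p)$ must lie in $\mathscr{S}_p$. The approach is explicit reconstruction. Since $p$ is non-degenerate, choose $i,j,k$ with $p_i,p_j,p_k$ non-collinear; because the angle at $p_i$ between $p_i-p_j$ and $p_i-p_k$ is preserved and strictly lies in $(0,\pi)$, the images $q_i,q_j,q_k$ are also non-collinear and the triangles $q_iq_jq_k$ and $p_ip_jp_k$ are similar. Composing $q$ with a translation, rotation, uniform scaling, and, if necessary, a reflection across the line $p_ip_j$, one may assume $q_i=p_i$, $q_j=p_j$, $q_k=p_k$. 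For any remaining vertex $\ell$, the equations $g_{ij}^{\top}(q)g_{i\ell}(q)=g_{ij}^{\top}(p)g_{i\ell}(p)$ and $g_{ik}^{\top}(q)g_{i\ell}(q)=g_{ik}^{\top}(p)g_{i\ell}(p)$ uniquely pin down the unit vector $g_{i\ell}(q)$, since $g_{ij}(p)$ and $g_{ik}(p)$ are linearly independent; the analogous argument at vertex $j$ furnishes a second ray through $q_j$, and its intersection with the first forces $q_\ell=p_\ell$. Hence $q$ coincides with $p$ after the composed similarity, so the original $q$ lies in $\mathscr{S}_p$.

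An equivalent way to close the backward direction is to invoke Lemma \ref{le f=ig}: once $f_\mathcal{K}^{-1}(f_\mathcal{K}(p))=\mathscr{S}_p$ is in hand, global angle rigidity gives $f_\mathcal{G}^{-1}(f_\mathcal{G}(p))=\mathscr{S}_p$, which is a $4$-dimensional smooth manifold, so $\rank(\partial f_\mathcal{G}/\partial p)=2n-4$ at $p$; infinitesimal angle rigidity then follows, and Lemma \ref{le f=ig} concludes angle fixability.
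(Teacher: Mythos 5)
Your proposal is correct, and its sufficiency half is essentially the paper's argument with the machinery inlined: the paper likewise uses global angle rigidity to pass to the complete graph and then recovers each remaining vertex as the intersection of two non-parallel rays anchored at a non-degenerate triple, but it packages this as the iterative construction of a subgraph $\mathcal{G}'$ with a non-degenerate bilateration ordering and an appeal to Lemma \ref{le fixability in R2}; your explicit reconstruction over $\mathcal{K}$ is the same computation carried out directly, and your sandwich reduction $\mathscr{S}_p\subseteq f_\mathcal{K}^{-1}(f_\mathcal{K}(p))\subseteq f_\mathcal{G}^{-1}(f_\mathcal{G}(p))$ is a clean way to organize it. Where you genuinely diverge is necessity: the paper deduces non-degeneracy from Lemma \ref{le f=ig} together with the asserted (but unproved) fact that an infinitesimally angle rigid configuration cannot be degenerate, whereas you exhibit an explicit witness --- sliding one collinear vertex along the common line without crossing any other vertex keeps every inter-edge cosine at $\pm1$ but perturbs the distance ratios, so the new configuration satisfies all complete-graph angle constraints yet lies outside $\mathscr{S}_p$. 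That is more self-contained and buys you independence from the infinitesimal-rigidity machinery. Two small points to tighten: (i) in the reconstruction of $q_\ell$, if $p_\ell$ happens to lie on the line through $p_i$ and $p_j$, the two rays you intersect coincide; since $p_\ell$ can lie on at most one of the three lines spanned by the non-collinear triple $\{i,j,k\}$, you should state that the two base vertices are chosen so that $p_\ell$ is not collinear with them (the paper makes exactly this remark in its construction). (ii) Your closing alternative via Lemma \ref{le f=ig} is redundant --- once you have $f_\mathcal{G}^{-1}(f_\mathcal{G}(p))=\mathscr{S}_p$ you have angle fixability by definition, so no detour through infinitesimal rigidity is needed.
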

\begin{proof}
	Since the configuration of an infinitesimally angle rigid framework can never be degenerate, the necessity can be obtained by Lemma \ref{le f=ig}. Next we prove sufficiency. Global angle rigidity implies that $f_{\mathcal{G}}^{-1}(f_{\mathcal{G}}(p))= f_{\mathcal{K}}^{-1}(f_{\mathcal{K}}(p))$, hence we only have to show that there exists some subgraph $\mathcal{G}'$ of $\mathcal{K}$ such that $(\mathcal{G}',p)$ is infinitesimally angle rigid. Without loss of generality, let $1$, $2$, $3$ be three vertices not lying collinear. We start with the complete graph with vertices 1, 2, 3, which is angle fixable. Note that for any $4\leq i\leq n$, there always exist two vertices $j,k\in\{1,2,3\}$ such that $p_i-p_j$ and $p_i-p_k$ are not collinear. By adding vertex $i$ and edges $(i,j)$, $(i,k)$ for $i=4,...,n$ iteratively, we obtain a new graph $\mathcal{G}'$. Moreover, at each step during the generation, the conditions in Lemma \ref{le fixability in R2} (will be proposed later with its proof independent of this lemma) are satisfied. Thus $(\mathcal{G}',p)$ is angle fixable. 
\end{proof}

Combining Lemma \ref{le f=ig} and Lemma \ref{le f=gn}, the following lemma holds.

\begin{lemma}
	Consider a globally angle rigid framework $(\mathcal{G},p)$ in $\mathbb{R}^2$, the following statements are equivalent:\\
	(i) $p$ is non-degenerate;\\
	(ii) $(\mathcal{G},p)$ is infinitesimally angle rigid;\\
	(iii) $(\mathcal{G},p)$ is angle fixable.
\end{lemma}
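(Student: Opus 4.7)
The plan is to observe that under the standing hypothesis of global angle rigidity, each of the two preceding lemmas reduces to a single biconditional linking angle fixability to one of the other two properties, so the three-way equivalence follows by transitivity. I would not attempt an independent argument; the job is just to specialize the already proved lemmas and stitch them together.

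First I would invoke Lemma \ref{le f=gn}: it states that angle fixability in $\mathbb{R}^2$ is equivalent to global angle rigidity together with non-degeneracy of $p$. Since global angle rigidity is assumed, this immediately collapses to the equivalence (i) $\Leftrightarrow$ (iii). Then I would invoke Lemma \ref{le f=ig}, which states that angle fixability in $\mathbb{R}^2$ is equivalent to the conjunction of global and infinitesimal angle rigidity; again, the global part being in the hypothesis, this collapses to (ii) $\Leftrightarrow$ (iii). Composing the two biconditionals through the common statement (iii) gives (i) $\Leftrightarrow$ (ii) $\Leftrightarrow$ (iii), which is exactly what we want.

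There is essentially no obstacle here, since no new geometric or algebraic work is required beyond the two lemmas we are chaining. The only thing worth being careful about is making sure the direction of each biconditional is preserved after dropping the assumed ``globally angle rigid'' clause from the right-hand sides of the two lemmas: that is, both directions of each lemma remain valid because global angle rigidity is provided by the hypothesis on the left-hand side as well. With that observation in place, the statement is a direct corollary, and I would present it in essentially one or two lines referencing the two preceding lemmas.
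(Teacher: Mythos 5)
Your proposal is correct and matches the paper exactly: the paper derives this lemma with the single sentence ``Combining Lemma \ref{le f=ig} and Lemma \ref{le f=gn}, the following result holds,'' which is precisely the specialize-and-chain argument you describe. Your extra remark about both directions of each biconditional surviving once global angle rigidity is supplied by the hypothesis is a valid and worthwhile sanity check, but no new content is needed.
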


\subsection{Generic Angle Fixability}
In \cite{Jing18}, the authors showed that both infinitesimal angle rigidity and global angle rigidity are generic properties of the graph. That is, given a graph $\mathcal{G}$, either for all generic configurations\footnote{A configuration $p=(p_1^{\top},\cdots,p_n^{\top})^{\top}\in\mathbb{R}^{2n}$ is generic if its $2n$ coordinates are algebraically independent \cite{Jing18}.} $p\in\mathbb{R}^{2n}$, $(\mathcal{G},p)$ is infinitesimally (globally) angle rigid, or none of them is. Therefore, angle fixability in $\mathbb{R}^2$ is a generic property of the graph due to Lemma \ref{le f=ig}. We give the following definition and result.
\begin{definition}
	A graph $\mathcal{G}$ is generically angle fixable in $\mathbb{R}^2$ if $(\mathcal{G},p)$ is angle fixable for any generic configuration $p\in\mathbb{R}^{2n}$.
\end{definition}

\begin{lemma}\label{le generic}
	If $(\mathcal{G},p)$ is angle fixable for a generic configuration $p\in\mathbb{R}^{2n}$, then $\mathcal{G}$ is generically angle fixable in $\mathbb{R}^2$. 
\end{lemma}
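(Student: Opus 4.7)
The plan is to reduce angle fixability to the conjunction of two properties that are already known to be generic, and then invoke those genericity results componentwise. Specifically, by Lemma \ref{le f=ig}, in $\mathbb{R}^2$ angle fixability of $(\mathcal{G},p)$ is equivalent to the combination of global angle rigidity and infinitesimal angle rigidity of $(\mathcal{G},p)$. So the statement to prove collapses into: if both properties hold at some generic $p$, then they hold at every generic $p'\in\mathbb{R}^{2n}$.

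First I would invoke the result from \cite{Jing18} (as already recalled in the paragraph preceding the lemma) that infinitesimal angle rigidity is a generic property of the graph $\mathcal{G}$: either $(\mathcal{G},p)$ is infinitesimally angle rigid for every generic $p$, or for none. The same dichotomy applies to global angle rigidity. The hypothesis gives one generic $p$ at which $(\mathcal{G},p)$ is angle fixable, hence at which both rigidity properties hold, so by the dichotomy each property holds at every generic configuration.

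Then, for an arbitrary generic $p'\in\mathbb{R}^{2n}$, both global and infinitesimal angle rigidity hold at $(\mathcal{G},p')$. Applying Lemma \ref{le f=ig} in the opposite direction yields that $(\mathcal{G},p')$ is angle fixable, which by definition means $\mathcal{G}$ is generically angle fixable in $\mathbb{R}^2$.

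There is no real obstacle here; the lemma is essentially a packaging result. The only minor care needed is to note that Lemma \ref{le f=ig} is a pointwise equivalence (not merely generic), so the biconditional can be pushed through configuration by configuration; the genericity is entirely carried by the two rigidity notions whose genericity is imported from \cite{Jing18}. If one wanted to avoid direct appeal to \cite{Jing18} and give a self-contained argument, one could instead observe that both conditions are expressible as non-vanishing of certain polynomials in the coordinates of $p$ (the rigidity matrix rank condition via a $(2n-4)\times(2n-4)$ minor of $\partial f_{\mathcal{K}}/\partial p$ restricted appropriately, together with the algebraic equations defining $f_{\mathcal{G}}^{-1}(f_{\mathcal{G}}(p))= f_{\mathcal{K}}^{-1}(f_{\mathcal{K}}(p))$), and that an algebraic condition satisfied at one generic point is satisfied at all generic points. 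This offers a backup route should a referee want more detail, but the direct appeal to the genericity of the two rigidity notions is the cleanest path.
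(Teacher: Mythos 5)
Your proposal is correct and matches the paper's own (unwritten, but sketched in the preceding paragraph) argument: the paper likewise deduces genericity of angle fixability by combining Lemma \ref{le f=ig} with the genericity of global and infinitesimal angle rigidity imported from \cite{Jing18}. The polynomial non-vanishing backup you mention is extra detail the paper does not include, but the main route is the same.
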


We also note that all generic configurations in $\mathbb{R}^2$ form a dense space. Therefore, for a generically angle fixable graph $\mathcal{G}$, the set of all configurations $p\in\mathbb{R}^{2n}$ such that $(\mathcal{G},p)$ is not angle fixable is of measure zero. Lemma \ref{le generic}, together with Lemma \ref{le f=gn}, imply that for a framework with a generic configuration $p\in\mathbb{R}^{2n}$, angle fixability and global angle rigidity are equivalent. We summarize this result in the following lemma.

\begin{lemma}
	A graph $\mathcal{G}$ is generically angle fixable in $\mathbb{R}^2$ if and only if it is generically globally angle rigid in $\mathbb{R}^2$.
\end{lemma}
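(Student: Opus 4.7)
The plan is to derive this lemma as a fairly direct consequence of Lemma \ref{le f=gn}, which characterizes angle fixability in $\mathbb{R}^2$ as the conjunction of global angle rigidity and non-degeneracy of the configuration. Since generic angle fixability and generic global angle rigidity quantify over all generic configurations, I only need to check that the two ``extra'' conditions on the configuration (angle fixability versus global angle rigidity) coincide once the configuration is generic.

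The forward direction is immediate: if $\mathcal{G}$ is generically angle fixable, then for every generic $p \in \mathbb{R}^{2n}$ the framework $(\mathcal{G},p)$ is angle fixable, and Lemma \ref{le f=gn} tells us that angle fixability implies global angle rigidity, so $\mathcal{G}$ is generically globally angle rigid. For the reverse direction, I would fix an arbitrary generic configuration $p$ and argue that $(\mathcal{G},p)$ is angle fixable by verifying the two hypotheses of Lemma \ref{le f=gn}. Global angle rigidity holds by the assumption that $\mathcal{G}$ is generically globally angle rigid, so the only remaining step is to show that a generic configuration in $\mathbb{R}^2$ is necessarily non-degenerate.

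The non-degeneracy step is the one that deserves a small, careful argument (though I do not expect it to be a real obstacle). By definition, $p = (p_1^\top, \ldots, p_n^\top)^\top$ is generic when its $2n$ coordinates are algebraically independent over $\mathbb{Q}$. If the points $p_1, \ldots, p_n$ were collinear, then every triple $p_i, p_j, p_k$ would satisfy the algebraic relation obtained by setting the $2\times 2$ determinant $\det[p_j - p_i \mid p_k - p_i] = 0$, which is a nontrivial polynomial identity with integer coefficients among the coordinates. This contradicts algebraic independence (as long as $n \geq 3$; for $n \leq 2$ the lemma is essentially trivial since no interior angles exist). Thus any generic configuration is non-degenerate, Lemma \ref{le f=gn} applies, and $(\mathcal{G},p)$ is angle fixable, proving generic angle fixability of $\mathcal{G}$.

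In short, the proof is a two-sentence argument built on Lemma \ref{le f=gn} plus the observation that genericity rules out collinearity. The ``hardest'' step is merely making the algebraic-independence argument explicit, which I would state and then move on, since no deeper structural result about angle rigidity is needed.
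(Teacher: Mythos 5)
Your proposal is correct and matches the paper's own (implicit) argument: the paper derives this lemma directly from Lemma \ref{le f=gn} together with the genericity discussion, exactly as you do. Your explicit verification that a generic configuration cannot be degenerate (via the collinearity determinant contradicting algebraic independence) is a detail the paper leaves implicit, and it is sound.
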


\subsection{Recognizing Angle Fixable Frameworks}

In this subsection, a graphical approach to recognizing angle fixable frameworks will be presented.  Before showing that, we firstly present the following result for angle fixable frameworks.
\begin{lemma}\label{le fixability in R2}
	Given an angle fixable framework in $\mathbb{R}^2$, after adding a node and two non-collinear edges connecting this node to two existing nodes, the induced framework is still angle fixable in $\mathbb{R}^2$.
\end{lemma}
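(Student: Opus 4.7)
The plan is to invoke Lemma \ref{le f=ig} and verify that the enlarged framework $(\mathcal{G}',p')$, obtained by adjoining vertex $n+1$ at position $p_{n+1}$ together with edges $(n+1,j)$ and $(n+1,k)$, is both infinitesimally and globally angle rigid. The key structural input I would first establish is the auxiliary observation: \emph{in any angle fixable framework in $\mathbb{R}^2$, the bearings $\{g_{v,m}\}_{m\in\mathcal{N}_v}$ span $\mathbb{R}^2$ at every vertex $v$}. If instead all neighbors of some $v$ lay on a single line $L$ through $p_v$ (in particular if $v$ has only one neighbor), then translating only $p_v$ along $L$ (with all other $\delta p_i=0$) would preserve every angle---each projection $I-g_{v,m}g_{v,m}^{\top}$ annihilates the displacement direction---while failing to be trivial, since non-degeneracy of $p$ forces any trivial motion whose non-$v$ components vanish to be identically zero. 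This produces an extra kernel direction in $\partial f_{\mathcal{G}}/\partial p$, contradicting infinitesimal angle rigidity of $(\mathcal{G},p)$.

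For infinitesimal rigidity of $(\mathcal{G}',p')$, I would partition the rigidity matrix in block form
\begin{equation*}
\frac{\partial f_{\mathcal{G}'}}{\partial p'}=\begin{pmatrix}\partial f_{\mathcal{G}}/\partial p & 0\\ A & B\end{pmatrix},
\end{equation*}
where $A$ and $B$ are the partial derivatives of the new angle functions with respect to $p$ and $p_{n+1}$. Because every trivial motion of $(\mathcal{G},p)$ extends uniquely to a trivial motion of $(\mathcal{G}',p')$, the kernel of this matrix has dimension $4+\dim\ker B$, so it suffices to show $\rank(B)=2$. The auxiliary observation at $j$ supplies a neighbor $m\in\mathcal{N}_j$ with $g_{j,m}$ non-parallel to $g_{j,n+1}$, and the row of $B$ from the angle $(j,m,n+1)$ is a non-zero scalar multiple of $Jg_{n+1,j}$, where $J$ denotes the planar $90^{\circ}$ rotation. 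The symmetric construction at $k$ produces a non-zero row parallel to $Jg_{n+1,k}$. Non-collinearity of the two new edges makes $g_{n+1,j}$ and $g_{n+1,k}$ linearly independent, so these two rows are as well.

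For global angle rigidity, I would take any $q'=(q^{\top},q_{n+1}^{\top})^{\top}$ with $f_{\mathcal{G}'}(q')=f_{\mathcal{G}'}(p')$ and show $q'\in\mathscr{S}_{p'}$. Restriction to $\mathcal{G}$ places $q\in\mathscr{S}_p$ by angle fixability of $(\mathcal{G},p)$; after applying the inverse similarity I may assume $q_i=p_i$ for $i\le n$, and the task reduces to showing $q_{n+1}=p_{n+1}$. The preserved angles $(j,m,n+1)$ for $m\in\mathcal{N}_j$ yield $g_{j,m}^{\top}(g_{j,n+1}(q')-g_{j,n+1}(p'))=0$; the auxiliary observation makes the bearings $\{g_{j,m}\}_{m\in\mathcal{N}_j}$ span $\mathbb{R}^2$, forcing $g_{j,n+1}(q')=g_{j,n+1}(p')$ and placing $q_{n+1}$ on the ray from $p_j$ through $p_{n+1}$. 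The parallel argument at $k$ places $q_{n+1}$ on a second ray, and the two rays meet uniquely at $p_{n+1}$ because the new edges are non-collinear. Lemma \ref{le f=ig} then delivers angle fixability of $(\mathcal{G}',p')$.

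The main obstacle is the auxiliary observation itself. Without it, a configuration in which $j$'s or $k$'s $\mathcal{G}$-neighborhood fails to span $\mathbb{R}^2$ would reduce the block $B$ to rank one and leave the direction $g_{j,n+1}(q')$ or $g_{k,n+1}(q')$ undetermined, breaking both parts of the argument. Ruling out this collinearity pathology is the essential structural content that angle fixability of $(\mathcal{G},p)$ supplies.
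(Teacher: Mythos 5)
Your proof is correct, and its substantive core coincides with the paper's: both arguments pin down the new vertex by observing that infinitesimal angle rigidity of the base framework (via Lemma \ref{le f=ig}) forces each attachment vertex to have two non-collinear incident bearings, which uniquely determines the bearing to the new vertex from the preserved angles, and then the two non-collinear new edges intersect in a single point. The packaging differs: the paper verifies $f_{\mathcal{G}'}^{-1}(f_{\mathcal{G}'}(p'))\subset\mathscr{S}_{p'}$ directly from Definition \ref{de fixability}, whereas you route through Lemma \ref{le f=ig} and therefore also carry out a rigidity-matrix block computation to establish infinitesimal angle rigidity of $(\mathcal{G}',p')$. That computation is correct (the kernel count $4+\dim\ker B$ and the identification of the rows of $B$ as multiples of $Jg_{n+1,j}$ and $Jg_{n+1,k}$ both check out), but it is logically superfluous: your ``global rigidity'' step already shows every $q'$ with $f_{\mathcal{G}'}(q')=f_{\mathcal{G}'}(p')$ lies in $\mathscr{S}_{p'}$, which is precisely angle fixability, so you never need to invoke Lemma \ref{le f=ig} in the forward direction. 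One genuine value your write-up adds over the paper's is the explicit justification of the auxiliary spanning observation (that bearings at every vertex of an angle fixable framework span $\mathbb{R}^2$), which the paper asserts without proof as a consequence of infinitesimal angle rigidity.
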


\begin{proof}
	Let $(\mathcal{G},p)$ be the angle fixable framework with $n$ vertices, $n+1$ be the added node, $(n+1,u)$ and $(n+1,v)$ be the two added edges, $(\mathcal{G}',p')$ be the induced framework. We only need to verify $f_{\mathcal{G}'}^{-1}(f_{\mathcal{G}'}(p'))=\mathscr{S}_{p'}$ in order to prove that $(\mathcal{G}',p')$ is still angle fixable. Since it always holds that $\mathscr{S}_{p'}\subset f_{\mathcal{G}'}^{-1}(f_{\mathcal{G}'}(p'))$, it suffices to show $f_{\mathcal{G}'}^{-1}(f_{\mathcal{G}'}(p'))\subset\mathscr{S}_{p'}$. 
	
	For each $q'\in f_{\mathcal{G}'}^{-1}(f_{\mathcal{G}'}(p'))$, it must hold that $q'=(q^{\top},{q'_{n+1}}^{\top})^{\top}\in\mathbb{R}^{2n+2}$, where $q\in\mathscr{S}_{p}$, $q'_{n+1}$ satisfies $g_{i,n+1}^{\top}(q')g_{ij}(q')= g_{i,n+1}^{\top}(p')g_{ij}(p')$, $i\in\{u, v\}$, $j\in\mathcal{N}_i$.
	Next we show that $g_{u,n+1}$ can be uniquely determined by $q$ and $f_{\mathcal{G}'}(p')$. Lemma \ref{le f=ig} shows that $(\mathcal{G},p)$ is infinitesimally angle rigid. Then vertex $u$ must have at least two neighbors $j_1$, $j_2$ such that $p_u-p_{j_1}$ and $p_u-p_{j_2}$ are not collinear. Denote $A=(g_{uj_1},g_{uj_2})\in\mathbb{R}^{2\times 2}$, then $\rank(A)=2$. Note that if we regard $g_{u,n+1}=x=(x_1,x_2)^{\top}\in\mathbb{R}^2$ as unknown variables, we then have $A^{\top}x=b$, where $b=(g^{\top}_{u,n+1}(p')g_{uj_1}(p'), g^{\top}_{u,n+1}(p')g_{uj_2}(p'))$. Hence $g_{u,n+1}(p')$ can be uniquely determined by $q$ and $f_{\mathcal{G}'}(p')$. 
	
	Similarly, $g_{v,n+1}$ can be uniquely determined by $q$ and $f_{\mathcal{G}'}(p')$. Since $g_{u,n+1}$ and $g_{v,n+1}$ are not collinear, they have only one intersection point. As a result, $q'_{n+1}$ can be uniquely determined. Note that there must exist $\tilde{q}=(q^{\top},q_{n+1}^{\top})^{\top}\in\mathscr{S}_{p'}$ such that $\tilde{q}\in f_{\mathcal{G}'}^{-1}(f_{\mathcal{G}'}(p'))$, we then have $q'=\tilde{q}\in\mathscr{S}_{p}$.
\end{proof}

In two-dimensional (2D) space, it is well known that any minimally rigid framework is embedded by a Laman graph \cite{Laman70}, which can be obtained by Henneberg constructions \cite{Trinh18b,Trinh19,Henneberg11}. At each step of Henneberg construction, either one vertex and two new edges are added (named vertex addition), or one vertex and three new edges are added, while an existing edge is removed (named edge splitting). By Lemma \ref{le fixability in R2}, the specified Henneberg vertex additions preserve angle fixability in 2D space. In \cite{Fang09}, a graph containing a subgraph induced by Henneberg vertex addition (\cite{Trinh18b,Trinh19,Henneberg11}) is said to have a {\it bilateration ordering}.

For frameworks generated by graphs with a bilateration ordering, we define the non-degenerate bilateration ordering as follows.
\begin{definition}\label{de af framework}(Non-degenerate Bilateration Ordering)
A framework $(\mathcal{G}_b(n),p(n))$ is said to have a {\it non-degenerate bilateration ordering} if it can be generated by the following procedure: Starting with the 3-vertex framework $(\mathcal{G}_b(3),p(3))$ where $p(3)$ is non-degenerate, $(\mathcal{G}_b(i+1),p(i+1))$ is obtained from $(\mathcal{G}_b(i),p(i))$ by adding one vertex $l$ and $s\geq2$ edges connecting $l$ to existing vertices $l_1,..., l_s$ such that $p_l-p_{l_j}$, $j\in\{1,..,s\}$ are not all collinear.
\end{definition}

\begin{figure}
	\centering
	\includegraphics[width=9cm]{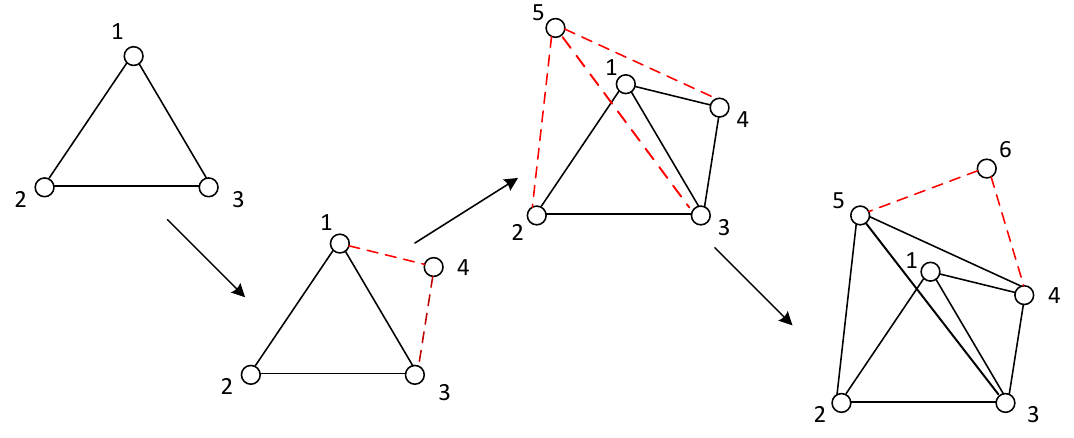}
	\caption{An example of non-degenerate bilateration ordering.}
	\label{fig bilateration ordering}
\end{figure}

Fig. \ref{fig bilateration ordering} shows an example of the non-degenerate bilateration ordering. From Lemma \ref{le fixability in R2}, we have the following result.

\begin{theorem}\label{th angle fixable}
In $\mathbb{R}^2$, if a framework has a non-degenerate bilateration ordering, then it is angle fixable.
\end{theorem}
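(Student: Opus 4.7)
The plan is to proceed by induction on the number of vertices, using Lemma \ref{le fixability in R2} as the inductive engine together with a simple monotonicity observation about how the preimage $f_{\mathcal{G}}^{-1}(f_{\mathcal{G}}(p))$ behaves when edges are added.

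For the base case $n=3$, I would verify directly from Definition \ref{de fixability} that $(\mathcal{G}_b(3),p(3))$ is angle fixable. Since $p(3)$ is non-degenerate, the three vertices form a genuine triangle; knowing the cosines of its interior angles determines the triangle up to similarity and reflection, so $f_{\mathcal{G}_b(3)}^{-1}(f_{\mathcal{G}_b(3)}(p(3))) = \mathscr{S}_{p(3)}$. If needed, this can be argued by placing two of the three vertices at fixed positions (absorbing translation, rotation, and uniform scaling) and showing the angle constraints at those two anchors force the third vertex into exactly two reflection-related positions.

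For the inductive step, assume $(\mathcal{G}_b(i),p(i))$ is angle fixable. Let $(\mathcal{G}_b(i+1),p(i+1))$ be obtained by adding a new vertex $l$ and edges to existing vertices $l_1,\ldots,l_s$ with $s \ge 2$ such that the vectors $p_l-p_{l_1},\ldots,p_l-p_{l_s}$ are not all collinear. The non-degeneracy condition guarantees the existence of indices $j_1,j_2$ with $p_l-p_{l_{j_1}}$ and $p_l-p_{l_{j_2}}$ non-collinear. Let $\mathcal{G}'$ be the intermediate graph obtained by adding only the two edges $(l,l_{j_1})$ and $(l,l_{j_2})$. Lemma \ref{le fixability in R2} then applies to $(\mathcal{G}',p(i+1))$, giving $f_{\mathcal{G}'}^{-1}(f_{\mathcal{G}'}(p(i+1))) = \mathscr{S}_{p(i+1)}$.

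To finish, I would invoke an edge-monotonicity argument: for any supergraph $\mathcal{G}_b(i+1) \supseteq \mathcal{G}'$ on the same vertex set, the chain
\begin{equation*}
\mathscr{S}_{p(i+1)} \subseteq f_{\mathcal{G}_b(i+1)}^{-1}(f_{\mathcal{G}_b(i+1)}(p(i+1))) \subseteq f_{\mathcal{G}'}^{-1}(f_{\mathcal{G}'}(p(i+1))) = \mathscr{S}_{p(i+1)}
\end{equation*}
forces equality throughout, so $(\mathcal{G}_b(i+1),p(i+1))$ is angle fixable. The first inclusion holds because every element of $\mathscr{S}_{p(i+1)}$ preserves all inter-edge angles by construction of the set, and the second inclusion holds because $\mathcal{G}'$ has fewer angle constraints than $\mathcal{G}_b(i+1)$. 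I do not anticipate a serious obstacle: the inductive step is essentially a wrapper around Lemma \ref{le fixability in R2}, with the only subtle point being the clean justification of the base case and of the edge-monotonicity inclusion; neither should require more than a few lines.
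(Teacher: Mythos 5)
Your proof is correct and follows essentially the same route as the paper, which derives Theorem \ref{th angle fixable} directly from Lemma \ref{le fixability in R2} by the inductive construction in Definition \ref{de af framework}. The only detail you add beyond the paper's one-line argument is the explicit edge-monotonicity step needed when $s>2$ edges are attached to the new vertex, and that step is valid since $\mathcal{T}_{\mathcal{G}'}\subseteq\mathcal{T}_{\mathcal{G}_b(i+1)}$ makes $f_{\mathcal{G}_b(i+1)}^{-1}(f_{\mathcal{G}_b(i+1)}(p))\subseteq f_{\mathcal{G}'}^{-1}(f_{\mathcal{G}'}(p))$ while $\mathscr{S}_p$ is always contained in the former.
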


A strongly non-degenerate triangulated framework $(\mathcal{G}_t,p)$ is a framework with a non-degenerate bilateration ordering where at each step when a vertex $l$ is added to $(\mathcal{G}_t(i),p(i))$, two non-collinear edges connecting $l$ to $j$ and $k$ such that $(j,k)\in\mathcal{E}_t$ are added accordingly. In \cite{Jing18}, $(\mathcal{G}_t,p)$ is shown to be angle fixable in $\mathbb{R}^2$. One can realize that a strongly non-degenerate triangulated framework always has a non-degenerate bilateration ordering, but not vice versa. Fig. \ref{fig angle rigidity} (b) shows a framework with a non-degenerate bilateration ordering, while it is not a triangulated framework, because vertices 3 and 5 are not adjacent. 

By generic property of angle fixability, the following result holds.

\begin{corollary}
	A graph with a bilateration ordering is generically angle fixable in $\mathbb{R}^2$.
\end{corollary}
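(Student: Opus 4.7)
The plan is to derive the corollary from Theorem \ref{th angle fixable} together with Lemma \ref{le generic}. Given a graph $\mathcal{G}$ with a bilateration ordering, I will argue that for any generic configuration $p\in\mathbb{R}^{2n}$ the resulting framework $(\mathcal{G},p)$ inherits a \emph{non-degenerate} bilateration ordering, so that Theorem \ref{th angle fixable} applies and yields angle fixability of $(\mathcal{G},p)$; Lemma \ref{le generic} then upgrades this pointwise fact to the generic property.

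The first step is to fix the underlying combinatorial bilateration ordering of $\mathcal{G}$, namely the sequence $\mathcal{G}_b(3),\mathcal{G}_b(4),\ldots,\mathcal{G}_b(n)=\mathcal{G}$ in which vertex $i+1$ is joined by $s_i\geq 2$ edges to an already-placed subset $\{l_1,\ldots,l_{s_i}\}$. This ordering is a property of $\mathcal{G}$ only and does not depend on $p$. I then specialise to any generic configuration $p$ and observe that Definition \ref{de af framework} requires two things: (i) the initial three vertices are non-collinear, and (ii) at every subsequent step, the vectors $p_l-p_{l_1},\ldots,p_l-p_{l_{s_i}}$ are not all collinear.

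The core step is to show that both conditions (i) and (ii) hold whenever $p$ is generic. Each of them can be written as the non-vanishing of a polynomial in the coordinates of $p$: collinearity of three points $p_a,p_b,p_c\in\mathbb{R}^2$ is equivalent to the vanishing of the $2\times 2$ determinant
\begin{equation*}
\det\bigl(p_b-p_a,\ p_c-p_a\bigr)=0,
\end{equation*}
and simultaneous collinearity of all $p_l-p_{l_j}$ forces every such determinant built from pairs $(l_j,l_k)$ to vanish, again giving a finite list of non-trivial polynomial conditions on the coordinates of $p$. Since a generic configuration has algebraically independent coordinates, none of these finitely many polynomial equations can be satisfied, so the bilateration ordering of $(\mathcal{G},p)$ is automatically non-degenerate.

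With conditions (i) and (ii) established for this one generic $p$, Theorem \ref{th angle fixable} yields that $(\mathcal{G},p)$ is angle fixable in $\mathbb{R}^2$. Invoking Lemma \ref{le generic} then concludes that $\mathcal{G}$ is generically angle fixable. The only delicate point is the verification that the collinearity obstruction is genuinely a non-trivial polynomial in the coordinates; this is clear because one can always exhibit a specific non-collinear choice of $p_l,p_{l_1},\ldots,p_{l_{s_i}}$, so the determinant polynomial is not identically zero, and genericity therefore excludes its vanishing.
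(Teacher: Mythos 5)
Your proposal is correct and takes essentially the same route the paper intends: the corollary is derived directly from Theorem \ref{th angle fixable} together with the generic property of angle fixability (Lemma \ref{le generic}), and your argument merely makes explicit the routine verification that the collinearity obstructions are non-trivial integer polynomials, hence cannot vanish at a configuration with algebraically independent coordinates. Nothing further is needed.
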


An interesting fact is that even if the framework generated by Definition \ref{de af framework} in $\mathbb{R}^2$ is elevated into a higher dimensional space, it is still angle fixable. See the following lemma. 

\begin{lemma}\label{le universal fixability}
Given a framework $(\mathcal{G}_b(n),p)$ with a non-degenerate bilateration ordering in $\mathbb{R}^2$, for any integer $d\geq3$, $(\mathcal{G}_b(n),\bar{p})$ is angle fixable in $\mathbb{R}^d$, where $\bar{p}=(\bar{p}_1^{\top},...,\bar{p}_n^{\top})^{\top}$, $\bar{p}_i=(p_i^{\top},\mathbf{0}_{1\times(d-2)})^{\top}\in\mathbb{R}^{d}$.
\end{lemma}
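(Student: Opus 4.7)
The plan is to proceed by induction along the bilateration ordering, mirroring the proof of Lemma~\ref{le fixability in R2} but carried out in $\mathbb{R}^d$. For the base case, the triangle $(\mathcal{G}_b(3),\bar{p}(3))$ consists of three non-collinear points whose three pairwise angles determine its shape up to an $\mathbb{R}^d$ similarity, so it is angle fixable in $\mathbb{R}^d$. For the inductive step, assume $(\mathcal{G}_b(i),\bar{p}(i))$ is angle fixable in $\mathbb{R}^d$, let $l$ be the new vertex joined to $l_1,\ldots,l_s$ with $\bar{p}_l-\bar{p}_{l_j}$ not all collinear, and pick any $q'=(q^{\top},q_l^{\top})^{\top}$ in $f_{\mathcal{G}_b(i+1)}^{-1}(f_{\mathcal{G}_b(i+1)}(\bar{p}(i+1)))$. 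Since the angle constraints in $\mathcal{T}_{\mathcal{G}_b(i)}$ depend only on $q$, the induction hypothesis yields $q\in\mathscr{S}_{\bar{p}(i)}$, so there is a similarity $T$ acting vertex-wise on $\mathbb{R}^d$ with $T(\bar{p}(i))=q$. Replacing $q'$ by its image under $T^{-1}$ preserves all angle constraints and reduces matters to the case $q=\bar{p}(i)$, in which it suffices to show $q_l=\bar{p}_l$.

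The key geometric fact is that $\bar{p}(i)$ lies in the two-dimensional subspace $V=\mathbb{R}^2\times\{0\}^{d-2}\subset\mathbb{R}^d$, and that in $\mathcal{G}_b(i)$ each vertex inherits (from the base triangle or from its own bilateration step) at least two neighbors whose bearings are non-collinear and therefore span $V$. Setting $u_j=g_{l_j,l}(q')$ and $\bar{u}_j=g_{l_j,l}(\bar{p}(i+1))$, the matched angle constraints give $(u_j-\bar{u}_j)^{\top}g_{l_j,k}(\bar{p}(i+1))=0$ for every $k\in\mathcal{N}_{l_j}\setminus\{l\}$, which forces $u_j-\bar{u}_j\in V^{\perp}$. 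Since $\bar{u}_j\in V$ and both $u_j,\bar{u}_j$ are unit vectors, Pythagoras yields $1=\|\bar{u}_j\|^2+\|u_j-\bar{u}_j\|^2=1+\|u_j-\bar{u}_j\|^2$, so $u_j=\bar{u}_j$ for each $j$. Hence $q_l$ lies on the ray from $\bar{p}_{l_j}$ through $\bar{p}_l$ for every $j$, and because at least two of the directions $\bar{p}_l-\bar{p}_{l_j}$ are non-collinear these rays meet only at $\bar{p}_l$, giving $q_l=\bar{p}_l$ and closing the induction.

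The main obstacle is the collapse of the $d-2$ extra degrees of freedom orthogonal to $V$, a feature absent from the planar proof of Lemma~\ref{le fixability in R2}. The Pythagorean identity above dispatches this cleanly, but it relies on having enough $V$-spanning bearings at each $l_j$; for this reason the argument must appeal to the structural output of the bilateration construction---that every vertex retains at least two non-collinear in-plane neighbors throughout the process---rather than merely to the abstract angle fixability of $(\mathcal{G}_b(i),\bar{p}(i))$ supplied by the induction hypothesis.
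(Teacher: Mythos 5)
Your proof is correct and follows essentially the same route as the paper's: induction along the bilateration ordering, with the out-of-plane components of each bearing $g_{l_j,l}$ killed by combining two non-collinear in-plane angle constraints at $l_j$ with the unit-norm condition (your Pythagorean step is the paper's observation that $b_1^2+b_2^2=1$ and $\|b\|=1$ force $b_s=0$ for $s\geq3$), followed by intersecting two non-collinear rays to pin down $q_l$. The only cosmetic difference is that you normalize by a similarity $T^{-1}$ at the start of each inductive step, whereas the paper carries the parameters $(c,\mathscr{R},\xi)$ explicitly through the argument.
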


\begin{proof} 
We prove the result by showing that in $\mathbb{R}^d$, every $q\in f_{\mathcal{G}_b(n)}^{-1}f_{\mathcal{G}_b(n)}(\bar{p})$ satisfies $q\in\mathscr{S}_{\bar{p}}$. Note that $1$, $2$ and $3$ always form a non-degenerate triangle, and a triangle is always angle fixable in any dimensional space (The shape of a non-degenerate triangle constrained by angles is invariant to the dimension of the space). Hence, there must hold that $q_i=c\mathscr{R}\bar{p}_i+\xi$ for some appropriate $c$, $\mathscr{R}$ and $\xi$, $i=1,2,3$. 
	
Suppose originally we have $(\mathcal{G}_b(i),q(i))$, where $q(i)\in\mathscr{S}_{p(i)}$ for $c$, $\mathscr{R}$ and $\xi$. Next, we prove that in the following Henneberg vertex addition introduced in Definition \ref{de af framework}, the position of the new vertex to be added can be uniquely determined by angle constraints and positions of existing vertices. Let $l$ be the added vertex, $(l,j)$ and $(l,k)$ be the two non-collinear added edges, $j_1$ and $j_2$ are two neighbors of $j$ in $\mathcal{G}_b(i)$ and $q_j-q_{j_1}$ is not collinear with $q_j-q_{j_2}$. From the angle constraints involving $j$, we have $g_{jl}^{\top}(q)g_{jj_1}(q)=c_{jlj_1}$ and $g_{jl}^{\top}(q)g_{jj_2}(q)=c_{jlj_2}$. Next we show $g_{jl}$ can be uniquely determined. Note that $g_{jl}^{\top}(q)g_{jj_1}(q)=g_{jl}^{\top}(q)\mathscr{R}g_{jj_1}(p)$, $g_{jl}^{\top}(q)g_{jj_2}(q)=g_{jl}^{\top}(q)\mathscr{R}g_{jj_2}(p)$, $\bar{p}_s=(p_s^{\top},\mathbf{0}_{1\times (d-2)})^{\top}$ for $s=j, j_1, j_2$. Denote $\mathscr{R}^{\top}g_{jl}(q)$ by $b=(b_1,..,b_d)^{\top}\in\mathbb{R}^d$, $g_{jj_1}(p)$ by $\zeta=(\zeta_1,\zeta_2,\mathbf{0}_{1\times (d-2)})^{\top}$, and $g_{jj_2}(p)$ by $\eta=(\eta_1,\eta_2,\mathbf{0}_{1\times(d-2)})^{\top}$. Then we have $\zeta^{\top}b=c_{jlj_1}$, $\eta^{\top}b=c_{jlj_2}$, which is equivalent to $\begin{pmatrix}
	\zeta_1 & \zeta_2\\
	\eta_1  & \eta_2
	\end{pmatrix}\begin{pmatrix}
	b_1 \\ b_2
	\end{pmatrix}$ =$\begin{pmatrix}
	c_{jlj_1} \\ c_{jlj_2}
	\end{pmatrix}$. Since $g_{jj_1}(p)$ and $g_{jj_2}(p)$ are not collinear, $b_1$ and $b_2$ can be uniquely determined. Note that $(b_1,b_2,\mathbf{0}_{1\times(d-2)})^{\top}=g_{lj}(p)$, implying that $b_1^2+b_2^2=1$. Since $||b||=1$, we have $b_s=0$, $s=3,...,d$. Then $g_{jl}(q)=\mathscr{R}b$ is uniquely determined. Similarly, $g_{lk}(q)$ is also unique. Recall that $g_{jl}$ and $g_{lk}$ are not collinear, they have a unique intersection point, i.e., $q_l$. This completes the proof. 
\end{proof}

The above lemma implies that the angle fixability of a framework with a non-degenerate bilateration ordering in $\mathbb{R}^2$ is invariant to space dimensions. However, it does not mean that any framework generated by Definition \ref{de af framework} in $\mathbb{R}^d$ is angle fixable.

\section{Angle-Based Sensor Network Localization}\label{sec ASNL}

In this section, the problem settings and the mathematical formulation for ASNL  will be presented. The condition in terms of the sensing graph for an ASNL problem to have a unique solution will be proposed as well.
\subsection{Problem Formulation}

To introduce the problem formulation of ASNL, we will explain how the sensor network is modelled; what kind of information each sensor senses; and how a general ASNL can be mathematically formulated, successively.

\subsubsection{Sensor network modelling}
Given a network of sensors indexed by $\mathcal{V}=\{1,...,n\}=\mathcal{A}\cup\mathcal{S}$, where $\mathcal{A}=\{1,...,n_a\}$, $\mathcal{S}=\{n_a+1,...,n_a+n_s\}$. Sensors in $\mathcal{A}$ are called {\it anchors}, whose locations are available. Sensors in $\mathcal{S}$ are called {\it unknown sensors}, whose locations are unknown and to be determined. An undirected graph $\mathcal{G}=(\mathcal{V},\mathcal{E})$ is used as the sensing graph interpreting the interaction relationships between sensors. Each sensor has the capability of sensing bearing measurements in its local coordinate frame from other neighboring sensors $j\in\mathcal{N}_i$. An ASNL problem in $\mathbb{R}^d$ is to determine $x_i$, $i\in\mathcal{S}$ when $\{x_i\in\mathbb{R}^d: i\in\mathcal{A}\}$ and all the angles between edges in $\mathcal{G}$ are available. In this paper, we will focus on the case with $d=2$.

Let $x=(x_1^{\top},...,x_n^{\top})^{\top}$. Note that for any two anchors $i, j$, even if $(i,j)\notin\mathcal{E}$, we can still obtain $(x_i-x_j)/||x_i-x_j||$ since we know the accurate values of $x_i$ and $x_j$. Therefore, it is reasonable to utilize all the angles in framework $(\hat{\mathcal{G}},x)$, with $\hat{\mathcal{G}}=(\mathcal{V},\hat{\mathcal{E}})$, $\hat{\mathcal{E}}=\mathcal{E}\cup\{(i,j)\in\mathcal{V}^2: i,j\in\mathcal{A}\}$ in solving SNL. In this paper, we call $(\hat{\mathcal{G}},x)$ the {\it grounded framework}, and use $\mathbf{N}=(\hat{\mathcal{G}},x,\mathcal{A})$ to denote a sensor network. 

A simple example of ASNL is shown in Fig. \ref{fig example for asnl}, where the position of sensor $4$ can be uniquely determined by $\cos\angle A$, $\cos\angle B$ and $\cos\angle C$. More specifically, $\cos\angle B$ and $\cos\angle C$ determine the shape of the triangle formed by 1, 3 and 4, the distance between anchors 1 and 3 determines the size of this triangle, $\cos\angle A$ determines the direction of the relative position between 1 and 4. Similarly, the position of sensor $5$ can be uniquely determined by angles $\cos\angle D$, $\cos\angle E$ and $\cos\angle F$. Note that both the locations of sensor 4 and sensor 5 cannot be determined by lengths of edges in Fig. \ref{fig example for asnl}, implying that the RSNL approach is not applicable to this example.

\begin{figure}
	\centering
	\includegraphics[width=6cm]{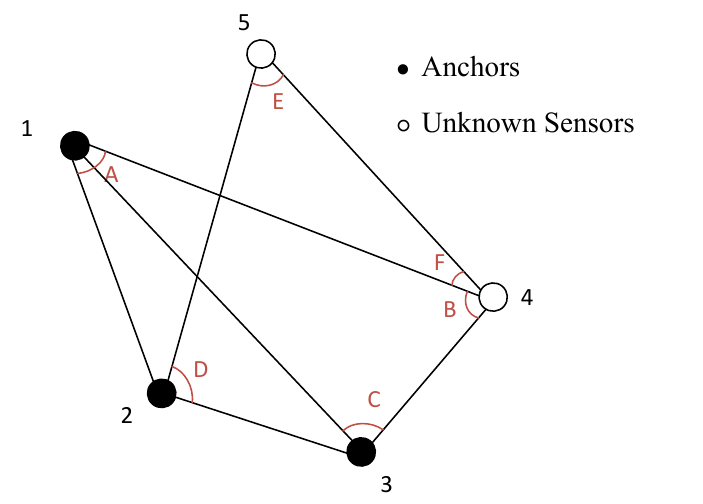}
	\caption{A sensor network where the locations of unknown sensors can be determined by measured angles and given locations of anchors.}
	\label{fig example for asnl}
\end{figure}

\subsubsection{Sensing capability}

In this paper, we consider that each sensor only senses relative bearing measurements from its immediate neighbors, which can be captured by vision-based sensors \cite{Buckley17,Zhao18}. Moreover, we consider a GPS-denied environment, in which each sensor has an independent coordinate system. As a result, each pair of sensors may have different understandings about their relative bearing. Such a setting is actually equivalent to assuming that each sensor senses angles subtended at itself.

\subsubsection{A QCQP formulation}

Given a sensor network $(\hat{\mathcal{G}},x,\mathcal{A})$ in $\mathbb{R}^2$, let $p=(p_1^{\top},...,p_n^{\top})^{\top}\in\mathbb{R}^{2n}$ be the real locations of sensors. The ASNL problem can be modeled as the following quadratically constrained quadratic program (QCQP):
\begin{equation}\label{qcqp}
\begin{split}
\text{find}~~ x,d_{ij}&,(i,j)\in\hat{\mathcal{E}}\\
\text{s.t.} ~~(x_i-x_j)^{\top}(x_i-x_k)&=a_{ijk}d_{ij}d_{ik},  (i,j,k)\in\mathcal{T}_{\hat{\mathcal{G}}}\\
||x_i-x_j||^2&=d_{ij}^2, ~~~~~~~~~~(i,j)\in\hat{\mathcal{E}}\\
x_i&=p_i, ~~~~~~~~~~~~~~~i\in\mathcal{A}
\end{split}
\end{equation}
where $a_{ijk}=\frac{(p_i-p_j)^{\top}}{||p_i-p_j||}\frac{(p_i-p_k)}{||p_i-p_k||}$ is the angle information obtained from bearing measurements, $\mathcal{T}_{\hat{\mathcal{G}}}=\{(i,j,k)\in\mathcal{V}^3:$ $(i,j),(i,k)\in\hat{\mathcal{E}}, j<k\}$ is the angle index set determining all the angles subtended in the framework, $d_{ij}$ is the distance between sensors $i$ and $j$ for $(i,j)\in\hat{\mathcal{E}}$. 
The known quantities in (\ref{qcqp}) include: $a_{ijk}\in\mathbb{R}$ for $(i,j,k)\in\mathcal{T}_{\hat{\mathcal{G}}}$, $p_i\in\mathbb{R}^2$ for $i\in\mathcal{A}$; the unknown variables in (\ref{qcqp}) are: $x_i\in\mathbb{R}^2$ for $i\in\mathcal{S}$, $d_{ij}\in\mathbb{R}$ for $(i,j)\in\hat{\mathcal{E}}$, $i$ or $j\in\mathcal{S}$. Note that QCQP (\ref{qcqp}) can actually describe ASNL in arbitrary dimensional space.

\begin{remark}
In (\ref{qcqp}), all the angles in a sensor network are taken into account for localization. From the example in Fig. \ref{fig example for asnl}, only partial angles are required to determine locations of unknown sensors. That is, (\ref{qcqp}) contains redundant angle information for localization. Therefore, $\mathcal{T}_{\hat{\mathcal{G}}}$ in (\ref{qcqp}) can be reduced to its subset $\mathcal{T}_{\hat{\mathcal{G}}}^*$. When $(\hat{\mathcal{G}},x)$ is strongly non-degenerate triangulated, \cite[Theorem 7]{Jing18} gives a minimal set of angle constraints for determining angle fixability of $(\hat{\mathcal{G}},x)$, which is also sufficient for network localization. 
\end{remark}


We make the following assumption for problem (\ref{qcqp}).

\begin{assumption}\label{as feasible}
Problem (\ref{qcqp}) is feasible; there are no sensors overlapping each other; all sensors are static.
\end{assumption}

For an arbitrary sensor network, once the measured angles $a_{ijk}$, $(i,j,k)\in\mathcal{T}_{\hat{\mathcal{G}}}$ are exact, Problem (\ref{qcqp}) is feasible. All the results in this paper will be established on the premise that Assumption \ref{as feasible} is valid. In practice, it may be difficult for measurements to be exact. However, even when only a range for each angle is measured, the feasibility of (\ref{qcqp}) still holds by replacing the first class of equality constraints with inequality constraints. More details are explained in Remark \ref{re advan SDP}.
\subsection{Angle Localizability}

After formulating the ASNL problem as the QCQP (\ref{qcqp}), we strive to answer the question in this subsection: what kind of sensor networks will make the ASNL problem have a unique solution?

\begin{definition}\label{de angle localizability}
	A sensor network is {\it angle localizable} if there is a unique feasible solution to (\ref{qcqp}).
\end{definition}

Fig. \ref{fig localizability} shows three demonstrations for Definition \ref{de angle localizability}. The following Lemma gives a necessary condition for sensor networks to be angle localizable.

\begin{figure}
	\centering
	\includegraphics[width=9cm]{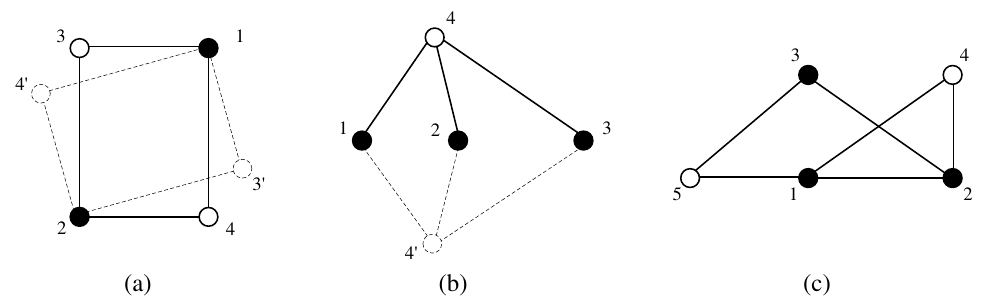}
	\caption{(a) A sensor network that is not angle localizable, sensors 3 and 4 may be incorrectly localized as $3'$ and $4'$. (b) A sensor network that is not angle localizable, sensor 4 may be localized as $4'$. (c) An angle localizable sensor network.}
	\label{fig localizability}
\end{figure}

\begin{lemma} \label{le hyperplane}
	If the sensor network is angle localizable in $\mathbb{R}^2$, then anchors are not all collinear.
\end{lemma}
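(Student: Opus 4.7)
The plan is to argue by contraposition. I would assume that all anchors lie on a common line $L\subset\mathbb{R}^2$ and then exhibit two distinct feasible solutions of (\ref{qcqp}), thereby showing that the sensor network fails to be angle localizable in the sense of Definition \ref{de angle localizability}.

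The centerpiece of the argument is reflection symmetry. Let $R:\mathbb{R}^2\to\mathbb{R}^2$ denote the reflection across $L$. Because $R$ is an isometry, it preserves every distance $\|p_i-p_j\|$ as well as every inner product of the form $(p_i-p_j)^{\top}(p_i-p_k)$, and in particular it preserves the angle data $\{a_{ijk}\}$ that constrain (\ref{qcqp}). Since every anchor lies on $L$, the map $R$ fixes each anchor pointwise. Given any feasible tuple $(x,\{d_{ij}\})$, I would produce a second candidate $(x',\{d_{ij}\})$ by setting $x'_i=x_i$ for $i\in\mathcal{A}$ and $x'_i=R(x_i)$ for $i\in\mathcal{S}$; the isometry property then guarantees that every constraint of (\ref{qcqp}) is satisfied by $(x',\{d_{ij}\})$. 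If at least one unknown sensor has its position off of $L$, the two solutions are distinct and we are done.

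The main obstacle is the degenerate case in which every sensor, anchor and unknown alike, happens to lie on $L$, because the reflection trick alone fixes the whole configuration. For this case I plan to argue directly. When the entire framework is collinear, each $a_{ijk}$ equals $\pm 1$ and encodes only the relative ordering of neighbors along $L$. By Assumption \ref{as feasible} the sensor positions along $L$ are pairwise distinct, so there exists $\varepsilon>0$ such that translating any chosen unknown sensor along $L$ by any amount in $(-\varepsilon,\varepsilon)$ does not cause it to coincide with or cross any other sensor. Throughout such a perturbation, the sign of every inner product $(p_i-p_j)^{\top}(p_i-p_k)$ that appears in the angle constraints is unchanged, while the distance variables $d_{ij}$ are free and can be updated to match the new pairwise distances. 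This produces a second feasible tuple distinct from the original. The most delicate step is to verify that a single small shift simultaneously preserves \emph{every} constraint in (\ref{qcqp}), but this reduces to the elementary sign-preservation observation stated above, applied to all triples involving the perturbed vertex. Combining the two cases finishes the contrapositive and hence the lemma.
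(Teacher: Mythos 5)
Your proposal is correct and follows essentially the same route as the paper: the paper's Case 2 uses the Householder transformation $H_y=I_2-2yy^{\top}$, which is exactly your reflection across the anchor line applied only to the unknown sensors, and its Case 1 handles the fully collinear configuration by a small translation of one unknown sensor along the line, just as you do. The only difference is cosmetic (order of the cases and the precise form of the perturbation in the collinear sub-case), and your sign-preservation justification there is, if anything, spelled out more carefully than the paper's.
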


\begin{proof}
	Suppose that anchors are all collinear. We discuss the following two cases: 

Case 1, all sensors are collinear. Then given any sensor $i\in\mathcal{S}$, there must exist a constant $\delta>0$, such that for $y_i=x_i-\delta\frac{x_i-x_j}{||x_i-x_j||}$, it holds that $\frac{y_i-x_j}{||y_i-x_j||}=\frac{x_i-x_j}{||x_i-x_j||}$ for any $j\in\mathcal{V}$. That is, all the unknown sensors cannot be uniquely localized.

Case 2, not all the sensors are collinear. Let $y\in\mathbb{R}^2$ be a unit vector perpendicular to the line determined by anchors. It can be verified that $(\bar{x}_i-\bar{x}_j)^{\top}(\bar{x}_i-\bar{x}_k)=(x_i-x_j)^{\top}(x_i-x_k)$ for all $(i,j,k)\in\mathcal{T}_{\hat{\mathcal{G}}}$ and $||\bar{x}_i-\bar{x}_j||=||x_i-x_j||$ for all $(i,j)\in\hat{\mathcal{E}}$, where $\bar{x}_i=p_i$ for $i\in\mathcal{A}$, $\bar{x}_i=H_yx_i$ for $i\in\mathcal{S}$, $H_y=I_2-2yy^{\top}$ is the Householder transformation. 
\end{proof}

Lemma \ref{le hyperplane} implies that at least 3 anchors are required for angle localizability of a network in $\mathbb{R}^2$. Note that for SNL based on bearings measured in the global coordinate frame \cite{Zhu14,Zhao16,Zhao18,Trinh18}, the minimum number of anchors required for localizability in $\mathbb{R}^2$ is 2. This is because the angle between two global bearings can be determined as clockwise or counterclockwise in the global coordinate frame. As we claimed before, each angle considered in this paper is totally measured in the local coordinate frame, and cannot be recognized as clockwise or counterclockwise in the global coordinate frame. Fig. \ref{fig localizability} (a) shows a sensor network that is localizable by bearings in the global coordinate frame, but is not angle localizable. In Fig. \ref{fig localizability} (b), the sensor network is not localizable since all anchors are collinear.

The following theorem shows a connection between angle localizability and angle fixability.

\begin{theorem}\label{th localizability=fixability}
	A sensor network $\mathbf{N}=(\hat{\mathcal{G}},x,\mathcal{A})$ is angle localizable in $\mathbb{R}^2$ if and only if $(\hat{\mathcal{G}},x)$ is angle fixable and anchors are not all collinear.
\end{theorem}

\begin{proof}
	Sufficiency. Let $y=(y_1^{\top},...,y_n^{\top})^{\top}\in\mathbb{R}^{2n}$ be a solution to (\ref{qcqp}). Since $(\hat{\mathcal{G}},x)$ is angle fixable, we have $y=c(I_n\otimes\mathscr{R})x+\mathbf{1}_n\otimes\xi$ for some $c\in\mathbb{R}\setminus\{0\}$, $\mathscr{R}\in \text{O}(2)$, $\xi\in\mathbb{R}^2$. Lemma \ref{le hyperplane} implies that there exist at least 3 anchors located non-collinear. Without loss of generality, let $1,2,3$ be the three anchors. Then  $y_i=x_i$, $i=1,2,3$ and $x_1-x_2=y_1-y_2=c\mathscr{R}(x_1-x_2)$. Since $||\mathscr{R}||=1$, we have $|c|=1$. Therefore, $\mathscr{R}'\triangleq c\mathscr{R}\in \text{O}(2)$. Similarly, we have $x_1-x_3=\mathscr{R}'(x_1-x_3)$. Let $A=(x_1-x_2,x_1-x_3)\in\mathbb{R}^{2\times2}$, $A$ must be of full rank. Then we can obtain $\mathscr{R}'=I_2$ from $A=\mathscr{R}'A$. Since $x_1=Y_{12}=\mathscr{R}'x_1+\xi$, we have $\xi=0$. As a result, $y=x$.
	
	Necessity. Lemma \ref{le hyperplane} has shown that anchors are not all collinear, we next prove angle fixability of $(\hat{\mathcal{G}},x)$. Consider $y=(y_1^{\top},...,y_n^{\top})^{\top}\in\mathbb{R}^{2n}$ such that $y\in f^{-1}_{\hat{\mathcal{G}}}(f_{{\hat{\mathcal{G}}}}(x))$, it suffices to prove $y\in\mathscr{S}_x$. Note that the subgraph $\hat{\mathcal{G}}_a$ composed by vertices $\{1,...,n_a\}$ and related edges is complete. Let $x_a=(x_1^{\top},...,x_{n_a}^{\top})^{\top}$, $y_a=(y_1^{\top},...,y_{n_a}^{\top})^{\top}$. From Lemma \ref{le f=gn}, $(\hat{\mathcal{G}}_a,x_a)$ is angle fixable, then $y_a\in\mathscr{S}_{x_a}$. Angle localizability implies that given $y_a$, the rest of coordinates of $y$ such that $f_{\hat{\mathcal{G}}}(x)=f_{\hat{\mathcal{G}}}(y)$ can be uniquely determined. Since a suitable $y$ is an element in $\mathscr{S}_{x}$, hence there must hold $y\in\mathscr{S}_x$.
\end{proof}

Combing Theorem \ref{th angle fixable} and Theorem \ref{th localizability=fixability}, we obtain a graphical condition for angle localizability, see the following  corollary.
\begin{corollary}\label{co}
	A sensor network $\mathbf{N}=(\hat{\mathcal{G}},x,\mathcal{A})$ is angle localizable in $\mathbb{R}^2$ if $(\hat{\mathcal{G}},x)$ has a non-degenerate bilateration ordering, and anchors are not all collinear.
\end{corollary}

Corollary \ref{co} implies that Definition \ref{de af framework} can be used to construct angle localizable sensor networks. Similar approaches for generating bearing localizable networks can be found in \cite{Trinh18b,Trinh19}.

\begin{remark}
When there are no anchors in the network, i.e., $\mathcal{A}=\varnothing$,  sensors' locations can still be determined uniquely up to rotations, translations, uniform scaling and reflections by solving (\ref{qcqp}). In this scenario, Definition \ref{de angle localizability} can be modified by using $\mathscr{S}_{x^*}$ as the unique solution set of (\ref{qcqp}), where $x^*$ corresponds to actual locations of sensors. Due to absence of anchors, it holds that $\hat{\mathcal{G}}=\mathcal{G}$. The centralized and distributed approaches that will be presented later can also be extended to the anchor-free case. 
\end{remark}

\section{CASNL: An Optimization Perspective}\label{sec centralized ASNL}

In this section, a centralized optimization framework is proposed to solve ASNL. In the CASNL, the information sensed by all sensors (including anchors) will be collected in a central unit, and the ASNL problem will be solved by this central unit. An example to illustrate the centralized approach to ASNL is presented in Fig. \ref{fig centralized approach}, where each sensor has its local coordinate frame and transmits local bearing measurements to the central unit, and inter-sensor communications are not required. In this setting, the bearing-based approaches in \cite{Zhu14,Zhao16,Zhao18,Trinh18,Li19} are not applicable because they require each sensor to either measure the bearing measurements in the global coordinate frame or transform local bearings to global bearings by communicating with neighbors.

The significance of studying CASNL can be generally summarized as follows: (i) CASNL does not require each sensor to communicate with other sensors or perform computation.
(ii) A CASNL framework contributes to privacy preserving: since only angle or bearing information is transmitted to the central unit once, the third party cannot obtain any sensor's position unless it detects all the sensed information and knows anchors' positions. (iii) The CASNL and the DASNL have different objectives in practice, thus have different application scenarios. The goal of CASNL is to obtain all sensors' locations at the central unit, based on which the next task is set in a centralized framework, while in DASNL each sensor localizes itself.

\begin{figure}
	\centering
	\includegraphics[width=7cm]{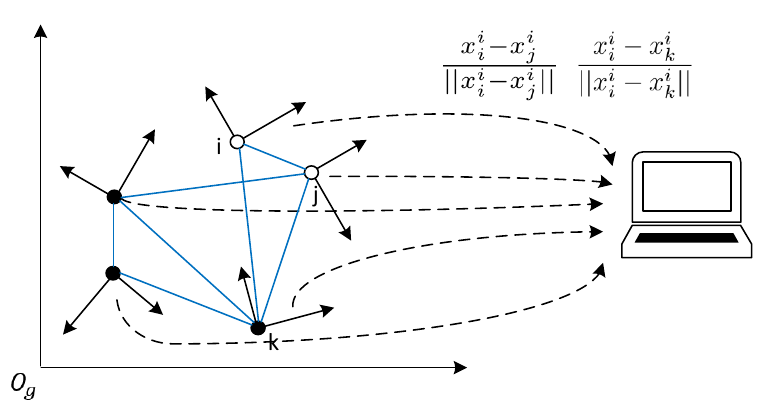}
	\caption{An illustration for CASNL.}
	\label{fig centralized approach}
\end{figure}

\subsection{An SDP Formulation}

To solve the ASNL problem (\ref{qcqp}) from a centralized perspective, we will establish a linear SDP formulation based on ASNL (\ref{qcqp}), and analyze the condition for equivalence of them. Here, ``equivalence" of two problems means that there is a one-to-one correspondence between the solution sets of them. The results in this subsection can be trivially extended to ASNL in higher dimensional space.

Let $$X=(x_{n_a+1},...,x_{n_a+n_s})\in\mathbb{R}^{2\times n_s},$$ $$Y=\begin{pmatrix}
I_2 & X\\
X^{\top} & X^{\top}X
\end{pmatrix}\in\mathbb{R}^{(2+n_s)\times (2+n_s)},$$ $$\tilde{d}=(..., d_{l_{ij}}, ...)^{\top}\in\mathbb{R}^{m},$$ $$D=\tilde{d}\tilde{d}^{\top}\in\mathbb{R}^{m\times m}, m=|\hat{\mathcal{E}}|.$$ Define $e_i\in\mathbb{R}^{n_s}$ and $E_i\in\mathbb{R}^{m}$ as an $n_s$-dimensional and an $m$-dimensional unit vector with the $i$-th entry being 1, respectively. Then ASNL (\ref{qcqp}) is converted into an SDP
\begin{equation}\label{first conversion}
\begin{split}
\min\limits_{Y,D}~~~~~~~~~&~ 0\\
\text{s.t.} ~~(f_i-f_j)^{\top}Y(f_i-f_k)&=a_{ijk}E_{l_{ij}}^{\top}DE_{l_{ik}}, (i,j,k)\in\mathcal{T}_{\hat{\mathcal{G}}}, \\
(f_i-f_j)^{\top}Y(f_i-f_j)&=E_{l_{ij}}^{\top}DE_{l_{ij}}, (i,j)\in\hat{\mathcal{E}},\\
Y_{1:2,1:2}=I_2&,~~Y\succeq 0,  ~~D\succeq 0,\\
\end{split}
\end{equation}
where $Y_{1:2,1:2}$ is the second leading principal submatrix of $Y$, and $f_i\in\mathbb{R}^{n_s+2}$ is defined as
\[
f_i=\left\{
\begin{array}{llll}
(p_i^{\top}, \mathbf{0}_{1\times n_s})^{\top},~~~~~~~i\in\mathcal{A},\\
(\mathbf{0}_{1\times 2},e_{i-n_a}^{\top})^{\top},~~~~~ i\in\mathcal{S}.\\
\end{array}
\right. 
\]

SDP (\ref{first conversion}) is formulated by transforming variables $x$ and $\tilde{d}$ in (\ref{qcqp}) to matrix variables $Y$ and $D$. Since $Y$ and $D$ are uniquely determined by $x$ and $\tilde{d}$, any solution of (\ref{qcqp}) corresponds to a solution of (\ref{first conversion}). However, the converse may not hold. Before we study the condition for equivalence of (\ref{qcqp}) and (\ref{first conversion}), we firstly equivalently transform (\ref{first conversion}) to a standard SDP with only one matrix variable. Denote $$Q_{ijk}=\frac12\bigg[(f_i-f_k)(f_i-f_j)^{\top}+(f_i-f_j)(f_i-f_k)^{\top}\bigg],$$ 
$$Q_{ij}= (f_i-f_j)(f_i-f_j)^{\top},$$ 
$$R_{ijk}=\frac12\bigg[E_{l_{ik}}E_{l_{ij}}^{\top}+E_{l_{ij}}E_{l_{ik}}^{\top}\bigg],~ R_{ij}=E_{l_{ij}}E_{l_{ij}}^{\top},$$ then (\ref{first conversion}) can be equivalently rewritten as
\begin{equation}\label{second conversion}
\begin{split}
\min\limits_{Y,D}&~~~~~~~~~~ 0\\
\text{s.t.} ~~\langle Q_{ijk}, Y \rangle&=a_{ijk}\langle R_{ijk},D\rangle, ~~~~(i,j,k)\in\mathcal{T}_{\hat{\mathcal{G}}}, \\
\langle Q_{ij}, Y \rangle &=\langle R_{ij}, D \rangle, ~~~~~~~~(i,j)\in\hat{\mathcal{E}},\\
Y_{1:2,1:2}&=I_2,~~Y\succeq 0,  ~~D\succeq 0.
\end{split}
\end{equation}
To include both $Y$ and $D$ in one matrix variable, we further define
$$Z=\begin{pmatrix}
Y & \mathbf{0}\\
\mathbf{0} & D
\end{pmatrix}, \Phi_{ijk}=\begin{pmatrix}
Q_{ijk} & \mathbf{0}\\
\mathbf{0}      &   \mathbf{0}_{m\times m}
\end{pmatrix},$$ $$\Psi_{ij}=\begin{pmatrix}
Q_{ij} & \mathbf{0}\\
\mathbf{0}      &   \mathbf{0}_{m\times m}
\end{pmatrix}, \bar{\Psi}_{ij}=\begin{pmatrix}
\mathbf{0}_{(n_s+2)\times (n_s+2)} & \mathbf{0}\\
\mathbf{0}      &   R_{ij}
\end{pmatrix},$$ $$\bar{\Phi}_{ijk}=\begin{pmatrix}
\mathbf{0}_{(n_s+2)\times (n_s+2)} & \mathbf{0}\\
\mathbf{0}      &  R_{ijk}
\end{pmatrix}.$$ Then (\ref{first conversion}) becomes

\begin{equation}\label{SDP Z}
\begin{split}
\min\limits_{Z}~~~~~~ 0~~~~~~~~~~~~&\\
\text{s.t.} ~~\langle \Phi_{ijk}, Z \rangle=a_{ijk}\langle \bar{\Phi}_{ijk},Z\rangle, ~&(i,j,k)\in\mathcal{T}_{\hat{\mathcal{G}}}, \\
\langle \Psi_{ij}, Z \rangle =\langle \bar{\Psi}_{ij}, Z \rangle, ~~(i,j&)\in\hat{\mathcal{E}},\\
Z_{1:n_s+2,n_s+3:n_s+2+m}=\mathbf{0}, & \\
Z_{n_s+3:n_s+2+m,1:n_s+2}=\mathbf{0},&\\
Z_{1:2,1:2}=I_2, Z\succeq 0.~~&
\end{split}
\end{equation}

Given a sensor network, after the angle information measured by sensors is transmitted to the central unit, all the matrices in (\ref{SDP Z}) except variable $Z$ will be known. More specifically, $\mathcal{T}_{\hat{\mathcal{G}}}$ and $\hat{\mathcal{E}}$ are determined by measurements from sensors; $\Phi_{ijk}$ and $\Psi_{ij}$ are determined by $\mathcal{T}_{\hat{\mathcal{G}}}$, $\hat{\mathcal{E}}$ and positions of anchors; $\bar{\Phi}_{ijk}$ and $\bar{\Psi}_{ij}$ are determined by $\mathcal{T}_{\hat{\mathcal{G}}}$ and $\hat{\mathcal{E}}$, respectively; $a_{ijk}$ is the angle information based on measurements from sensors. 

In what follows, we will analyze the property of solutions of (\ref{SDP Z}), and propose a condition for equivalence of (\ref{qcqp}) and (\ref{SDP Z}).

\begin{lemma}\label{le rank Z}
	Let $Z$ be a solution to (\ref{SDP Z}), then $\rank(Z)\geq 3$.
\end{lemma}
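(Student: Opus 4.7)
The plan is to exploit the block-diagonal structure of $Z$ and to lower-bound the ranks of the two diagonal blocks independently. The two off-diagonal equality constraints in (\ref{SDP Z}), combined with $Z \succeq 0$, force
\[
Z = \begin{pmatrix} Y & \mathbf{0} \\ \mathbf{0} & D \end{pmatrix}, \qquad Y \succeq 0, \; D \succeq 0,
\]
because every principal submatrix of a positive semi-definite matrix is itself positive semi-definite. For a block-diagonal PSD matrix the rank is additive across blocks, so $\rank(Z) = \rank(Y) + \rank(D)$, and it therefore suffices to show $\rank(Y) \geq 2$ and $\rank(D) \geq 1$.

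For the first inequality, the constraint $Z_{1:2,1:2} = I_2$ says that the leading $2 \times 2$ principal submatrix of $Y$ equals the identity, which has rank two. Since the rank of a symmetric matrix is at least the rank of any principal submatrix, this immediately gives $\rank(Y) \geq 2$.

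For the second inequality, I plan to exhibit one strictly positive diagonal entry of $D$. By construction of $\hat{\mathcal{E}}$, every pair of anchors $i, j \in \mathcal{A}$ satisfies $(i,j) \in \hat{\mathcal{E}}$. Using $f_i = (p_i^{\top},\mathbf{0})^{\top}$, $f_j = (p_j^{\top},\mathbf{0})^{\top}$ together with $Y_{1:2,1:2} = I_2$, the constraint $\langle \Psi_{ij}, Z \rangle = \langle \bar{\Psi}_{ij}, Z \rangle$ collapses to
\[
D_{l_{ij},l_{ij}} \;=\; (p_i - p_j)^{\top} Y_{1:2,1:2} (p_i - p_j) \;=\; \|p_i - p_j\|^2,
\]
which is strictly positive by Assumption \ref{as feasible} (no two sensors coincide). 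Hence $D$ has a nonzero diagonal entry, so $\rank(D) \geq 1$, and combining the bounds yields $\rank(Z) \geq 3$.

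None of the three steps is individually deep; the only place where any care is needed is in exploiting that $\hat{\mathcal{E}}$ always contains all anchor--anchor edges, which is what prevents $D$ from collapsing to zero at a feasible $Z$. Without this structural fact, a purely algebraic argument from the constraints alone would not be enough to separate $Y$ and $D$ and obtain the full rank-three lower bound.
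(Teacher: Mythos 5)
Your proof is correct and follows essentially the same route as the paper: rank additivity over the two diagonal blocks, $\rank(Y)\geq 2$ from the fixed $I_2$ block, and $\rank(D)\geq 1$ from non-degeneracy of the network. If anything, your justification of $\rank(D)\geq 1$ (extracting $D_{l_{ij},l_{ij}}=\|p_i-p_j\|^2>0$ from an anchor--anchor edge constraint) is more explicit than the paper's terse appeal to Assumption \ref{as feasible}, and your principal-submatrix bound for $\rank(Y)$ is a harmless simplification of the paper's Schur-complement computation.
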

\begin{proof}
	From the definition of $Z$, we have $\rank(Z)=\rank(Y)+\rank(D)$. Since $Y$ is symmetric, and $Y_{1:2,1:2}=I_2$, we have $Y=\begin{pmatrix}
	I_2 & Y_{12}\\
	Y_{12}^{\top} & Y_{22}
	\end{pmatrix}$ with $Y_{12}\in\mathbb{R}^{2\times n_s}$ and $Y_{22}\in\mathbb{R}^{n_s\times n_s}$. Then $\rank(Y)=\rank(I_2)+\rank(Y_{22}-Y_{12}^{\top}Y_{12})\geq 2$. For matrix $D$, Assumption \ref{as feasible} implies that $D_{l_{ij}l_{ik}}$ is not a zero matrix. Then $\rank(D)\geq1$. In conclusion, $\rank(Z)\geq 3$.
\end{proof}

From $Z\succeq0$, we obtain that $Y_{22}\succeq Y_{12}^{\top}Y_{12}$, and $D$ is nontrivial. Since there is no rank constraint for $Z$ in (\ref{SDP Z}), it is possible that the rank of a solution $Z$ is greater than $3$. As a result, a solution of (\ref{SDP Z}) may not correspond to a solution to (\ref{qcqp}). The following lemma shows that by imposing a rank constraint on $Z$, SDP (\ref{SDP Z}) becomes equivalent to (\ref{qcqp}).

\begin{lemma}\label{le d+1 to equivalence}
	If the rank of every solution to (\ref{SDP Z}) is $3$, then (\ref{SDP Z}) is equivalent to (\ref{qcqp}).
\end{lemma}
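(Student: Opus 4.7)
The plan is to exploit the block structure of $Z$ together with the imposed rank constraint to extract exactly the variables $x$ and $\tilde d$ appearing in (\ref{qcqp}), and then verify that the linear constraints of (\ref{SDP Z}) coincide with the quadratic ones of (\ref{qcqp}).

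First I would refine the rank count already used in the proof of Lemma \ref{le rank Z}. There we saw $\rank(Z)=\rank(Y)+\rank(D)$, $\rank(Y)\ge 2$, and $\rank(D)\ge 1$. The hypothesis $\rank(Z)=3$ therefore forces $\rank(Y)=2$ and $\rank(D)=1$. Combining $\rank(Y)=2$ with $Y\succeq 0$ and $Y_{1:2,1:2}=I_{2}$, write $Y=\begin{pmatrix}I_{2}&B\\ B^{\top}&C\end{pmatrix}$; taking a Schur complement shows $\rank(Y)=2+\rank(C-B^{\top}B)$, so the rank condition pins down $C=B^{\top}B$ and hence $Y=\begin{pmatrix}I_{2}\\ B^{\top}\end{pmatrix}\begin{pmatrix}I_{2}&B\end{pmatrix}$. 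Setting $X:=B\in\mathbb{R}^{2\times n_{s}}$ gives exactly the form that defines $Y$ in the paragraph preceding (\ref{first conversion}). For $D$, a PSD rank-one matrix admits the factorization $D=\tilde d\tilde d^{\top}$, and the sign ambiguity in $\tilde d$ is resolved by the distance interpretation $\tilde d_{l_{ij}}\ge 0$.

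Next I would unpack the constraints. Using the definitions of $f_{i}$ and the identity $\bigl(\begin{smallmatrix}I_{2}&X\end{smallmatrix}\bigr)f_{i}=x_{i}$, where $x_{i}=p_{i}$ for $i\in\mathcal{A}$ and $x_{i}=Xe_{i-n_{a}}$ for $i\in\mathcal{S}$, a direct computation gives $(f_{i}-f_{j})^{\top}Y(f_{i}-f_{k})=(x_{i}-x_{j})^{\top}(x_{i}-x_{k})$ and $E_{l_{ij}}^{\top}DE_{l_{ik}}=\tilde d_{l_{ij}}\tilde d_{l_{ik}}$. With $d_{ij}:=\tilde d_{l_{ij}}$, the first two blocks of constraints in (\ref{SDP Z}) are precisely the first two blocks of (\ref{qcqp}); the block equalities $Z_{1:n_{s}+2,\,n_{s}+3:\,n_{s}+2+m}=\mathbf{0}$ are automatic from the block-diagonal definition of $Z$ and carry no extra content; and $Y_{1:2,1:2}=I_{2}$ together with $f_{i}$'s definition encodes $x_{i}=p_{i}$ for $i\in\mathcal{A}$. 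Thus each rank-$3$ feasible $Z$ yields a feasible $(x,d)$ for (\ref{qcqp}).

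For the converse, given any feasible $(x,d)$ of (\ref{qcqp}) I would build $X$ from the columns $x_{n_{a}+1},\dots,x_{n_{a}+n_{s}}$, set $\tilde d$ from the nonnegative distances $d_{ij}$, and define $Y$ and $D$ as in the paragraph preceding (\ref{first conversion}). The resulting block-diagonal $Z$ is automatically PSD, has rank exactly $2+1=3$, and satisfies every constraint in (\ref{SDP Z}) by the same computation run backwards. These two constructions are inverses of each other (modulo the harmless overall sign in $\tilde d$, which the nonnegativity convention removes), giving a bijection between the feasible sets of (\ref{SDP Z}) and (\ref{qcqp}) and hence the claimed equivalence. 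The one delicate point I would flag is the Schur-complement step that forces $C=B^{\top}B$: this is where the rank hypothesis does real work, and without it the $Y$ extracted from the SDP need not factor through any geometric configuration $X$.
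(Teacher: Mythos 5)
Your proof is correct and follows essentially the same route as the paper's: the rank hypothesis together with $\rank(Y)\geq 2$ and $\rank(D)\geq 1$ from Lemma \ref{le rank Z} pins down $\rank(Y)=2$ and $\rank(D)=1$, the Schur complement forces $Y_{22}=Y_{12}^{\top}Y_{12}$, $D$ factors as $\tilde d\tilde d^{\top}$, and the constraints then match those of (\ref{qcqp}). Your version merely spells out the constraint-matching computation and the sign convention on $\tilde d$ that the paper leaves implicit.
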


\begin{proof}
	Given $X$ as a solution to (\ref{qcqp}), it is obvious that the induced $Z$ is a solution to (\ref{SDP Z}). Next we prove that given a solution $Z$ (with rank $3$) to (\ref{SDP Z}), we can find a unique solution $X$ corresponding to $Z$ such that $X$ is a solution to (\ref{qcqp}). Again, consider $Y=\begin{pmatrix}
	I_2 & Y_{12}\\
	Y_{12}^{\top} & Y_{22}
	\end{pmatrix}$ with $Y_{12}\in\mathbb{R}^{2\times n_s}$ and $Y_{22}\in\mathbb{R}^{n_s\times n_s}$. Since $\rank(Z)=3$, from the proof of Lemma \ref{le rank Z}, we have $\rank(Y)=2$, implying $\rank(Y_{22}-Y_{12}^{\top}Y_{12})=0$, then $Y_{22}=Y_{12}^{\top}Y_{12}$. Moreover, $\rank(D)=1$, then there exists some vector $\bar{d}\in\mathbb{R}^{|\hat{\mathcal{E}}|}$ such that $D=\bar{d}\bar{d}^{\top}$. Since $Z$ and $D$ are in the desired forms, the constraints in (\ref{SDP Z}) are equivalent to those in (\ref{qcqp}). Let $X=Y_{12}$, $X$ must be the solution to (\ref{qcqp}).
\end{proof}

Lemma \ref{le d+1 to equivalence} implies that ASNL (\ref{qcqp}) can be equivalently formulated as  the SDP (\ref{SDP Z}) with an rank constraint $\rank(Z)=3$, which is a non-convex optimization and generally NP-hard. Although extensive methods for rank-constrained optimization have been proposed in the literature, e.g., \cite{Wan19,Sun17,You19}, they only guarantee local convergence for general cases. That is, the initial guess given to the algorithm needs to be sufficiently close to an optimal solution. Moreover, solving a rank-constrained optimization is usually time-consuming especially when the problem is of large size. In the next subsection, by utilizing some inherent properties of ASNL, we will derive a condition for removing the rank constraint on the unknown matrix $Z$.

\begin{remark}\label{re advan SDP}
By incorporating the rank condition  in Lemma \ref{le d+1 to equivalence}, the SDP formulation (\ref{SDP Z}) is equivalent to the original ASNL and is scalable to noises and bounded unknown disturbances. For example, in specific scenarios, each angle constraint is obtained within a range due to the existence of noises or disturbances, i.e., $a_{ijk}\in[l_{ijk},u_{ijk}],(i,j,k)\in\mathcal{T}_{\hat{\mathcal{G}}}$. In this case, the first class of equality constraints can be revised as inequality constraints
$$l_{ijk}\langle \bar{\Phi}_{ijk},Z\rangle\leq\langle \Phi_{ijk}, Z \rangle\leq u_{ijk}\langle \bar{\Phi}_{ijk},Z\rangle, ~(i,j,k)\in\mathcal{T}_{\hat{\mathcal{G}}}.$$
Under this setting, when the sensing graph has a sufficient amount of edges, SDP (\ref{SDP Z}) can still be solved with high accuracy. An example (Example \ref{ex disturbance}) will be given to demonstrate this fact in the simulation section. Moreover, the discussions on SDP (\ref{SDP Z}) in the following two subsections are applicable to the case with bounded unknown disturbances as well.
\end{remark}

\subsection{Relation to SDP Relaxation}

Since a rank-constrained  SDP is difficult to solve, we consider how to relax the rank-constrained SDP to a convex problem (i.e., remove the rank constraint). In the literature, e.g., \cite{Biswas04,Biswas06,Simonetto14}, SDP relaxation has been widely used to solve RSNL. However, usually a solution to the relaxed problem may not correspond to a solution to the original problem. In this subsection, we will establish connections between the relaxed formulation in (\ref{SDP Z}) and the original nonconvex QCQP in (\ref{qcqp}).

The following theorem shows that under a specific graph condition, the rank of $Z$ can be efficiently constrained once the rank of $D$ is constrained.
\begin{theorem}\label{th rank}
Let $Z=\left(\begin{smallmatrix}
Y & \mathbf{0}\\
\mathbf{0} & D
\end{smallmatrix}\right)$ be a solution to (\ref{SDP Z}) and $\rank(D)=1$, then $\rank(Z)=3$ must hold if and only if

(i) anchors are not all collinear;

(ii) $(\hat{\mathcal{G}},x)$ is angle fixable in $\mathbb{R}^2$ and its angle fixability is invariant to space dimensions.
\end{theorem}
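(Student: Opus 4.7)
My plan is to handle the two directions of the iff separately, with the technical weight lying in the sufficiency direction.

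For sufficiency I would assume (i) and (ii) and take any feasible $Z$ with $\rank(D)=1$. Since $Y\succeq 0$ with $Y_{1:2,1:2}=I_2$, I would factor $Y=\tilde X^{\top}\tilde X$ with $\tilde X\in\mathbb{R}^{r\times(n_s+2)}$, $r=\rank(Y)$, forcing the first two columns $\tilde X_a$ to be orthonormal. Interpreting the columns of $\tilde X$ through the map $f_i$, the points $\tilde p_i=\tilde X f_i$ form a configuration in $\mathbb{R}^r$: anchors are the isometric lifts $\tilde X_a p_i$ and unknown sensors are the remaining columns of $\tilde X$. Writing $D=\bar d\bar d^{\top}$ (with signs chosen consistently with angle orientations, which is possible because $\hat{\mathcal{G}}$ is connected), the two classes of SDP equalities translate into $(\tilde p_i-\tilde p_j)^{\top}(\tilde p_i-\tilde p_k)=a_{ijk}\bar d_{l_{ij}}\bar d_{l_{ik}}$ and $\|\tilde p_i-\tilde p_j\|=|\bar d_{l_{ij}}|$, so every angle in $(\hat{\mathcal{G}},\tilde p)$ has cosine $a_{ijk}$; equivalently, $\tilde p\in f^{-1}_{\hat{\mathcal{G}}}(f_{\hat{\mathcal{G}}}(\bar x))$ with $\bar x$ the $r$-dimensional lift of $x$.

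By (ii), $(\hat{\mathcal{G}},\bar x)$ is angle fixable in $\mathbb{R}^r$, so $\tilde p=c(I_n\otimes\mathscr{R})\bar x+\mathbf{1}_n\otimes\xi$ for some $c\ne 0$, $\mathscr{R}\in\text{O}(r)$, $\xi\in\mathbb{R}^r$. Condition (i) supplies three non-collinear anchors $p_1,p_2,p_3$; subtracting the anchor identities $\tilde X_a p_i=c\mathscr{R}\bar p_i+\xi$ and using that $\{p_2-p_1,p_3-p_1\}$ spans $\mathbb{R}^2$ forces $\tilde X_a=cQ$, where $Q\in\mathbb{R}^{r\times 2}$ collects the first two columns of $\mathscr{R}$. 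Combining this with $\tilde X_a^{\top}\tilde X_a=I_2$ and $Q^{\top}Q=I_2$ gives $c^2=1$, and $\xi=0$ then drops out of any single anchor relation. For $i\in\mathcal{S}$ this propagates to $\tilde p_i=cQ x_i=\tilde X_a x_i$, so $\tilde X=\tilde X_a(I_2\mid x_{n_a+1},\ldots,x_n)$ factors through a $2\times(n_s+2)$ matrix. Hence $\rank(\tilde X)=2$ and $\rank(Z)=\rank(Y)+\rank(D)=3$.

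For necessity I would argue the contrapositive in two cases. If (i) fails and all anchors lie on a line $L$, I would take any feasible $X^{(1)}$ (Assumption~\ref{as feasible}) together with its reflection $X^{(2)}$ through $L$; both realise the same edge-length vector, so the associated $Z^{(1)}$ and $Z^{(2)}$ share the same rank-one $D$. The midpoint $Z=(Z^{(1)}+Z^{(2)})/2$ remains feasible by linearity of every SDP constraint, and a Schur-complement computation shows the resulting $Y$ gains an extra rank from $\tfrac14(X^{(1)}-X^{(2)})^{\top}(X^{(1)}-X^{(2)})$, which is nonzero once some unknown sensor lies off $L$; thus $\rank(Z)\ge 4$ while $\rank(D)=1$. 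If (ii) fails through loss of fixability in some $\mathbb{R}^r$, there is a realization $\tilde p\in\mathbb{R}^{nr}$ of the angle data with $\tilde p\notin\mathscr S_{\bar x}$; using that the complete anchor subframework is angle fixable in every dimension under (i), I pre-apply an isometry of $\mathbb{R}^r$ so that the anchor block of $\tilde p$ matches $\bar p$, and, provided $\tilde p$ leaves the 2D subspace, read off $\tilde X$ and $\bar d_{l_{ij}}=\|\tilde p_i-\tilde p_j\|$ to obtain a feasible $Z$ with $\rank(D)=1$ and $\rank(Y)>2$.

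The main obstacle I foresee sits in the necessity subcase where (ii) fails purely through loss of 2D fixability: a 2D flex, trivially lifted to higher dimensions, keeps $\rank(Y)=2$ and does not immediately yield a higher-rank $Z$. I would need to argue that under (i) a 2D flex can be promoted — for instance by rotating the flexed portion about the anchor plane while retaining identical pairwise distances — into a strictly $r$-dimensional realization with $\rank(D)$ still equal to $1$. The sufficiency argument, by contrast, is comparatively routine once the translation from $Z$ to the $\mathbb{R}^r$-framework $\tilde p$ is in place, with the only real bookkeeping lying in deriving $c^2=1$ from the simultaneous orthonormality of $\tilde X_a$ and the first two columns of $\mathscr{R}$.
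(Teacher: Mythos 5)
Your sufficiency argument is sound and is essentially the paper's proof run forwards instead of by contradiction: the paper assumes $\rank(\tilde Z)>3$, splits $Y_{22}=Y_{12}^{\top}Y_{12}+Y_{12}'^{\top}Y_{12}'$ with $Y_{12}'\neq \mathbf{0}$, and observes that $(Y_{12}^{\top},Y_{12}'^{\top})^{\top}$ and $(Y_{12}^{\top},\mathbf{0})^{\top}$ are two realizations in $\mathbb{R}^{2+r}$ with matched non-collinear anchors that are not related by a trivial motion, contradicting the dimensional invariance of angle fixability. Your factorization $Y=\tilde X^{\top}\tilde X$ and the anchor-pinning computation reach the same conclusion by the same mechanism.

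The gap you flag in necessity is real, and your proposed repair does not close it. ``Rotating the flexed portion about the anchor plane while retaining identical pairwise distances'' is precisely asking for a nontrivial flex of the framework in $\mathbb{R}^{r}$ that fixes the anchor plane, which need not exist. In addition, your reflection construction for collinear anchors collapses when every sensor lies on the line $L$, and the natural fallback of sliding a sensor along $L$ changes the distance vector, so the averaged $D$ has rank $2$ and falls outside the hypothesis. The paper's necessity proof takes a different route that never tries to manufacture a high-rank certificate from a planar flex: it first shows that the rank hypothesis forces (\ref{qcqp}) to have a unique solution, by averaging two putative solutions $Z_1,Z_2$ of (\ref{SDP Z}) and using that $\tfrac12 X_1^{\top}X_1+\tfrac12 X_2^{\top}X_2=\bigl(\tfrac12(X_1+X_2)\bigr)^{\top}\bigl(\tfrac12(X_1+X_2)\bigr)$ forces $X_1=X_2$; uniqueness then yields condition (i) and angle fixability in $\mathbb{R}^2$ simultaneously via Lemma \ref{le hyperplane} and Lemma \ref{le localizability=fixability}. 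Only afterwards does it address dimensional invariance, by reversing the sufficiency construction: once $2$D fixability and non-collinear anchors are in hand, any witness of non-fixability in $\mathbb{R}^{2+r}$ must leave the plane and therefore directly produces a solution of rank $3+r$. If you want to keep your case analysis, adopt this ordering — establish (i) and $2$D fixability first by the uniqueness/averaging argument, so that the only remaining failure mode of (ii) is one that genuinely exits the plane and hence yields $\rank(Y)>2$.
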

\begin{proof}
	See Appendix.
\end{proof}


We say a triangulated framework $(\mathcal{G},p)$ is {\it acute-triangulated} if each triangle in this framework only contains acute angles, i.e., for any $(i,j),(i,k),(j,k)\in\mathcal{E}$, it holds that $g_{ij}^{\top}g_{ik}$, $g_{ji}^{\top}g_{jk}$, $g_{ki}^{\top}g_{kj}\in(0,1)$. As a result, an acute-triangulated framework must be strongly non-degenerate, and has a non-degenerate bilateration ordering. Next we give a graph condition for constraining the rank of $D$.

\begin{theorem}\label{th D rank1}
Let $Z=\left(\begin{smallmatrix}
Y & \mathbf{0}\\
\mathbf{0} & D
\end{smallmatrix}\right)$ be a solution to (\ref{SDP Z}). If $(\hat{\mathcal{G}},x)$ is acute-triangulated, then $\rank(D)=1$.
\end{theorem}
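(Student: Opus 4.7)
The plan is to reduce rank-one-ness of the full matrix $D$ to rank-one-ness of every $3 \times 3$ principal submatrix of $D$ indexed by a triangle of the triangulated graph, and then to propagate this conclusion through triangles sharing edges. Fix a triangle $\{i,j,k\}$ and let $M \in \mathbb{R}^{3 \times 3}$ denote the principal submatrix of $D$ on rows and columns $\{l_{ij}, l_{ik}, l_{jk}\}$. The algebraic identity $Q_{jk} = Q_{ij} + Q_{ik} - 2 Q_{ijk}$ (a direct consequence of $f_j - f_k = (f_i - f_k) - (f_i - f_j)$), combined with the SDP constraints $\langle Q_{ij}, Y\rangle = M_{11}$, $\langle Q_{ik}, Y\rangle = M_{22}$, $\langle Q_{jk}, Y\rangle = M_{33}$ and $\langle Q_{ijk}, Y\rangle = a_{ijk} M_{12}$, produces the linear relation $M_{11} + M_{22} - M_{33} = 2 a_{ijk} M_{12}$. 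The two analogous relations obtained by using vertex $j$ and vertex $k$ as apex sum with this one to give $\langle C, M \rangle = 0$, where
\[
C = \begin{pmatrix} 1 & -a_{ijk} & -a_{jik} \\ -a_{ijk} & 1 & -a_{kij} \\ -a_{jik} & -a_{kij} & 1 \end{pmatrix}.
\]

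Next I would establish that $C \succeq 0$ with $\dim \ker C = 1$. The diagonal entries of $C$ are all $1$ and each principal $2 \times 2$ minor equals $\sin^{2}(\cdot)$, which is non-negative (and strictly positive under acuteness), while the classical identity $1 - \sum \cos^{2}\theta - 2 \prod \cos\theta = 0$ for angles summing to $\pi$ gives $\det C = 0$; hence $\rank C = 2$. The side-length vector $v^{*} = (\|x_i - x_j\|, \|x_i - x_k\|, \|x_j - x_k\|)^{\top}$ lies in $\ker C$ by the projection identity $\|x_i - x_j\| = \cos(\angle i)\|x_i - x_k\| + \cos(\angle j)\|x_j - x_k\|$, so $\ker C = \mathrm{span}(v^{*})$. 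Using $M \succeq 0$, a spectral decomposition $M = \sum_{r} \lambda_r w_r w_r^{\top}$ with $\lambda_r > 0$, and $\langle C, M\rangle = \sum_r \lambda_r w_r^{\top} C w_r = 0$, one deduces $w_r \in \ker C$ for every $r$; hence $M$ has rank one.

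For the global step, write $D = V V^{\top}$ with rows $V_e$ indexed by the edges of $\hat{\mathcal{G}}$. The triangle argument says that for each triangle $T = \{e_1, e_2, e_3\}$ of the triangulation, the three rows $V_{e_1}, V_{e_2}, V_{e_3}$ lie in a common one-dimensional subspace. Because the acute-triangulated framework is built by a bilateration ordering in which each newly added vertex is connected to both endpoints of an existing edge, every newly introduced triangle shares its base edge with a previously constructed triangle; consequently the triangle-adjacency graph is connected and every edge of $\hat{\mathcal{G}}$ belongs to at least one triangle. Propagating along this adjacency graph pins down every row $V_e$ to the one-dimensional subspace fixed by the initial triangle, giving $\rank D \le 1$; the feasibility assumption (positive distances, as used in Lemma \ref{le rank Z}) upgrades this to $\rank D = 1$.

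The main obstacle is the per-triangle rank analysis — in particular, identifying the symmetric matrix $C$ assembled from the three cosine constraints and verifying $\det C = 0$ together with the explicit one-dimensional null space spanned by the side-length vector. Acuteness enters algebraically to keep the principal $2 \times 2$ minors of $C$ strictly positive, so that $\rank C$ is exactly $2$ and the rank-one conclusion on $M$ is unambiguous; it enters combinatorially by guaranteeing that the bilateration--triangulation construction covers every edge of $\hat{\mathcal{G}}$ by at least one triangle, so that the rank-one propagation reaches every row of $V$.
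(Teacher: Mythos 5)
Your argument is correct, and the per-triangle step is genuinely different from the paper's. The paper's proof treats $Y$ as the Gram matrix of a lifted configuration $\bar{x}$ and argues metrically: Cauchy--Schwarz on the $2\times2$ minors of $D\succeq0$ gives $|D_{l_{ij}l_{ik}}|\le\sqrt{D_{l_{ij}l_{ij}}D_{l_{ik}l_{ik}}}$, which combined with $a_{ijk}>0$ shows each realized angle of the lifted triangle is at least the prescribed one; since both triples of angles sum to $\pi$, all three inequalities collapse to equalities and every $2\times2$ minor of $D$ supported on a triangle vanishes, after which a small completion lemma (Lemma \ref{le M_{23}}) propagates proportionality across edges. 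You instead sum the three law-of-cosines identities into the single relation $\langle C,M\rangle=0$ and use $C\succeq0$ with one-dimensional kernel spanned by the side-length vector to force $\operatorname{range}(M)\subseteq\ker C$, hence $\rank(M)\le1$; your propagation via parallel rows of a factor of $D$ is equivalent to the paper's minor-based propagation. What your route buys is notable: nothing in it actually uses acuteness. Both $C\succeq0$ and $\det C=0$ hold for every non-degenerate triangle (the $2\times2$ principal minors of $C$ are $\sin^2$ of the angles, positive whenever no angle is $0$ or $\pi$), whereas the paper's step $\theta_1\ge\arccos a_{ijk}$ genuinely needs $a_{ijk}>0$, since Cauchy--Schwarz only controls $|D_{l_{ij}l_{ik}}|$ and the inequality direction flips for an obtuse prescribed angle. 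Written out carefully, your argument therefore establishes the stronger claim left open in the remark following Theorem \ref{th D rank1}, namely rank one for any strongly non-degenerate triangulated grounded framework. The one loose end --- shared with the paper's own proof --- is excluding degenerate solutions with $D_{ll}=0$ on some edge, which is needed both to upgrade $\rank(D)\le1$ to $\rank(D)=1$ and to make the row-parallelism transitive across a shared edge; in your scheme this closes cleanly by seeding the propagation at a triangle of anchors, whose base entry of $D$ is pinned to $\|p_i-p_j\|^2>0$ by the constraint $Y_{1:2,1:2}=I_2$.
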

\begin{proof}
See Appendix.
\end{proof}

\begin{remark}
In simulation experiments, by solving the SDP formulation in (\ref{SDP Z}), all unknown sensors can always be correctly localized when $(\hat{\mathcal{G}},x)$ is strongly non-degenerate triangulated. We will make further efforts to prove this in future. However, when $(\hat{\mathcal{G}},x)$ has a non-degenerate bilateration ordering but is not triangulated, the solution to (\ref{SDP Z}) may correspond to incorrect localization results. An example (Example \ref{counterexample}) will be given in Section \ref{sec simulation}. 
\end{remark}

By virtues of Theorems \ref{th rank}, \ref{th D rank1} and Lemma \ref{le universal fixability}, the following result is derived.

\begin{theorem}\label{th equivalence}
Given a sensor network $\mathbf{N}=(\hat{\mathcal{G}},x,\mathcal{A})$, if $(\hat{\mathcal{G}},x)$ contains an acute-triangulated subframework, and anchors are not all collinear, then\\
	(i) (\ref{qcqp}), (\ref{first conversion}), (\ref{second conversion}) and (\ref{SDP Z}) are all equivalent;\\
    (ii) (\ref{SDP Z}) has a unique solution with rank 3.
\end{theorem}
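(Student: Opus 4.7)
The plan is to deduce the theorem by combining Theorems \ref{th rank} and \ref{th D rank1} with Lemmas \ref{le universal fixability}, \ref{le d+1 to equivalence}, and \ref{le localizability=fixability}, using the acute-triangulated subframework to supply the two structural ingredients (rank-$1$ $D$ and dimension-invariant angle fixability) that each of these results requires.

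For part (i), I observe that the successive reformulations (\ref{qcqp}) $\to$ (\ref{first conversion}) $\to$ (\ref{second conversion}) $\to$ (\ref{SDP Z}) are obtained by algebraic lifts in which the coordinate variables $x$ and distance variables $\tilde d$ are absorbed into the Gram-type matrices $Y$, $D$, and $Z$. The ``forward'' direction is therefore immediate: any feasible $(x,\tilde d)$ in (\ref{qcqp}) produces a feasible $Z$ in (\ref{SDP Z}) satisfying $\rank(Y)=2$, $\rank(D)=1$, and hence $\rank(Z)=3$. The nontrivial content is the converse direction, which, by Lemma \ref{le d+1 to equivalence}, will follow as soon as I show that every solution to (\ref{SDP Z}) has $\rank(Z)=3$.

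To establish rank $3$, I would verify the hypotheses of Theorems \ref{th D rank1} and \ref{th rank} in turn. Since $(\hat{\mathcal{G}},x)$ contains an acute-triangulated subframework, Theorem \ref{th D rank1} forces $\rank(D)=1$ (the constraints of the subframework are a subset of the constraints in (\ref{SDP Z}), so the rank argument in Theorem \ref{th D rank1} still applies, and any edges of $\hat{\mathcal{E}}$ outside the subframework have their $D$-entries pinned down via the distance equalities $\langle\Psi_{ij},Z\rangle=\langle\bar\Psi_{ij},Z\rangle$ together with $D\succeq 0$). Moreover, any acute-triangulated framework admits a non-degenerate bilateration ordering, so Theorem \ref{th angle fixable} makes it angle fixable and Lemma \ref{le universal fixability} makes its angle fixability dimension-invariant; both properties transfer to $(\hat{\mathcal{G}},x)$ because adding edges preserves the uniqueness of the configuration up to the trivial group. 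Together with the non-collinear anchor hypothesis, both premises of Theorem \ref{th rank} are satisfied, yielding $\rank(Z)=3$. Invoking Lemma \ref{le d+1 to equivalence} then closes part (i).

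For part (ii), uniqueness follows from angle localizability of $\mathbf{N}$. Lemma \ref{le localizability=fixability} applies because $(\hat{\mathcal{G}},x)$ is angle fixable (inherited from the subframework) and anchors are not collinear by assumption, so (\ref{qcqp}) has a unique solution $X$; via the rank-$3$ correspondence just established, this $X$ is matched by a unique $Z$ solving (\ref{SDP Z}). The main obstacle I anticipate is the first sub-step above, namely, extending $\rank(D)=1$ from the edges inside the acute-triangulated subframework to the full edge set $\hat{\mathcal{E}}$. Edges not belonging to any acute triangle of the subframework are not directly controlled by Theorem \ref{th D rank1}, so the cleanest route is to first conclude from the subframework that $Y$ has rank $2$, then use the distance constraint to fix $D_{l_{ij}l_{ij}}$ for each extra edge, and finally use the remaining angle constraints combined with $D\succeq 0$ to rule out off-diagonal rank inflation in $D$.
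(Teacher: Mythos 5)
Your derivation follows exactly the route the paper itself takes: the paper's entire proof of this theorem is the single remark that it holds ``by virtues of Theorems \ref{th rank}, \ref{th D rank1} and Lemma \ref{le universal fixability}'', combined implicitly with Lemma \ref{le d+1 to equivalence} for the equivalence and Lemma \ref{le localizability=fixability} for uniqueness, which is precisely the chain you assemble. You are in fact more careful than the paper: the extension of $\rank(D)=1$ from the acute-triangulated subframework to all of $\hat{\mathcal{E}}$ --- which you correctly single out as the delicate step and sketch via the determined diagonal entries of $D$ plus the positive-semidefinite completion argument of Lemma \ref{le M_{23}} --- is left entirely implicit in the paper.
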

   
The conclusions in Theorem \ref{th equivalence} imply that ASNL is angle localizable, and QCQP (\ref{qcqp}) is equivalent to its SDP relaxation, thus can be efficiently solved within polynomial time.

To clearly demonstrate the results stated in Theorems \ref{th rank}, \ref{th D rank1} and \ref{th equivalence}, we summarize the relationships between different conditions for $(\hat{G},x)$ and relaxation results in Fig. \ref{fig relax relationships}. Here we assume that anchors are always not all collinear.
\begin{figure}
	\centering
	\includegraphics[width=7cm]{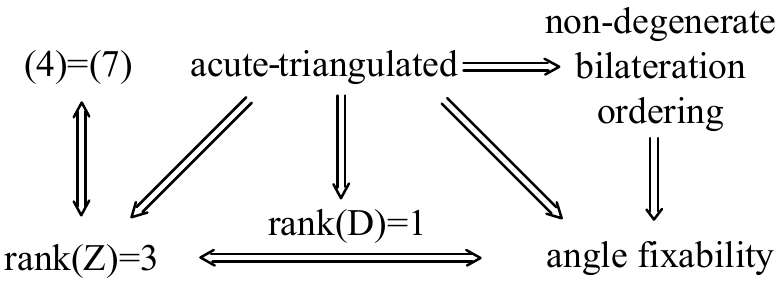}
	\caption{The relationships between different conditions for the grounded framework and relaxation results. Anchors are always considered to be not all collinear.}
	\label{fig relax relationships}
\end{figure}

\subsection{Decomposition for Large-Scale ASNL}

The formulation in (\ref{SDP Z}) is a standard linear SDP, thereby can be globally solved by the interior-point methods in polynomial time. However, when the dimension of $Z$ is very large, due to high computational costs of considering the positive semi-definite constraint, existing interior-point algorithms may be unable to find the solution within reasonable computational time. Note that large-scale networks are ubiquitously encountered in practice. From intuitive observation, when the sensor network is of large scale, matrices $\Phi_{ijk}$, $\bar{\Phi}_{ijk}$, $\Psi_{ij}$ and $\bar{\Psi}_{ij}$ in (\ref{SDP Z}) are usually large and sparse. In this subsection, we will recognize some special features of ASNL and transform a large and sparse ASNL to a linear SDP with multiple semi-definite cone constraints for smaller-sized matrices. 

It is observed that compared to (\ref{SDP Z}), the problem in (\ref{second conversion}) has a smaller size and fewer constraints. Therefore, we will focus on problem (\ref{second conversion}) directly. By Theorem \ref{th rank}, the original ASNL (\ref{qcqp}) is equivalent to the following SDP:
\begin{equation}\label{YD optimization}
\begin{split}
\min\limits_{Y,D}&~~~~~~~~~~ 0\\
\text{s.t.} ~~\langle A_i, Y \rangle+&\langle B_i,D\rangle=c_i, ~~~~i=1, ..., s,\\
Y,D\succeq &0,~\rank(D)=1,\\
\end{split}
\end{equation}
where $A_i\in\mathbb{R}^{(n_s+2)\times(n_s+2)}$, $B_i\in\mathbb{R}^{m\times m}$, $s=|\mathcal{T}_{\hat{\mathcal{G}}}|+|\hat{\mathcal{E}}|+4$. 

\subsubsection{Decomposition for $Y$}
Due to the definition of the inner product for matrices, an entry in $Y$, e.g., $Y_{ij}$, is constrained by some equality constraint if and only if the entry in the same position of some $A_i$ is nonzero. Let $A=\sum_{i=1}^s\abs(A_i)$, where $\abs(A_i)$ is the element-wise absolute operation. Then $Y_{ij}$ is constrained by some equality constraint if and only if $A_{ij}\neq0$. Here we use $E(A)$ to denote the sparsity pattern of $A$, where
$$
E(A)=\{(i,j)\in V(A)\times V(A):A_{ij}\neq0,i\neq j\},
$$
$V(A)=\{1,...,n_s+2\}$. Denote graph $\mathcal{G}(A)=(V(A),E(A))$, we have the following result.

\begin{theorem}\label{th A is chordal}
If $\hat{\mathcal{G}}$ has a bilateration ordering and for any $i\in\mathcal{A}$ connecting to some sensor $k\in\mathcal{S}$, there exists another anchor $j\in\mathcal{A}$ such that $(p_i-p_j)_x(p_i-p_j)_y\neq0$, then graph $\mathcal{G}(A)$ is chordal.
\end{theorem}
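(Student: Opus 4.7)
The plan is to exhibit a perfect elimination ordering (PEO) of $\mathcal{G}(A)$ derived from the bilateration ordering of $\hat{\mathcal{G}}$. Since the anchor subgraph of $\hat{\mathcal{G}}$ is complete, I would first choose a bilateration ordering that lists the anchors before the unknown sensors $s_1,\dots,s_{n_s}$, with each $s_j$ having its bilateration predecessor set $P(s_j)$ of size at least two. The candidate PEO on the $(n_s+2)$-vertex graph $\mathcal{G}(A)$ is then the reverse of $(1,2,v_{s_1},\dots,v_{s_{n_s}})$, and verifying it amounts to showing that, for every $j$, the back-neighborhood $N_{\mathcal{G}(A)}(v_{s_j})\cap\{1,2,v_{s_1},\dots,v_{s_{j-1}}\}$ forms a clique in $\mathcal{G}(A)$.

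To make this concrete, I would first catalog the edges of $\mathcal{G}(A)$ by inspecting the sparsity patterns of the matrices $Q_{ij}$ and $Q_{ijk}$ in (\ref{second conversion}): (i) $(1,2)$ is an edge from the identity constraint on $Y_{1:2,1:2}$; (ii) $(v_u,v_v)$ is an edge for $u,v\in\mathcal{S}$ iff $u,v$ are adjacent in $\hat{\mathcal{G}}$ or share a common neighbor, with the latter contribution produced by the triangle matrices $Q_{w,u,v}$; and (iii) $(1,v_u)$ and $(2,v_u)$ are edges whenever $u\in\mathcal{S}$ is adjacent to some anchor $i\in\mathcal{A}$. The theorem's hypothesis enters precisely at (iii): choosing the anchor $j$ with $(p_i-p_j)_x(p_i-p_j)_y\neq 0$ guarantees that the triangle $Q_{i,u,j}$ supplies nonzero entries in both the $(1,v_u)$ and $(2,v_u)$ positions, with coefficients $-(p_i-p_j)_x$ and $-(p_i-p_j)_y$.

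The bilateration structure supplies the key witness for the clique property: every pair of vertices in $P(s_j)$ shares $s_j$ as a common neighbor in $\hat{\mathcal{G}}$, so the triangles $Q_{s_j,\cdot,\cdot}$ directly yield the required edges between the corresponding vertices in $\mathcal{G}(A)$ (pairs of sensor predecessors give sensor-vertex edges; anchor predecessors, via the hypothesis, give the edges to $\{1,2\}$, and the pair $(1,2)$ is already present). Thus the portion of the back-neighborhood contributed by $P(s_j)$ automatically lies in a clique. The main obstacle will be the remaining back-neighbors of $v_{s_j}$ outside $P(s_j)$: a sensor $s_i$ with $i<j$ can still be a back-neighbor of $v_{s_j}$ because it shares a common neighbor $w$ with $s_j$ through a vertex that may have been added after $s_j$ in the bilateration. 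For these I would argue, by a case analysis on the type of $w$ (anchor, earlier sensor, or later sensor) and an inductive descent on the bilateration order, that such $s_i$ is joined in $\mathcal{G}(A)$ to every other back-neighbor of $v_{s_j}$; coordinating this combinatorial argument with the geometric hypothesis on anchor coordinate differences is where I expect the technical effort to concentrate.
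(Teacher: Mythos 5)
Your reading of the sparsity pattern in your second paragraph is correct (and more careful than the paper's own), but the step you flag as ``the main obstacle'' is not merely where the technical effort concentrates --- it is where the argument genuinely breaks, because the clique property you need is false. Two sensor vertices are adjacent in $\mathcal{G}(A)$ exactly when the corresponding sensors are at distance at most $2$ in $\hat{\mathcal{G}}$ (an angle matrix $Q_{w,u,v}$ couples $Y$ at the position of the pair $(u,v)$ through the term $x_u^{\top}x_v$), so the sensor part of $\mathcal{G}(A)$ is essentially the square of $\hat{\mathcal{G}}$, and squares of bilateration graphs need not be chordal. Concretely, take anchors $1,2,3$ and add sensors in the order $4,5$ adjacent to $\{1,2\}$; $6,7$ adjacent to $\{2,3\}$; $8$ adjacent to $\{4,5\}$; $9$ adjacent to $\{6,7\}$; $10$ adjacent to $\{4,6\}$; $11$ adjacent to $\{5,7\}$. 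This is a bilateration ordering, yet $N(8)\cap N(9)=\varnothing$ and $N(10)\cap N(11)=\varnothing$ while the pairs $(8,10),(10,9),(9,11),(11,8)$ each share a common neighbor; since none of $8,9,10,11$ is adjacent to an anchor, the corresponding four vertices form a chordless $4$-cycle in $\mathcal{G}(A)$. Hence no perfect elimination ordering exists, and no case analysis on the type of the common neighbor $w$, nor any use of the anchor-coordinate hypothesis, can rescue the back-neighborhood clique property. Your plan cannot be completed as stated.

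For comparison, the paper's proof takes a different and also unsound route: it asserts the much stronger property that $(i,j),(i,k)\in E(A)$ implies $(j,k)\in E(A)$ (which would make $\mathcal{G}(A)$ a disjoint union of cliques), and its three cases verify this only after silently replacing ``adjacent in $\mathcal{G}(A)$'' by ``adjacent in $\hat{\mathcal{G}}$''. For instance, Case 3 forms $(f_{i'}-f_{j'})(f_{i'}-f_{k'})^{\top}$, which contributes to $A$ only if $(i',j'),(i',k')\in\hat{\mathcal{E}}$, whereas $(i,j)\in E(A)$ only gives $\hat{\mathcal{G}}$-distance at most $2$; Case 2 likewise assumes a single anchor adjacent to both sensors when only adjacency to possibly different anchors is available. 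What the paper's computation actually establishes is the edge catalogue you derived, not chordality. So the honest assessment is: your proposal is incomplete at exactly the point you identified, the paper's proof fails at the corresponding point, and the statement itself appears to need a hypothesis stronger than a bilateration ordering before either strategy can go through.
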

\begin{proof}
	See Appendix.
\end{proof}

The condition in Theorem \ref{th A is chordal} implies that the edges between several pairs of anchors are not parallel to both $x$-axis and $y$-axis of the global coordinate frame. Note that $\{q\in\mathbb{R}^{2n_a}:(q_i-q_j)_x(q_i-q_j)_y=0,\,i,j\in\{1,...,n_a\}\}$ is of measure zero. That is, a randomly generated network satisfies the condition in Theorem \ref{th A is chordal} with probability 1. 

When graph $\mathcal{G}(A)$ is chordal, by Lemma \ref{le chordal decomposition}, the constraint $Y\succeq0$ in (\ref{YD optimization}) can be replaced by positive semidefinite constraints $Y_i=Q_{\mathcal{C}_i}YQ_{\mathcal{C}_i}^{\top}\succeq0$, where $Y_i\in\mathbb{R}^{|\mathcal{C}_i|\times|\mathcal{C}_i|}$, $\mathcal{C}_i$ is the set of vertices corresponding to the $i$-th maximal clique of $\mathcal{G}(A)$. However, if sensor $k\in\mathcal{S}$ has no anchor neighbors, it must hold that $A_{1,k'}=A_{2,k'}=0$, where $k'=k-n_a+2$. As a result, $Y_{1k'}$ and $Y_{2k'}$ are not constrained in the converted optimization problem. In the ASNL problem, we hope to find $X=Y_{1:2,3:n_s+2}=(x_{n_a+1},...,x_{n_a+n_s})$, which contains position information of all unknown sensors. If $Y_{1k'}$ and $Y_{2k'}$ are not constrained, then we cannot obtain the correct position of sensor $k$ by solving the decomposed optimization problem directly. In \cite{Zheng19}, to obtain the solution to the original undecomposed problem, a positive semi-definite matrix completion problem\footnote{A matrix completion problem is to recover missing entries of a matrix from a set of known entries \cite{Grone84}.} should be addressed. In this paper, to avoid solving the matrix completion problem, we extend graph $\mathcal{G}(A)$ by adding edges such that positions of all unknown sensors can be constrained. 

Let $\bar{A}=A+\left(\begin{smallmatrix}
\mathbf{0} & \mathbf{1}_2\mathbf{1}_{n_s}^{\top}\\
\mathbf{0} & \mathbf{0}
\end{smallmatrix}\right)$, and decompose matrix $Y$ according to the sparsity pattern $E(\bar{A})$. Then $Y_{1:2,3:n_s+2}$ is always constrained. By following similar lines to proofs of Theorem \ref{th A is chordal}, the following result can be obtained.

\begin{lemma}
If $\hat{\mathcal{G}}$ has a bilateration ordering, then $\mathcal{G}(\bar{A})=(\mathcal{V}(\bar{A}),\mathcal{E}(\bar{A})$ is chordal.
\end{lemma}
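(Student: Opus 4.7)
The plan is to adapt the inductive argument used for Theorem \ref{th A is chordal} to the enlarged matrix $\bar{A}$. The crucial observation is that $\mathcal{G}(\bar{A})$ differs from $\mathcal{G}(A)$ only in having both vertices $1$ and $2$ adjacent to every sensor index in $\{3,\ldots,n_s+2\}$: these extra edges come from the $\mathbf{1}_2\mathbf{1}_{n_s}^{\top}$ block. Thus vertices $1$ and $2$ become universal on the sensor portion of $\mathcal{G}(\bar{A})$, which is precisely the coupling property that the non-axis-aligned anchor condition of Theorem \ref{th A is chordal} was enforcing in a weaker form; here it holds automatically, with no positional assumption on anchors.

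Next I would induct on the bilateration ordering of $\hat{\mathcal{G}}$ and build a perfect elimination ordering of $\mathcal{G}(\bar{A})$ that eliminates the sensor indices in reverse bilateration order and places $1$ and $2$ at the end. The base case is the initial triangle $\hat{\mathcal{G}}_3$ of the bilateration ordering; the corresponding subgraph of $\mathcal{G}(\bar{A})$ has at most five indices and is chordal by direct inspection. For the inductive step, when a new vertex $v$ is inserted with edges to at least two non-collinear previous vertices $u_1,u_2$, the new edge constraints $(v,u_i)$ and the new angle constraint $(v,u_1,u_2)\in\mathcal{T}_{\hat{\mathcal{G}}}$ put nonzero entries in $A$ that couple the index of $v$ to the indices of $u_1,u_2$ (or to $1,2$ if some $u_i$ is an anchor). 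I would verify that the neighborhood of the new index in the updated $\mathcal{G}(\bar{A})$ is a clique in the previously built chordal graph, so that the new index is simplicial and the elimination ordering extends.

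The main obstacle is this clique verification, which splits into cases according to whether $u_1,u_2$ are both sensors, one sensor and one anchor, or both anchors; in each case the index for $v$ is joined to a subset of $\{1,2\}\cup\{\text{indices of }u_1,u_2\}$, and the universality of $1,2$ supplied by $\bar{A}$ together with the edges induced by the angle constraint $(v,u_1,u_2)$ make this neighborhood a clique. The subcase $v\in\mathcal{A}$ does not introduce a new index into $\mathcal{G}(\bar{A})$, but one must still check that the induced new edges do not create chordless cycles; these edges are always incident to $1$ or $2$, which are universal on sensor indices, so any cycle they form admits a chord. These verifications mirror those for Theorem \ref{th A is chordal}, with the axis-alignment hypothesis replaced throughout by the universal adjacency furnished by $\bar{A}$.
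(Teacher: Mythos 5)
There is a genuine gap, and it sits exactly at the clique verification you flag as "the main obstacle." When a sensor $v$ is appended with edges to $u_1,u_2$, the new constraints are not only the edge constraints $(v,u_i)$ and the angle with apex $v$: for every old edge $(u_1,w)$ the new angle constraint with apex $u_1$ between $(u_1,v)$ and $(u_1,w)$ contributes a matrix $Q_{u_1wv}$ whose entry in the row of $v$ and the column of $w$ equals $\pm\tfrac12$. Consequently, writing $\tilde{w}$ for the index of sensor $w$ in $\mathcal{G}(\bar{A})$, the vertex $\tilde{v}$ is adjacent to $\tilde{w}$ for \emph{every} sensor $w$ within graph distance $2$ of $v$ in $\hat{\mathcal{G}}$, not merely to $1,2$ and the indices of $u_1,u_2$. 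Two sensors reached through the two different legs $u_1$ and $u_2$ need not be within distance $2$ of one another, so this neighborhood is in general not a clique and the vertex you want to eliminate first is not simplicial. Concretely, take the anchor triangle $a_1a_2a_3$, grow one branch $u_1\sim a_1,a_2$; $u_2\sim u_1,a_1$; $u_3\sim u_1,u_2$; $u_4\sim u_2,u_3$, a second branch $v_1\sim a_2,a_3$; $v_2\sim v_1,a_3$; $v_3\sim v_1,v_2$; $v_4\sim v_2,v_3$, and finish with $t\sim u_4,v_4$: this is a bilateration ordering, yet $\tilde{t}$ is adjacent to both $\tilde{u}_3$ (through the apex $u_4$) and $\tilde{v}_3$ (through the apex $v_4$), while $u_3$ and $v_3$ are at distance at least $3$ in $\hat{\mathcal{G}}$, so $\tilde{u}_3$ and $\tilde{v}_3$ are not adjacent. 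Your elimination ordering therefore cannot even take its first step. The difficulty is not an artifact of the chosen ordering: in the same example $\tilde{u}_1,\tilde{v}_1,\tilde{v}_4,\tilde{u}_4$ induce a chordless $4$-cycle ($\tilde{u}_1\sim\tilde{v}_1$ via the apex $a_2$, $\tilde{u}_4\sim\tilde{v}_4$ via the apex $t$, $\tilde{u}_1\sim\tilde{u}_4$ and $\tilde{v}_1\sim\tilde{v}_4$ within each branch, while $u_1,v_4$ and $v_1,u_4$ are at distance at least $3$). A smaller omission of the same kind: since $1$ and $2$ lie in the neighborhood of every sensor index, every simpliciality check also needs $(1,2)\in E(\bar{A})$, which the ones block does not supply and which you never verify.

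For comparison, the paper does not argue via a perfect elimination ordering at all. Its proof of this lemma is by reference to the argument for Theorem \ref{th A is chordal}, which aims to establish the much stronger closure property that whenever $(i,j)$ and $(i,k)$ are edges of the sparsity graph, so is $(j,k)$ — a property that would make every connected component a clique and render chordality immediate, and whose case analysis is precisely where the distance-$2$ couplings above must be confronted. If you want to repair your induction, you must track the full distance-$\le 2$ neighborhood of each added sensor and identify a hypothesis on $\hat{\mathcal{G}}$ (e.g., a triangulated structure, under which any two sensors sharing a distance-$2$ witness also share one another within distance $2$) that makes that neighborhood a clique; the bilateration ordering alone does not give you this.
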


Together with Lemma \ref{le chordal decomposition}, we have the following decomposition law.

\begin{theorem}\label{th decomposeY}
If $\hat{\mathcal{G}}$ has a bilateration ordering, then $Y\succeq0$ is equivalent to $Y_i=Q_{\mathcal{C}_i(\bar{A})}YQ_{\mathcal{C}_i(\bar{A})}^{\top}\succeq0$, where $\mathcal{C}_i(\bar{A})$ is the set of vertices corresponding to the $i$-th maximal clique of graph $\mathcal{G}(\bar{A})$.
\end{theorem}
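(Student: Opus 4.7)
The plan is to combine the chordality result for $\mathcal{G}(\bar{A})$ established in the lemma immediately preceding this theorem with the general clique-decomposition principle for PSD matrices recorded in Lemma \ref{le chordal decomposition}. Since the hypothesis ``$\hat{\mathcal{G}}$ has a bilateration ordering'' is precisely the hypothesis of that lemma, the proof is essentially a one-line composition of the two results.

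First I would invoke the preceding lemma to conclude that $\mathcal{G}(\bar{A})$ is chordal. This step is purely graph-theoretic and already resolved; no work on $Y$ itself is needed here. I would then note that the vertex set $V(\bar{A})$ coincides with the index set $\{1,\dots,n_s+2\}$ of rows/columns of $Y$, so $Y$ is a symmetric matrix indexed exactly by the vertices of the chordal graph $\mathcal{G}(\bar{A})$.

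Next I would apply Lemma \ref{le chordal decomposition} directly with $X=Y$ and $\mathcal{G}=\mathcal{G}(\bar{A})$: because the graph is chordal and $\{\mathcal{C}_i(\bar{A})\}$ is the collection of its maximal cliques, we obtain the equivalence between $Y\succeq 0$ and the simultaneous positive semidefiniteness of the principal submatrices $Y_i=Q_{\mathcal{C}_i(\bar{A})}YQ_{\mathcal{C}_i(\bar{A})}^{\top}$. This is precisely the statement of the theorem.

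The only subtlety worth flagging is not in the proof itself but in why the decomposition is legitimate to use inside the SDP (\ref{YD optimization}): the off-support entries of $Y$ are not pinned down by any constraint, so replacing $Y\succeq 0$ by the clique-wise PSD constraints effectively reformulates the feasibility of $Y\succeq 0$ as a PSD-completion problem, to which Lemma \ref{le chordal decomposition} is tailored. The passage from $A$ to $\bar{A}=A+\bigl(\begin{smallmatrix}\mathbf{0} & \mathbf{1}_2\mathbf{1}_{n_s}^{\top}\\ \mathbf{0} & \mathbf{0}\end{smallmatrix}\bigr)$ is the technical device that ensures every entry of the position block $X=Y_{1:2,3:n_s+2}$ lies in some maximal clique of $\mathcal{G}(\bar{A})$, so that solving the decomposed problem still recovers $X$ rather than leaving part of it unconstrained. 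Since this augmentation only adds edges (preserving chordality, as the preceding lemma guarantees), no further argument is required.
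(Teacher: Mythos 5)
Your proposal is correct and matches the paper's own (implicit) argument exactly: the paper derives Theorem \ref{th decomposeY} as a one-line composition of the preceding chordality lemma for $\mathcal{G}(\bar{A})$ with Lemma \ref{le chordal decomposition}. Your remark on the PSD-completion interpretation of the free off-support entries of $Y$ is a reasonable clarification but does not change the route.
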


\subsubsection{Decomposition for $D$}

Similar to $\mathcal{G}(A)$, we can obtain graph $\mathcal{G}(B)=(V(B),E(B))$, where $V(B)=\{1,...,m\}$, $E(B)$ is the aggregate sparsity pattern for $B_i$, $i=1,...,s$, $B=\sum_{i=1}^s\abs(B_i)$. Unlike $\mathcal{G}(A)$, graph $\mathcal{G}(B)$ is more sparse and can never be chordal. By observing the form of (\ref{second conversion}), one can see that only partial elements of $D$ are constrained. Although the desired $D$ should be of rank 1, since our final goal is to find $Y$, we only require all the constrained elements of $D$ to satisfy the rank 1 constraint.

\begin{theorem}\label{th decomposeD}
Suppose that $(\hat{\mathcal{G}},x)$ is acute-triangulated, the solution $Y$ to (\ref{YD optimization}) remains invariant if constraints ``$D\succeq0$" and ``$\rank(D)=1$" are relaxed to ``$D_i=Q_{\mathcal{C}_i(B)}DQ_{\mathcal{C}_i(B)}^{\top}\succeq0$", 
where $\mathcal{C}_i(B)$ is the set of vertices corresponding to the $i$-th maximal clique of graph $\mathcal{G}(B)$.
\end{theorem}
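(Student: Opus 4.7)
The easy direction---that any $(Y,D)$ feasible in (\ref{YD optimization}) remains feasible after the relaxation---is immediate, since $D\succeq 0$ implies $Q_{\mathcal{C}}DQ_{\mathcal{C}}^{\top}\succeq 0$ for every index set $\mathcal{C}$. Hence, by Theorem \ref{th equivalence}, the unique original solution $Y^{*}$ survives the relaxation and it suffices to show the converse: every $Y$ that is feasible in the relaxed problem coincides with $Y^{*}$.

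The first step is to identify $\mathcal{G}(B)$ as the line graph of $\hat{\mathcal{G}}$, because an off-diagonal nonzero at position $(l_{ij},l_{ik})$ of some $B_i$ arises precisely from an angle constraint at the shared vertex $i$. By a classical characterization, the maximal cliques of a line graph are either star cliques (all edges of $\hat{\mathcal{G}}$ incident to some common vertex) or triangle cliques (the three edges of a triangle in $\hat{\mathcal{G}}$). Next, observe that under the linear equalities of (\ref{YD optimization}) every entry of every $D_{\mathcal{C}_i(B)}$ is pinned down as a function of $Y$:
\[
D_{l_{ij},l_{ij}}=(f_i-f_j)^{\top}Y(f_i-f_j),\qquad D_{l_{ij},l_{ik}}=\frac{(f_i-f_j)^{\top}Y(f_i-f_k)}{a_{ijk}}.
\]
Therefore the relaxed problem is, in effect, a feasibility problem in $Y$ alone, subject to $Y\succeq 0$, the linear equalities, and $D_{\mathcal{C}_i(B)}(Y)\succeq 0$ for every maximal clique.

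The crux is to show that the clique-PSD constraints force the $Y$-induced geometry on each triangle of $\hat{\mathcal{G}}$ to coincide with the true geometry. Writing $b_{ij}^{2}=(f_i-f_j)^{\top}Y(f_i-f_j)$ and $c_{ijk}=(f_i-f_j)^{\top}Y(f_i-f_k)$, each triangle clique yields the matrix
\[
D_{\{i,j,k\}}(Y)=\begin{pmatrix} b_{ij}^{2} & c_{ijk}/a_{ijk} & c_{jik}/a_{jik} \\ c_{ijk}/a_{ijk} & b_{ik}^{2} & c_{kij}/a_{kij} \\ c_{jik}/a_{jik} & c_{kij}/a_{kij} & b_{jk}^{2} \end{pmatrix},
\]
whose entries are coupled by the identity $b_{jk}^{2}=b_{ij}^{2}+b_{ik}^{2}-2c_{ijk}$ and its two cyclic analogues. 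The $2\times 2$ principal minors of $D_{\{i,j,k\}}(Y)\succeq 0$ yield $|c_{ijk}|\le a_{ijk}b_{ij}b_{ik}$, which, because $a_{ijk}\in(0,1)$ by strict acuteness, is strictly sharper than the Cauchy--Schwarz bound implied by $Y\succeq 0$ alone. The plan is to combine these three quadratic inequalities at the vertices $i,j,k$ with the three law-of-cosines identities---and, when needed, with the $3\times 3$ determinant inequality---to conclude that every ratio $c_{ijk}/(b_{ij}b_{ik})$ is forced to equal $a_{ijk}$ exactly. Once this equality holds on every triangle, the triangle in the $Y$-induced metric matches the true triangle; propagation along the bilateration ordering supplied by acute-triangulation (Theorem \ref{th angle fixable}), together with the fixed anchor positions (Lemma \ref{le localizability=fixability}), upgrades this to $Y=Y^{*}$ on the whole framework.

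The main obstacle is the extremality step: passing from the PSD inequalities to the equality $c_{ijk}=a_{ijk}b_{ij}b_{ik}$. The strict acuteness $a_{ijk}\in(0,1)$ is indispensable here, since it is precisely what makes the clique-PSD inequality strictly sharper than Cauchy--Schwarz, furnishing the slack needed to propagate equality once the three inequalities around a triangle are coupled by the law of cosines; with a right or obtuse angle the argument would degenerate. I expect this step either reduces to a short algebraic argument on the system of three coupled inequalities and three identities per triangle, or can be handled by adapting the proof technique of Theorem \ref{th D rank1} (which uses the same acuteness hypothesis to force $\rank(D)=1$ globally) to the clique-submatrix setting.
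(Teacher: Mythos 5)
Your strategy is the paper's own: observe that the acuteness argument of Theorem \ref{th D rank1} only ever inspects $3\times 3$ principal submatrices of $D$ indexed by $3$-point cliques of $\mathcal{G}(B)$, so positive semi-definiteness of the maximal-clique blocks suffices. The triangle-clique extremality you leave as a ``plan'' does close exactly as you suspect, and the clean route is the angle-sum form rather than your coupled law-of-cosines system: $|c_{ijk}|\le a_{ijk}b_{ij}b_{ik}$ gives $\theta_{ijk}^{Y}\ge \arccos a_{ijk}$ at each corner, both triples of angles sum to $\pi$ (the prescribed ones because they come from a genuine triangle, the $Y$-induced ones because $Y\succeq 0$ furnishes a Euclidean realization $\bar x$ in some $\mathbb{R}^{2+r}$), hence all three inequalities are equalities.

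The genuine gap is in your final sentence. Exactness of the angles \emph{inside} every triangle does not support ``propagation along the bilateration ordering'': when a vertex $l$ is attached to an existing edge $(j,k)$, the exact shape of triangle $jkl$ locates $l$ only up to reflection across the line through $\bar x_j,\bar x_k$, and that reflection is excluded only by the angle constraints between $(j,l)$ (or $(k,l)$) and edges of \emph{other} triangles incident to $j$ (or $k$) --- exactly the star cliques of the line graph that you list and then never use. For such a pair the only constraint is $(\bar x_i-\bar x_k)^{\top}(\bar x_i-\bar x_h)=a_{ikh}D_{l_{ik}l_{ih}}$, a single equation coupling two quantities neither of which is yet determined, so it pins down nothing until you prove $D_{l_{ik}l_{ih}}=b_{ik}b_{ih}$. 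The paper supplies this with a small matrix-completion fact (Lemma \ref{le M_{23}}): in the $3\times3$ star-clique block at a vertex, once two off-diagonal entries sit at their Cauchy--Schwarz extremes --- which the triangle step delivers, since every edge at $i$ lies in some triangle with the others --- PSD-ness of the clique forces the third entry to its extreme as well. Iterating this over the triangulated structure makes \emph{every} angle constraint exact, and only then do angle fixability and the anchor positions yield $Y=Y^{*}$. Without the star-clique step the relaxed problem still admits the reflected configurations, so your argument as written does not reach the conclusion.
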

\begin{proof}
	See Appendix.
\end{proof}

\subsubsection{Decomposed ASNL}

Combining Theorems \ref{th decomposeY} and \ref{th decomposeD}, we obtain the following result.

\begin{theorem}\label{th acute decomposition}
If $(\hat{\mathcal{G}},x)$ is acute-triangulated, then the ASNL problem in (\ref{qcqp}) is equivalent to the following optimization:
\begin{equation}\label{YiDi optimization}
\begin{split}
\min\limits_{Y,D}&~~~~~~~~~~ 0\\
\text{s.t.} ~~\langle A_i,& Y \rangle+\langle B_i,D\rangle=c_i, ~~~~i=1, ..., s,\\
Y_i=&Q_{\mathcal{C}_i(\bar{A})}YQ_{\mathcal{C}_i(\bar{A})}^{\top},~Y_i\succeq 0,~~ i=1, ..., \xi,\\
D_i=&Q_{\mathcal{C}_i(B)}YQ_{\mathcal{C}_i(B)}^{\top},~D_i\succeq 0,~~ i=1, ... \zeta, \\
\end{split}
\end{equation}
where $A_i$, $B_i$ and $s$ are the same as those in (\ref{YD optimization}), $\xi$ is the number of maximal cliques of $\mathcal{G}(\bar{A})$, and $\zeta$ is the number of maximal cliques of $\mathcal{G}(B)$.
\end{theorem}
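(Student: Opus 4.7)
The plan is to view Theorem \ref{th acute decomposition} as the straightforward composition of the global equivalence established earlier (Theorem \ref{th equivalence}) with the two decomposition results for the unknown blocks $Y$ and $D$ (Theorems \ref{th decomposeY} and \ref{th decomposeD}). First, I would observe that an acute-triangulated framework is automatically strongly non-degenerate triangulated and hence possesses a non-degenerate bilateration ordering. This places us in the regime where Theorem \ref{th equivalence} applies: the QCQP (\ref{qcqp}) is equivalent to the rank-free SDP (\ref{YD optimization}), and in particular the rank-one constraint on $D$ may be retained as a \emph{consequence} rather than an imposed requirement, since Theorem \ref{th D rank1} guarantees it automatically.

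Next, I would invoke Theorem \ref{th decomposeY} under the bilateration assumption to replace the constraint $Y\succeq 0$ in (\ref{YD optimization}) by the collection of smaller positive semidefinite constraints indexed by the maximal cliques of the chordal sparsity graph $\mathcal{G}(\bar{A})$. Importantly, $\bar{A}$ was constructed precisely so that the columns of $Y$ containing the unknown positions $X=Y_{1:2,3:n_s+2}$ are always constrained, which avoids the need to solve a matrix completion problem to recover $X$ from the clique submatrices.

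In parallel, I would apply Theorem \ref{th decomposeD} to the $D$ block. Because $(\hat{\mathcal{G}},x)$ is acute-triangulated, Theorem \ref{th decomposeD} states that replacing ``$D\succeq 0$'' and ``$\rank(D)=1$'' by the semidefinite constraints on the clique submatrices $D_i = Q_{\mathcal{C}_i(B)} D Q_{\mathcal{C}_i(B)}^\top \succeq 0$ does not change the optimal $Y$ of (\ref{YD optimization}). Combining the two decompositions and noting that the linear equality constraints $\langle A_i,Y\rangle+\langle B_i,D\rangle = c_i$ remain intact (they only couple $Y$ and $D$ linearly and do not interact with the semidefinite structure), one recovers precisely the formulation (\ref{YiDi optimization}).

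The only substantive care needed is to confirm that the two decompositions may be performed simultaneously: since the feasible set of (\ref{YD optimization}) is a product of the spectrahedra for $Y$ and $D$ intersected with an affine subspace, and since Theorems \ref{th decomposeY} and \ref{th decomposeD} respectively preserve the $Y$-spectrahedron and the projection onto the $Y$-coordinate of the $D$-spectrahedron, the composed relaxation leaves the $Y$-component of the solution set unchanged. Hence any optimal $Y$ of (\ref{YiDi optimization}) yields the same $X=Y_{1:2,3:n_s+2}$ as the original ASNL, and the equivalence claimed by Theorem \ref{th acute decomposition} follows. I do not anticipate any additional obstacle beyond this bookkeeping, since all the nontrivial work has been absorbed into the two prior decomposition theorems.
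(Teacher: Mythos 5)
Your proposal is correct and follows essentially the same route as the paper, which states Theorem \ref{th acute decomposition} directly as the combination of Theorems \ref{th decomposeY} and \ref{th decomposeD} (with the equivalence of (\ref{qcqp}) and (\ref{YD optimization}) already supplied by Theorems \ref{th rank}, \ref{th D rank1} and \ref{th equivalence}). The observations you add --- that acute-triangulated implies a non-degenerate bilateration ordering so Theorem \ref{th decomposeY} applies, and that the two block decompositions can be composed without interference --- are exactly the bookkeeping the paper leaves implicit.
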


The SDP formulation in (\ref{YiDi optimization}) has been studied in the literature \cite{Sun14,Kalbat15,Zheng19}. In \cite{Kalbat15}, a distributed algorithm was proposed. In \cite{Zheng19}, an algorithm based on fast alternating direction method of multiplies (ADMM) was developed, which can be implemented distributively as well. In practice, the convergence speed for solving (\ref{YiDi optimization}) may depend on the sensing graph of a specific network. 

\begin{remark}
Theorems \ref{th decomposeY} and \ref{th decomposeD} indicate that the condition for decomposing $D$ is more demanding than that for decomposing $Y$. When $(\hat{\mathcal{G}},x)$ has a bilateration ordering but is not acute-triangulated, we can decompose the positive semi-definite constraint on $Y$ according to Theorem \ref{th decomposeY}, and decompose constraints on $D$ via the approach in \cite{You19}. In \cite{Miller19}, the authors proposed another approach for solving the rank-constrained optimization via chordal decomposition. By extending graph $\mathcal{G}(B)=(V(B),E(B))$ to a chordal graph, the approach in \cite{Miller19} can be implemented to solve ASNL.
\end{remark}

%

\subsection{ASNL in a Noisy Environment}

In practice, measurements obtained by sensors are usually inexact. In this subsection, we study ASNL in the presence of stochastic noises. 

When the angle measurements contain noises, we replace $a_{ijk}$ by $\bar{a}_{ijk}=a_{ijk}+\mathbf{n}_{ijk}$, where $a_{ijk}$ is the actual cosine value of the angle between $x_i-x_j$ and $x_i-x_k$, $\mathbf{n}_{ijk}$ denotes the measurement noise effect. Now assume that we only have $\bar{a}_{ijk}$ available, $a_{ijk}$ is an unknown variable to be determined. Let $a=(...,a_{ijk},...)^{\top}\in\mathbb{R}^{|\mathcal{T}_{\hat{\mathcal{G}}}|}$ and $\bar{a}=(...,\bar{a}_{ijk},...)^{\top}\in\mathbb{R}^{|\mathcal{T}_{\hat{\mathcal{G}}}|}$. Inspired by \cite{Simonetto14}, we model the ASNL with noise as the following likelihood maximization problem,
\begin{equation}\label{ML}
\begin{split}
\min_{x,a,\tilde{d}}~ f(a)\\
\text{s.t.}, ~~(x_i-x_j)^{\top}(x_i-x_k)&=a_{ijk}d_{ij}d_{ik},  (i,j,k)\in\mathcal{T}_{\hat{\mathcal{G}}}\\
||x_i-x_j||^2&=d_{ij}^2, ~~~~~~~~~~(i,j)\in\hat{\mathcal{E}}\\
x_i&=p_i, ~~~~~~~~~~~~~~~i\in\mathcal{A}
\end{split}
\end{equation}
where $$f(a)=-\sum_{(i,j,k)\in\mathcal{T}_{\hat{\mathcal{G}}}}\ln \mathbf{P}_{ijk}(a_{ijk}|\bar{a}_{ijk}),$$
$\mathbf{P}_{ijk}(a_{ijk}|\bar{a}_{ijk})$ is the sensing probability density function, which depends on the property of noise $\mathbf{n}_{ijk}$. When $\mathbf{P}_{ijk}(a_{ijk}|\bar{a}_{ijk})$ is a log-concave function of $a_{ijk}$, $f(a)$ is always convex. Now we simply consider the Gaussian zero-mean white noise, i.e., $\mathbf{n}_{ijk}\sim N(0,\sigma_{ijk}^2)$, the objective function becomes 
\begin{equation}\label{f(a)}
f(a)=\sum_{(i,j,k)\in\mathcal{T}_{\hat{\mathcal{G}}}}\frac{(a_{ijk}-\bar{a}_{ijk})^2}{\sigma_{ijk}^2}.
\end{equation}
Note that $(\ref{ML})$ is no longer a QCQP since $a_{ijk}$ becomes a variable. However, by introducing new variables $\mathbf{d}_{ijk}$ and constraints $\mathbf{d}_{ijk}=d_{ij}d_{ik}$, (\ref{ML}) can be converted to a QCQP again.

To convert (\ref{ML}) into an SDP with a reasonable scale, we introduce new $3\times3$ matrix variables $\Lambda_{l_{ijk}}=\lambda_{l_{ijk}}\lambda_{l_{ijk}}^{\top}$, where $\lambda_{l_{ijk}}=(a_{ijk},\mathbf{d}_{ijk},1)^{\top}\in\mathbb{R}^{3}$, similar to (\ref{YD optimization}), the noisy ASNL is equivalent to the following SDP,
\begin{equation}\label{ML SDP}
\begin{split}
\min_{Y,D,\Lambda_i} \sum_{i=1}^{|\mathcal{T}_{\hat{\mathcal{G}}}|}\langle F_i&(\bar{a}),\Lambda_i \rangle\\
\text{s.t.} ~~~~\langle A'_i,Y\rangle+\langle B'_i,D\rangle+&\sum_{j=1}^{|\mathcal{T}_{\hat{\mathcal{G}}}|}\langle C'_i,\Lambda_j\rangle=c'_i, i=1,...,s',\\
Y,D\succeq0, &~\rank(D)=1,\\
\Lambda_j(3,3)=1, &~~\Lambda_j\succeq0,\\
\rank(\Lambda_j)=1, &~~j=1,...,|\mathcal{T}_{\hat{\mathcal{G}}}|,
\end{split}
\end{equation} 
here $F_i(\bar{a})$ is determined by $f(a)$ in (\ref{f(a)}), $\Lambda_j(3,3)$ represents the element in the 3rd row and 3rd column of $\Lambda_j$, $Y\in\mathbb{R}^{(n_s+2)\times(n_s+2)}$ and $D\in\mathbb{R}^{m\times m}$ are in the same sense as those in (\ref{YD optimization}), but $A'_i\in\mathbb{R}^{(n_s+2)\times(n_s+2)}$, $B'_i\in\mathbb{R}^{m\times m}$, $c'_i\in\mathbb{R}$ and $s'$ are different from $A_i$, $B_i$, $c_i$ and $s$ in (\ref{YD optimization}). More specifically, $s'=2|\mathcal{T}_{\hat{\mathcal{G}}}|+m+4$.

Let $A'=\sum_{i=1}^s\abs(A'_i)$, $B'=\sum_{i=1}^s\abs(B'_i)$, we observe that the sparsity patterns of $A'$ and $B'$ are the same as those of $A$ and $B$ in (\ref{YD optimization}), thus semi-definite constraints for $Y$ and $D$ in (\ref{ML SDP}) can still be decomposed in the way described in the last subsection. Also the rank constraint for $D$ can be removed when $(\hat{\mathcal{G}},x)$ is acute-triangulated. Efficient algorithms for solving (\ref{ML SDP}) can be found in \cite{You19,Miller19}. If we simply ignore rank constraints, the resulting ASNL relaxation is a linear SDP, which can be solved by an SDP solver, e.g., CVX \cite{Grant09}, directly.

\section{Distributed ASNL via Inter-Sensor Communications}\label{sec distributed ASNL}

Solving ASNL in a centralized manner requires all sensors to transmit information to a unified central unit, which generates high computation and communication load in practice. Although the algorithms in \cite{You19,Miller19,Kalbat15,Zheng19} can solve decomposed ASNL in a distributed fashion, all the required data should be collected in a central unit beforehand. Moreover, the algorithms in \cite{You19,Miller19,Kalbat15,Zheng19} cannot be distributively executed by assigning each subtask to a sensor node.

In this section, we propose a distributed algorithm for ASNL, where each sensor computes its own position by using only local information obtained from its neighbors. The sensing measurements between neighboring sensors are still relative bearings in their own local coordinate frames. Similar to most of the existing distributed optimization references, we assume that each sensor is able to communicate with its neighbors. Note that it would be impossible for a sensor to localize itself if it has no access to exact positions of neighbors.

Distributed bearing-based localization algorithms in \cite{Trinh18,Li19} can also be implemented when relative bearing measurements are measured in local coordinate frames. However, they require the sensors to cooperatively obtain bearing measurements in a unified coordinate frame via frequent inter-sensor communications. In contrast, our distributed protocol only requires finite time inter-sensor communications, thereby saves a significant amount of communication costs.

\subsection{Bilateration Localization}\label{subsec: BL}

Given an angle localizable sensor network $(\hat{\mathcal{G}},x,\mathcal{A})$, in which all the sensors have been localized. Now we show that after placing a new sensor $k$ being a common neighbor of $i$ and $j$ in the network such that $x_i-x_k$ and $x_j-x_k$ are not collinear, $x_k$ can be uniquely determined by two angles subtended at $i$ and two angles subtended at $j$. An example is shown in Fig. \ref{determinek}. Note that since the sensor network is angle localizable, $(\hat{\mathcal{G}},x)$ is angle fixable. Then both $i$ and $j$ must have two neighboring nodes not lying collinear. Without loss of generality, let $i_1$ and $i_2$ be the two neighbors of $i$, $j_1$ and $j_2$ be the two neighbors of $j$ (It is possible that $i_1$ or $i_2=j$, $j_1$ or $j_2=i$). Since sensors $i$, $i_1$ and $i_2$ have already been localized, the bearings $g_{ii_1}$ and $g_{ii_2}$ with respect to the global coordinate system can both be obtained. In addition, both $\cos\angle1$ and $\cos\angle2$ can be computed by sensor $i$ using bearings $g^i_{ii_1}$, $g^i_{ii_2}$ and $g^i_{ik}$ measured in its local coordinate frame. Let $g_{ik}$ be the bearing between $i$ and $k$ in the global coordinate frame, then we have
$$
g_{ii_1}^{\top}g_{ik}=g_{ii_1}^{iT}g_{ik}^i, ~~~~
g_{ii_2}^{\top}g_{ik}=g_{ii_2}^{iT}g_{ik}^i.
$$
Recall that $g_{ii_1}$ and $g_{ii_2}$ are not collinear, $g_{ik}$ can be uniquely solved. For simplicity, we denote $\mathcal{F}_g$ as the function to compute $g_{ik}$, i.e., 
\begin{equation}\label{Fg}
 g_{ik}=\mathcal{F}_g(g_{ii_1},g_{ii_2},g^i_{ii_1},g^i_{ii_2},g^i_{ik}). 
\end{equation}
Similarly, $g_{jk}$ can be obtained. It is observed that the following equations must hold
$$
\det(x_i-x_k~~ g_{ik})=0,~~~~ 
\det(x_j-x_k~~ g_{jk})=0.$$
Since $g_{ik}$ and $g_{jk}$ are linearly independent, $x_k$ can be uniquely solved. We denote $\mathcal{F}_x$ as the function to compute $x_k$, i.e., 
\begin{equation}\label{Fx}
x_k=\mathcal{F}_x(x_i,x_j,g_{ik},g_{jk}).
\end{equation}

\begin{figure}
	\centering
	\includegraphics[width=4cm]{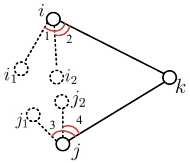}
	\caption{An example for localizing a sensor via bilateration localization.}
	\label{determinek}
\end{figure}

\subsection{A Distributed Protocol for ASNL}

To utilize the bilateration localization method in a distributed manner, we assign the tasks of solving $g_{ik}$ and $g_{jk}$ via (\ref{Fg}) to the localized sensors $i$ and $j$, respectively; and assign the task of solving (\ref{Fx}) to the unlocalized sensor $k$. We consider that each sensor has two modes: localized and unlocalized. Each anchor is in the localized mode. Only the localized sensors transmit information to their neighbors, while all sensors are always able to sense relative bearings from neighbors. As a result, each sensor is able to determine if a neighbor is in the localized mode by checking if it receives information from this neighbor. Now we propose a distributed protocol called ``Bilateration Localization Protocol (BLP)". The pseudo codes of BLP are shown in Protocol 1.

Fig. \ref{fig BLP} illustrates the procedure of localizing a sensor network by implementing BLP. The black nodes denote anchors, red nodes are sensors in localized mode, white sensors are in unlocalized mode. It is shown that all the sensors are localized at step 3. Note that the graph in Fig. \ref{fig BLP} is not only the sensing graph $\mathcal{G}$, but also the grounded graph $\hat{\mathcal{G}}$. It is important to note that if there are no links between anchors in sensing graph $\mathcal{G}$, graph $\hat{\mathcal{G}}$ remains the same but BLP is not applicable because each anchor is not able to measure relative bearings from other anchors. Hence, a condition for sensing graph $\mathcal{G}$ is required to guarantee the validity of Protocol 1.

\begin{figure}[t]
	\centering
	\includegraphics[width=7cm]{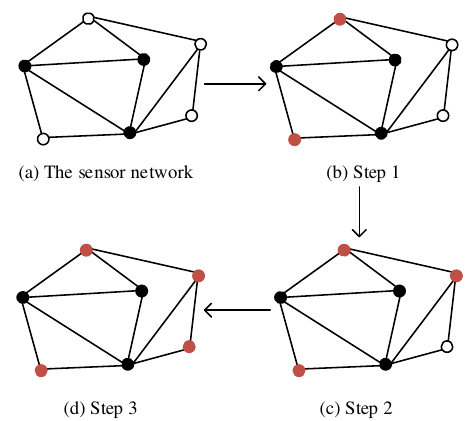}
	\caption{The localization procedure via BLP. Here the sensing graph $\mathcal{G}$ is identical to the grounded graph $\hat{\mathcal{G}}$.}
	\label{fig BLP}
\end{figure}

	Let $\mathcal{L}(t)$ and $\mathcal{U}(t)$ be the sets of sensors in localized and unlocalized mode, respectively, before BLP is implemented at step $t$, $t=0,1,2...$ . 

		Define 
\begin{multline*}
	\mathcal{L}^*(t)=\{i\in\mathcal{L}(t): \det(g_{ii_1}, g_{ii_2})\neq0~\text{for}~\text{some}\\~i_1,i_2\in\mathcal{L}(t), (i,i_1), (i,i_2)\in\mathcal{E}.\}
\end{multline*}

\begin{protocol}{The bilateration localization protocol for ASNL}\label{BLP}
 Each sensor has two modes: localized and unlocalized. Each sensor is able to transmit and receive information to/from neighbors, and sense relative bearings from neighbors in its local coordinate frame.
	
	\textbf{Sensor $i$ in the localized mode:}
	
	\textit{Available information:} Position $x_i$, position $x_j$ received from localized neighbor $j$, bearings $g^i_{ij}$, $j\in\mathcal{N}_i$ sensed from neighbors. Denote $\mathcal{N}_{il}$ and $\mathcal{N}_{iu}$ as the sets of localized and unlocalized neighbors of $i$, respectively.
	
	\textit{Protocol:}
	\begin{enumerate}
		\item \textbf{for all} $k\in\mathcal{N}_{iu}$ \textbf{do}
		\item Arbitrarily choose distinct $i_1$ and $i_2$ from $\mathcal{N}_{il}$ such that $x_i-x_{i_1}$ and $x_i-x_{i_2}$ are not collinear
		\item Compute $g_{ik}=\mathcal{F}_g(g_{ii_1},g_{ii_2},g^i_{ii_1},g^i_{ii_2},g^i_{ik})$ by solving the linear equations in (\ref{Fg})
		\item Transmit $x_i$, $g_{ik}$ to sensor $k$
		\item \textbf{end for}
		\item \textbf{for all} $k\in\mathcal{N}_{il}$ \textbf{do}    
		\item Transmit $x_i$ to sensor $k$
		\item \textbf{end for}
	\end{enumerate}	
	\textbf{Sensor $k$ in the unlocalized mode:} 
	
	\textit{Available information:} Positions $x_i$ and bearings $g_{ik}$ received from localized neighbors $i\in\mathcal{N}_{kl}$, bearings $g^k_{ik}$ sensed from neighbors $i\in\mathcal{N}_k$ in its local coordinate frame.
	
	\textit{Protocol:}
	\begin{enumerate}
		\item \textbf{If} positions from more than two neighbors received and these positions are not collinear with $x_k$ \textbf{then}
		\item Arbitrarily choose distinct $i$ and $j$ from $\mathcal{N}_{kl}$ such that $x_i-x_k$ and $x_j-x_k$ are not collinear
		\item Compute $x_k=\mathcal{F}_x(x_i,x_j,g_{ik},g_{jk})$ by solving the linear equations in (\ref{Fx})
		\item Switch to localized mode
		\item \textbf{end if}
	\end{enumerate}			
\end{protocol}

Let $\mathcal{N}_i$ be the neighbor set of sensor $i$ in the grounded graph $\hat{\mathcal{G}}$. Using the bilateration localization method in Subsection \ref{subsec: BL}, the set of sensors that will be localized at step $t$ is  
\begin{multline*}
	\Delta (t)=\{k\in\mathcal{N}_i\cap\mathcal{N}_j\cap\mathcal{U}(t): i,j\in\mathcal{L}^*(t), \\
	\text{det}(x_k-x_i, x_k-x_j)\neq0\}.
\end{multline*}

If $(\mathcal{G},x)$ has a non-degenerate bilateration ordering, and a subframework of $(\mathcal{G},x)$ with vertices in a subset of $\mathcal{L}(t)$ has a non-degenerate bilateration ordering, it always holds that $\Delta(t)\neq\varnothing$ when $\mathcal{U}(t)\neq\varnothing$. Since $|\mathcal{U}(0)|=n_s$ is finite, $\mathcal{U}(t)$ converges to a zero set in finite time. The convergence speed depends on the volume of $\Delta(t)$ at each time step $t$, which is determined by the  grounded framework. Based on the above analysis, we present the following result.

\begin{theorem}\label{th BLP}
	If $(\mathcal{G},x)$ has a non-degenerate bilateration ordering, and there exists a subframework $(\mathcal{G}_l,x_l)$ of $(\mathcal{G},x)$ containing anchors only has a non-degenerate bilateration ordering, then Protocol 1 solves ASNL within $n_s$ steps.
\end{theorem}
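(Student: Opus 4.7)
The plan is to exploit the two bilateration orderings assumed in the theorem to track the progress of Protocol 1. Specifically, I will show that the set $\Delta(t)$ of newly localizable sensors is non-empty whenever $\mathcal{U}(t)\neq\varnothing$, which combined with $|\mathcal{U}(0)|=n_s$ yields termination within $n_s$ steps.

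The key construction is to merge the two orderings. Let $a_1,\ldots,a_{n_a}$ be the anchor ordering witnessing the non-degenerate bilateration ordering of $(\mathcal{G}_l,x_l)$, and let $v_1,\ldots,v_n$ be any non-degenerate bilateration ordering of $(\mathcal{G},x)$. I construct a new ordering $\sigma$ of the vertex set of $\mathcal{G}$ by listing $a_1,\ldots,a_{n_a}$ first and then appending sensors in the order they appear in $v_1,\ldots,v_n$, skipping anchors. To verify that $\sigma$ is itself a non-degenerate bilateration ordering of $(\mathcal{G},x)$, observe that the initial block $a_1,a_2,a_3$ is a non-degenerate triangle (the only angle-fixable three-vertex framework and hence the only admissible starting configuration for Definition \ref{de af framework}), each subsequent anchor $a_i$ with $i\geq 4$ retains its two non-collinear earlier anchor neighbors, and every appended sensor $v_k$ has two non-collinear earlier neighbors $v_{j_1},v_{j_2}$ in the original ordering of $(\mathcal{G},x)$, each of which is either an anchor (placed in the initial block of $\sigma$) or a sensor with $j_\ell<k$ and hence appended to $\sigma$ before $v_k$.

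Next I would maintain, by induction on $t$, the invariant $\mathcal{L}(t)=\mathcal{L}^*(t)$. At $t=0$, the anchor bilateration ordering guarantees that every anchor has two non-collinear anchor neighbors (the triangle handles $a_1,a_2,a_3$ and the bilateration step handles each $a_i$ with $i\geq 4$). For the inductive step, any sensor added at step $t$ via Protocol 1 is, by construction, localized from two non-collinear neighbors already in $\mathcal{L}^*(t)\subseteq\mathcal{L}(t+1)$, which then witness its membership in $\mathcal{L}^*(t+1)$; all previously certifying neighbors remain in $\mathcal{L}(t+1)$ since $\mathcal{L}$ is monotone in $t$.

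With $\sigma$ and the invariant in hand, the main step is brief. Assuming $\mathcal{U}(t)\neq\varnothing$, let $w$ be the first element of $\mathcal{U}(t)$ encountered along $\sigma$. Because all $n_a\geq 3$ anchors precede $w$ in $\sigma$, $w$ occupies a position at least $n_a+1\geq 4$, so the bilateration property of $\sigma$ supplies $w$ with two non-collinear neighbors strictly earlier in $\sigma$; by the minimality of $w$, these neighbors are not in $\mathcal{U}(t)$ and hence belong to $\mathcal{L}(t)=\mathcal{L}^*(t)$, placing $w\in\Delta(t)$. Therefore $|\mathcal{U}(t)|$ strictly decreases at every step, and Protocol 1 terminates in at most $n_s$ steps. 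The principal technical hurdle is the construction of $\sigma$: once the interleaving argument is carried out and $\sigma$ is shown to be a bilateration ordering placing all anchors first, the rest of the proof reduces to a short induction and a one-line minimality argument.
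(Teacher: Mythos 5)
Your high-level plan is reasonable, and two pieces of it are correct: the invariant $\mathcal{L}(t)=\mathcal{L}^*(t)$ (anchors certify each other via the anchor ordering, and every newly localized sensor is certified by the two neighbors that localized it), and the final minimality argument, which does deliver $w\in\Delta(t)$ \emph{provided} $\sigma$ really is a non-degenerate bilateration ordering of $(\mathcal{G},x)$ listing all anchors first. The gap is precisely there: concatenating the anchor ordering with the sensors in their original relative order does not, in general, produce a bilateration ordering. Your verification rests on the claim that ``every appended sensor $v_k$ has two non-collinear earlier neighbors in the original ordering,'' but that claim is false for the three vertices of the \emph{initial triangle} of the original ordering, and nothing in the hypotheses forces those three vertices to be anchors. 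Concretely, take anchors $a_1,a_2,a_3$ and sensors $s_1,s_2,s_3,s_4$ in generic position with edges $s_1s_2$, $s_1s_3$, $s_2s_3$, $s_4s_1$, $s_4s_2$, $a_1s_3$, $a_1s_4$, $a_2s_3$, $a_2s_4$, $a_1a_2$, $a_1a_3$, $a_2a_3$. The ordering $s_1,s_2,s_3,s_4,a_1,a_2,a_3$ is a non-degenerate bilateration ordering of $(\mathcal{G},x)$ and the anchor triangle is one for $(\mathcal{G}_l,x_l)$, so the hypotheses hold; but in your $\sigma=a_1,a_2,a_3,s_1,s_2,s_3,s_4$ the sensor $s_1$ is adjacent only to $s_2,s_3,s_4$, all of which come \emph{after} it, so $\sigma$ is not a bilateration ordering, and $s_1$ --- the first element of $\mathcal{U}(0)$ along $\sigma$ --- is not in $\Delta(0)$ (here $\Delta(0)=\{s_3,s_4\}$). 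Protocol 1 still succeeds on this network, but by localizing $s_3,s_4$ before $s_1,s_2$, i.e., not in the order of $\sigma$.

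What your argument actually requires is the existence of \emph{some} non-degenerate bilateration ordering of $(\mathcal{G},x)$ that begins with the anchors; that is essentially the substance of the theorem and cannot be obtained by interleaving the two given orderings. The paper does not construct a merged ordering at all: it keeps the two hypotheses separate and argues directly (admittedly tersely) that as long as $\mathcal{L}(t)$ contains a subframework with a non-degenerate bilateration ordering and $(\mathcal{G},x)$ has one, $\Delta(t)\neq\varnothing$ whenever $\mathcal{U}(t)\neq\varnothing$, so $|\mathcal{U}(t)|$ strictly decreases and the protocol terminates within $n_s$ steps. To repair your proof you would need either to establish the reordering claim properly --- e.g., by a closure/greedy argument showing that a stalled nonempty $\mathcal{U}$ contradicts the existence of the original ordering, treating the case where unlocalized vertices of the initial triangle remain separately --- or to abandon the merged ordering and argue $\Delta(t)\neq\varnothing$ directly as the paper does.
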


If the sensors can be labelled such that $\{1,...,n_a\}$ is the set of anchors, for any $i>n_a$, the $i$-th vertex has exactly two neighbors with one of them being the $(i-1)$-th vertex, BLP solves ASNL by $n_s$ steps. In practice, usually the number of steps for convergence is smaller than $n_s$ because $|\Delta(t)|$ is greater than 1 for some steps.

\begin{remark}\label{re BLP accuracy}
Observe that when implementing BLP, the accuracy of localizing an unknown sensor depends on the accuracy of the information received from its localized neighbors. If the neighbors of a sensor are inaccurately localized, then this sensor will be inaccurately localized accordingly. As a result, when the sensor network is in a noisy environment, the position estimation errors will accumulate during the implementation of BLP. The later a sensor is localized, the greater error its estimated position has. In conclusion, although BLP accomplishes the localization task with a fast speed, it requires high accuracy of sensed measurements. The topic of how to design a more scalable distributed localization protocol is one of our ongoing research endeavors.
\end{remark}

\begin{remark}
When all the angle constraints are accurately obtained, the bilateration localization approach is applicable to the CASNL problem by simulating sensors' behaviors in the central unit, which has a faster speed than solving any SDP in Section \ref{sec centralized ASNL}. Moreover, since all the anchors' information can be utilized, Protocol 1 is valid as long as the network is angle localizable. The advantages of using the SDP formulation for CASNL have been explained in Section \ref{sec centralized ASNL}. 
\end{remark}

\section{Simulation Examples}\label{sec simulation}

In this section, we present four simulation examples. The first two examples show that the equivalence between ASNL (\ref{qcqp}) and the decomposed linear SDP (\ref{YiDi optimization}) holds if the grounded framework $(\hat{\mathcal{G}},x)$ is acute-triangulated, but may not hold when $(\hat{\mathcal{G}},x)$ has a non-degenerate bilateration ordering. The third case shows the ASNL solution considering noisy measurements. The fourth example demonstrates Theorem \ref{th BLP} and shows that Protocol 1 has a fast speed. The last example compares the centralized method and the distributed method for an ASNL problem with disturbed measurements. All simulation examples are run in Matlab environments using a standard desktop.

\subsection{Simulations for CASNL}

\subsubsection{Noise-Free ASNL}
\begin{example}\label{ex triangulated}
Consider a sensor network $(\hat{\mathcal{G}},x,\mathcal{A})$ with $n=30$ sensors and $n_a=3$ anchors among them randomly distributed in the unit box $[0,1]^2$, and $(\hat{\mathcal{G}},x)$ is acute-triangulated. The sensor network is shown in Fig. \ref{fig sensor network} (a). By solving the decomposed SDP (\ref{YiDi optimization}) via CVX/SeDuMi \cite{Grant09}, we obtain the results plotted in Fig. \ref{fig sensor network} (b) with the computational time being 4.4853s. It is observed that the locations of unknown sensors estimated by CVX/SeDuMi closely match the real locations, which is consistent with Theorem \ref{th equivalence} and Theorem \ref{th acute decomposition}. It is worth noting that when we solve the undecomposed SDP (\ref{SDP Z}), each sensor can still be correctly localized. But the computational time is 51.4574s. Hence, the proposed decomposition method significantly improves the computational speed.

\begin{figure}
	\centering
	\includegraphics[width=9cm]{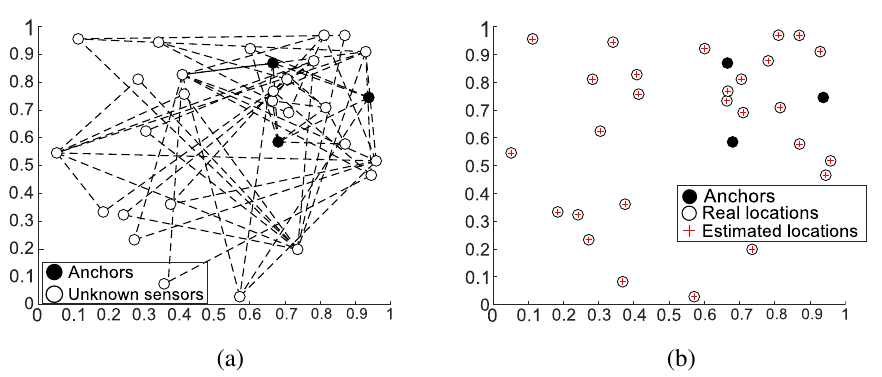}
	\caption{(a) An acute-triangulated sensor network. (b) The locations estimated by CVX/SeDuMi almost perfectly match the real locations.}
	\label{fig sensor network}
\end{figure}
\end{example}

\begin{example}\label{counterexample}
Consider a sensor network $(\hat{\mathcal{G}},x,\mathcal{A})$ randomly distributed in the unit box $[0,1]^2$, $(\hat{\mathcal{G}},x)$ has a bilateration ordering but is not triangulated, thus is angle localizable. The grounded framework is shown in Fig. \ref{fig bi sensor network} (a). The localization results obtained by solving (\ref{YiDi optimization}) via CVX/SeDuMi are depicted in Fig. \ref{fig bi sensor network} (b), from which we observe that not all unknown sensors can be localized. This is because the graphical conditions in both Theorem \ref{th equivalence} and Theorem \ref{th acute decomposition} are not satisfied. The incorrect localization result means that the solution matrix $D$ either is not positive semi-definite or has a rank greater than 1. After checking the solution, we find that $D$ is not positive semi-definite. We also tried to solve the undecomposed SDP (\ref{YD optimization}) for this example, the resulting localization results are still incorrect. This is due to the invalidity of the condition in Theorem \ref{th equivalence}, which makes the rank of $D$ greater than 1. 

\begin{figure}
	\centering
	\includegraphics[width=9cm]{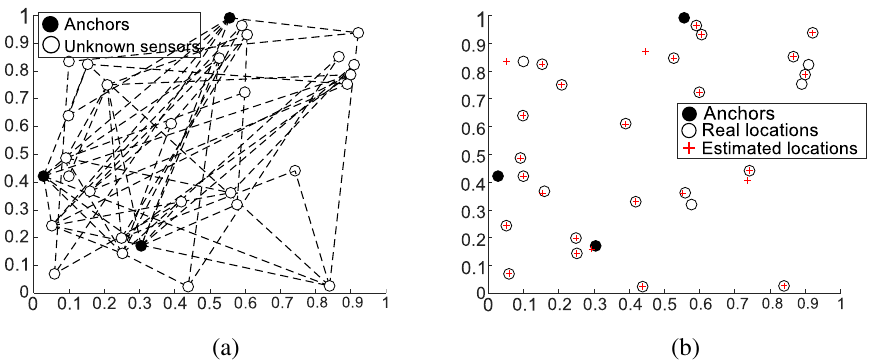}
	\caption{(a) A sensor network with a bilateration ordering but is not triangulated. (b) Several unknown sensors are incorrectly localized by solving the decomposed SDP.}
	\label{fig bi sensor network}
\end{figure}
\end{example}

\subsubsection{Noisy ASNL}
\begin{example}\label{ex noisy ASNL}
Consider a sensor network $(\hat{\mathcal{G}},x,\mathcal{A})$ with 3 anchors and 5 unknown sensors randomly distributed in the unit box $[0,1]^2$ (as shown in Fig. \ref{fig noisyASNL} (a)) suffering a Gaussian white noise, and $(\hat{\mathcal{G}},x)$ is acute-triangulated. Then the rank constraint on $D$ in (\ref{ML SDP}) can be removed. An additive zero-mean white noise with a uniform standard deviation $\sigma$ is applied to each angle measurement, i.e., $\bar{a}_{ijk}=a_{ijk}+\mathbf{n}_{ijk}$, $\mathbf{n}_{ijk}\sim N(0,\sigma^2)$. Similar to \cite{Sun17,You19}, we solve (\ref{ML SDP}) by an iterative rank minimization approach, which is to solve a series of linear SDPs as follows:

\begin{equation}\label{noisy iteration}
\begin{split}
\min_{Y,D,\Lambda_i^l,r_l}~~ \sum_{i=1}^{|\mathcal{T}_{\hat{\mathcal{G}}}|}\langle F_i(\bar{a}),\Lambda_i^l \rangle&+w_lr_l\\
\text{s.t.} ~~~~\langle A'_i,Y\rangle+\langle B'_i,D\rangle+\sum_{j=1}^{|\mathcal{T}_{\hat{\mathcal{G}}}|}\langle C'_i,&\Lambda_j^l\rangle=c'_i, i=1,...,s',\\
Y,D\succeq0, \\
r_lI_2-V_j^{lT}\Lambda_j^lV_j^l\succeq 0,\\
\Lambda_j^l(3,3)=1, ~~\Lambda_j^l\succeq0, &~~~~j=1,...,|\mathcal{T}_{\hat{\mathcal{G}}}|,
\end{split}
\end{equation} 
where $w_l=\alpha^l w_0$ is set as an increasing positive sequence, i.e., $\alpha>1$, $w_0>0$, $V_j^l=(v_{j1}^{l-1},v_{j2}^{l-1})^{\top}\in\mathbb{R}^{2\times3}$, $v_{j1}$ and $v_{j2}$ are two eigenvectors corresponding to the two smallest eigenvalues of $\Lambda_j^{l-1}$, which is obtained by solving the SDP formulation in (\ref{noisy iteration}) at step $l-1$. The initial state of each $\Lambda_j$, i.e., $\Lambda_j^0$, is obtained by solving (\ref{ML SDP}) without considering the rank constraints.

We solve a sequence of SDPs (\ref{noisy iteration}) by CVX/SeDuMi successively until $r_l<\epsilon$ at some step $l^*$, where $\epsilon$ is a positive scalar close to 0. The solution $(Y, D, \Lambda_j^l)$ to (\ref{noisy iteration}) at step $l^*$ is regarded as the solution to (\ref{ML SDP}). Note that the selections of $w_0$ and $\alpha$ are quite important for convergence of the iterative rank minimization algorithm. In \cite{You19}, a convolutional neural network (CNN) is designed to seek appropriate $w_0$ and $\alpha$.

Now we consider $\sigma=0.005$, and set $w_0=1$, $\alpha=1.3$, by solving noisy ASNL with random white noise 100 times, the localization results are depicted in Fig. \ref{fig noisyASNL} (b). We observe that the unknown sensor whose two neighbors are both anchors can be localized with a small error, while the unknown sensor with two different types of neighboring sensors is localized with a relatively larger error. 

\begin{figure}
	\centering
	\includegraphics[width=9cm]{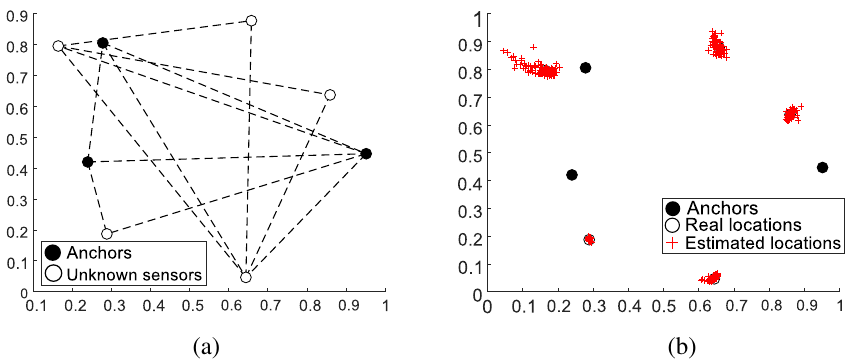}
	\caption{(a) An acute-triangulated sensor network in a noisy environment. (b) The locations of unknown sensors estimated by CVX/SeDuMi are close to their actual locations.}
	\label{fig noisyASNL}
\end{figure}
\end{example}

\subsection{Simulations for DASNL}
\begin{example}
Consider three sensor networks with 100, 500, 1000 sensors in the plane, positions of sensors are randomly generated by Matlab such that each network has a non-degenerate bilateration ordering. Moreover, each network has only 3 anchor nodes among all the sensors. By implementing the distributed protocol BLP, the three ASNL problems are solved, respectively. Fig. \ref{fig distributed} (a) shows evolution of the percentage of unlocalized sensors with respect to all unknown sensors. We observe that as the network size grows, the required number of iterations increases slowly. Fig. \ref{fig distributed} (b) depicts the history of the volume of sensors localized along each step. It is shown that during the implementation of BLP, the number of sensors localized per step increases at the beginning, and usually decreases sharply after half of total iteration steps.
	
\begin{figure}
	\centering
	\includegraphics[width=9cm]{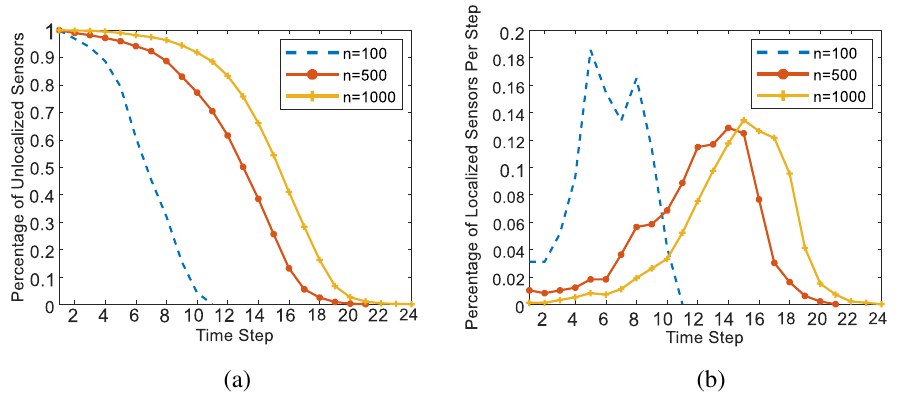}
	\caption{(a) Evolution of the percentage of unlocalized sensors. (b) Evolution of the percentage of localized sensors per step.}
	\label{fig distributed}
\end{figure}

We further tested 10 randomly generated examples with 100, 500 and 1000 sensors, respectively. In each example, there are only 3 anchors and the framework has a non-degenerate bilateration ordering. The average computational time, the average convergence step, the average time for a single step, as well as the average computational error for each case are shown in Table. \ref{tabDASNL}. Here ``CT" denotes ``Computational Time", ``CS" denotes ``Convergence Step", ``CTPS" means ``Computational Time Per Step". The computational error is computed by $\sqrt{\sum_{i=n_a+1}^{n}||x_i^*-x_i^e||^2}$, where $x_i^*$ and $x_i^e$ are the actual location and the estimated location of sensor $i$, respectively. It is observed that BLP always solves ASNL within $n_s$ steps, which is consistent with Theorem \ref{th BLP}. We also tested an example with 100 sensors where the grounded framework is acute-triangulated by solving SDP (\ref{YiDi optimization}). The computational time is 176.5689s. Therefore, the computational speed of BLP  is much faster than the centralized approach. 

\begin{table}[htbp]		
	\centering
		\fontsize{10}{10}\selectfont
		\begin{threeparttable}
			\caption{DASNL via BLP}
			\label{tabDASNL}
			\begin{tabular}{ccccc}
				\toprule			
				$n$  &    CT(sec)& CS & CTPS(sec)&Error\\
				\midrule
				100& 0.0132  &  13.9 & 0.0009&6.6961e-11\\
				500&   0.0811   & 20.8  & 0.0039&1.7837e-9\\
				1000& 0.1575 &23& 0.0068&2.4118e-8\\
				\bottomrule
			\end{tabular}
		\end{threeparttable}
\end{table}
\end{example}

\subsection{ASNL with Bounded Disturbances}

Although the distributed protocol has a fast speed, it requires high accuracy of the measurements. In this subsection, we solve ASNL with disturbances on measurements by the centralized approach and the distributed approach, respectively. Different from Example \ref{ex noisy ASNL}, the disturbances considered here are bounded. We will show that the centralized approach is more robust to unknown disturbances compared with the distributed approach.

\begin{example}\label{ex disturbance}
	
Consider a sensor network $(\hat{\mathcal{G}},x,\mathcal{A})$ with 3 anchors and 7 unknown sensors randomly distributed in the unit box $[0,1]^2$. The framework $(\mathcal{G},x)$ is considered to be acute-triangulated. Motivated by \cite{Shames12}, the disturbed local bearing measurement between sensors $i$ and $j$ measured at sensor $i$ can be written as $\bar{g}_{ij}^i=g_{ij}^i+\tau_{ij}$, where $\tau_{ij}$ is the unknown error and $||\tau_{ij}||\leq 0.01$. Then each angle constraint becomes $\bar{a}_{ijk}=a_{ijk}+\tau_{ijk}$, where $\tau_{ijk}=\tau_{ij}^Tg_{ij}+\tau_{ik}^Tg_{ik}+\tau_{ij}^T\tau_{ik}\in[-0.0201,0.0201]$. To solve ASNL via the centralized approach, we replace the equality constraints involving angles in (\ref{SDP Z}) by the following inequality constraints:
\[
\begin{split}
\langle\Phi_{ijk},Z\rangle&\geq(\bar{a}_{ijk}-0.0201)\langle\bar{\Phi}_{ijk},Z\rangle,\\
 \langle\Phi_{ijk},Z\rangle&\leq(\bar{a}_{ijk}+0.0201)\langle\bar{\Phi}_{ijk},Z\rangle, ~~(i,j,k)\in\mathcal{T}_{\hat{\mathcal{G}}}. 
\end{split}
\]
Then the actual positions of sensors must correspond to a feasible solution to (\ref{SDP Z}) with inequality constraints. When using the distributed protocol,  $\bar{g}_{ij}^i$ and $\bar{a}_{ijk}$ are directly employed as the local bearing and the angle constraint for each sensor $i$. 

In Fig. \ref{fig disturbance}, for a network satisfying conditions in both Theorems \ref{th equivalence} and \ref{th BLP}, the localization results obtained by the centralized and the distributed methods are shown, respectively. The results for both cases are obtained within $0.02$s. It is observed that the centralized approach still has high precision, but the distributed approach has a large estimation error. 

\begin{figure}
	\centering
	\includegraphics[width=9cm]{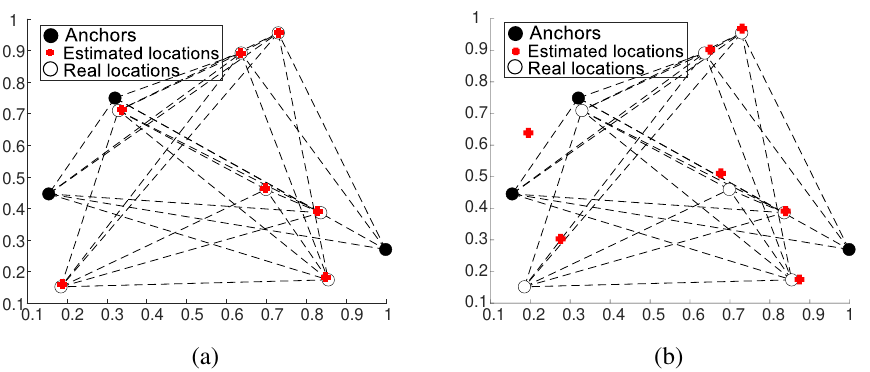}
	\caption{(a) Localization results obtained by solving (\ref{SDP Z}) with inequality constraints. (b) Localization results obtained by implementing Protocol 1.}
	\label{fig disturbance}
\end{figure}

\end{example}

\section{CONCLUSIONS}\label{sec conclusion}

This paper presented comprehensive analysis for angle-based sensor network localization (ASNL).  A notion termed angle fixability was proposed to recognize frameworks that can be uniquely determined by angles up to translations, rotations, reflections and uniform scaling. It has been proved that any framework with a non-degenerate bilateration ordering is angle fixable. The ASNL problem was shown to have a unique solution if and only if the grounded framework is angle fixable, and has been solved in centralized and distributed approaches, respectively. The CASNL in a noise-free environment was modeled as a rank-constrained SDP, which is proved to be equivalent to a linear SDP when the grounded framework is acute-triangulated. A decomposition strategy was proposed to efficiently solve large-scale ASNL problems. The CASNL in a noisy environment was studied via a maximum likelihood formulation, and was also formulated as an SDP with multiple rank constraints and semi-definite constraints. Distributed ASNL was realized by using a bilateration localization approach based on inter-sensor communications.

\section{Appendix: Proofs for Theorems \ref{th rank}, \ref{th D rank1},  \ref{th A is chordal}, \ref{th decomposeD}}

Proof of Theorem \ref{th rank}:	Sufficiency. Suppose that there is a solution $\tilde{Z}$ such that $\rank(\tilde{Z})>3$. Consider $\tilde{Y}=\left(\begin{smallmatrix}
I_2 & Y_{12}\\
Y_{12}^{\top} & Y_{22}
\end{smallmatrix}\right)$ as a part of the solution $\tilde{Z}$. Then there must hold $Y_{22}\succeq Y_{12}^{\top}Y_{12}$ and $Y_{22}\neq Y_{12}^{\top}Y_{12}$. Hence, there exists some nontrivial $Y'_{12}\in\mathbb{R}^{r\times n_s}$ such that $Y_{22}=Y_{12}^{\top}Y_{12}+Y_{12}'^{\top}Y_{12}'$. Note that given anchors' locations $P=(p_1,...,p_{n_a})\in\mathbb{R}^{2\times n_a}$, if $Y_{12}\in\mathbb{R}^{2\times n_s}$ is a feasible set of locations for sensors, then given anchors' locations $(P^{\top},\mathbf{0}_{n_a\times r})^{\top}\in\mathbb{R}^{(2+r)\times n_a}$ in $\mathbb{R}^{2+r}$, $(Y_{12}^{\top},Y_{12}'^{\top})^{\top}\in\mathbb{R}^{(2+r)\times n_s}$ is also a feasible set of locations for sensors. Note that $(Y_{12}^{\top},\mathbf{0}_{n_s\times r})^{\top}\in\mathbb{R}^{(2+r)\times n_s}$ is also a solution. Let $\hat{p}=(p_1^{\top},...,p_{n_a}^{\top})^{\top}$, $\hat{x}=(\hat{x}_1^{\top},...,\hat{x}_{n_s}^{\top})^{\top}\in\mathbb{R}^{2n_s}$, $\bar{x}=(\bar{x}_1^{\top},...,\bar{x}_{n_s}^{\top})^{\top}\in\mathbb{R}^{2n_s}$, $\hat{x}_i$ be the $(i+n_a)$-th column of $(Y_{12}^{\top},\mathbf{0}_{n_s\times r})^{\top}$ and $\bar{x}_i$ be the $(i+n_a)$-th column of $(Y_{12}^{\top},Y_{12}'^{\top})^{\top}$ for $i\in\mathcal{S}$. The condition $\rank(D)=1$ implies that $(\hat{p}^{\top}, \hat{x}^{\top})^{\top}$ and $(\hat{p}^{\top},\bar{x}^{\top})^{\top}$ are two different feasible realizations of framework $(\hat{\mathcal{G}},x)$. Since anchors are not all collinear, $\hat{p}$ is non-degenerate. Then  $(\hat{p}^{\top}, \hat{x}^{\top})^{\top}$ can never be obtained by a trivial motion from $(\hat{p}^{\top},\bar{x}^{\top})^{\top}$. That is, angle fixability of $(\hat{\mathcal{G}},x)$ is not preserved in $\mathbb{R}^{2+r}$, which is a contradiction. 

Necessity. We first prove that $(\hat{\mathcal{G}},x)$ is angle fixable in $\mathbb{R}^2$. Due to Theorem \ref{th localizability=fixability}, it suffices to show that (\ref{qcqp}) has a unique solution. Suppose this is not true, by Lemma \ref{le d+1 to equivalence}, (\ref{SDP Z}) also has multiple solutions. Let $$Z_1=\begin{pmatrix}
Y_1 & ~\\
~ & D_1
\end{pmatrix}, Z_2=\begin{pmatrix}
Y_2 & ~\\
~ & D_2
\end{pmatrix}$$ be two different solutions to (\ref{SDP Z}), where $$Y_1=\begin{pmatrix}
I_2 & X_1\\
X_1^{\top} & X_1^{\top}X_1
\end{pmatrix}, Y_2=\begin{pmatrix}
I_2 & X_2\\
X_2^{\top} & X_2^{\top}X_2
\end{pmatrix},$$ then we conclude that $Z_3=\frac12 Z_1+\frac12Z_2$ is also a solution to (\ref{SDP Z}). As a result, $$\frac12 Y_1+\frac12Y_2=\begin{pmatrix}
I_2 & \frac12 X_1+\frac12 X_2\\
\frac12 X_1^{\top}+\frac12 X_2^{\top} & \frac12 X_1^{\top}X_1+\frac12 X_2^{\top}X_2
\end{pmatrix}.$$ Since $Z_3$ is a solution to (\ref{SDP Z}), $$\frac12 X_1^{\top}X_1+\frac12 X_2^{\top}X_2=(\frac12 X_1+\frac12 X_2)^{\top}(\frac12 X_1+\frac12 X_2).$$ It follows that $||X_1-X_2||=0$. Since $\rank(D)=1$ and all the diagonal elements of $D$ can be determined by $X$, $D$ is uniquely determined by $X$. Then we have $D_1=D_2$. Accordingly, $Z_1=Z_2$, which is a contradiction. Hence, $(\hat{\mathcal{G}},x)$ is angle fixable. By Lemma \ref{le hyperplane}, anchors must be not all collinear.

To show that the angle fixability of $(\hat{\mathcal{G}},x)$ is invariant to space dimensions, we note that from the proof of sufficiency, if $(\hat{\mathcal{G}},x)$ is not angle fixable in $\mathbb{R}^{2+r}$, we can always accordingly find a solution to (\ref{qcqp}) with rank $3+r$. Hence the proof is completed.
\QEDA

To prove Theorem \ref{th D rank1}, the following lemma will be used.

\begin{lemma}\label{le M_{23}}
	Consider a positive semi-definite matrix $M\in\mathbb{R}^{3\times3}$ with positive diagonal entries and one missing non-diagonal entry. If each $2\times2$ principal submatrix associated with available elements is of rank 1, then $M$ is uniquely completable.  
\end{lemma}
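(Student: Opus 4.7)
My plan is as follows. Without loss of generality, assume the missing entry is $M_{13}$ (hence $M_{31}$ by symmetry); the other cases follow by relabeling indices. Then the two $2\times 2$ principal submatrices with entirely available entries are those indexed by $\{1,2\}$ and $\{2,3\}$. The rank-$1$ hypothesis on these two submatrices, combined with the positivity of $M_{11}, M_{22}, M_{33}$, yields the two algebraic identities
\begin{equation*}
M_{12}^2 = M_{11} M_{22}, \qquad M_{23}^2 = M_{22} M_{33}.
\end{equation*}

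Next, I would compute $\det(M)$ by cofactor expansion along the first row and treat it as a quadratic polynomial in the unknown $M_{13}$. After substituting the two identities above, the constant and linear terms collapse, and I expect the determinant to simplify to
\begin{equation*}
\det(M) = -M_{22}\left(M_{13} - \frac{M_{12}M_{23}}{M_{22}}\right)^2,
\end{equation*}
which is the key identity; this amounts to observing that the discriminant of the quadratic in $M_{13}$ vanishes precisely because the two rank-$1$ relations hold. Positive semidefiniteness of $M$ requires $\det(M) \geq 0$, and together with $M_{22} > 0$ this forces the square to be zero, i.e.\ $M_{13} = M_{12} M_{23}/M_{22}$ is uniquely determined. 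Finally, I would verify that this value is indeed consistent with $M \succeq 0$ (which holds by construction, since with this choice $M$ becomes the Gram matrix of three scalar multiples of a single vector after factoring), so the completion exists and is unique.

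The main obstacle is the algebraic step of recognizing the perfect-square structure of $\det(M)$; once that is in place the PSD hypothesis immediately pins down $M_{13}$. I do not anticipate any deeper difficulty, since the combination of rank-$1$ constraints on adjacent $2\times 2$ blocks and the strict positivity of the shared diagonal entry $M_{22}$ is exactly what makes the quadratic in $M_{13}$ degenerate to a single double root.
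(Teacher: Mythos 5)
Your proposal is correct and follows essentially the same route as the paper: both use the two rank-$1$ relations on the available $2\times 2$ blocks to collapse $\det(M)$ into $-(\text{positive diagonal entry})\times(\text{perfect square})$ in the unknown entry, so that $\det(M)\geq 0$ forces the square to vanish and pins the entry down uniquely (the paper writes this via the explicit factorization $M_1=(a\ b)^{\top}(a\ b)$, $M_2=(a\ c)^{\top}(a\ c)$ and concludes $M_{23}=bc$, which is your identity after relabeling). The determinant identity you anticipate does check out under the two product relations, so no gap remains.
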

\begin{proof}
	Without loss of generality, let $M_{23}$ be the missing entry, then $M_1=\left(\begin{smallmatrix}
	M_{11} & M_{12}\\
	M_{12} & M_{22}
	\end{smallmatrix}\right)$ and $M_2=\left(\begin{smallmatrix}
	M_{11} & M_{13}\\
	M_{13} & M_{33}
\end{smallmatrix}\right)$ are both positive semi-definite and of rank 1. Suppose that $M_1=(a~ b)^{\top}(a~ b)$, $M_2=(a~ c)^{\top}(a~ c)$. As a result, $M=\left(\begin{smallmatrix}
	a^2&ab&ac\\
	ab &b^2&M_{23}\\
	ac&M_{23}&c^2
	\end{smallmatrix}\right)$. Since $M$ is positive semi-definite, we have $\det(M)\geq0$. Then we can derive that $a^2(M_{23}^2-2bcM_{23}+b^2c^2)\leq0$. Together with $a^2\neq0$, we have $M_{23}=bc$.
\end{proof}

Proof of Theorem \ref{th D rank1}: Without loss of generality, suppose $Y=\left(\begin{smallmatrix}
I_2 & X\\
X^{\top} & \bar{X}^{\top}\bar{X}
\end{smallmatrix}\right)$, where $\bar{X}=(\bar{x}_1,...,\bar{x}_{n_s+2})\in\mathbb{R}^{(2+r)\times (n_s+2)}$, $r\geq0$ is an integer. It follows from (\ref{second conversion}) that
$$(\bar{x}_i-\bar{x}_j)^{\top}(\bar{x}_i-\bar{x}_k)=a_{ijk}D_{l_{ij}l_{ik}}, (i,j,k)\in\mathcal{T}_{\hat{\mathcal{G}}},$$
$$||\bar{x}_i-\bar{x}_j||^2=D_{l_{ij}l_{ij}}, (i,j)\in\hat{\mathcal{E}}.$$
For any $(i,j,k)\in\mathcal{T}_{\hat{\mathcal{G}}}$ such that $(j,k)\in\hat{\mathcal{E}}$, since the angle between $(i,j)$ and $(i,k)$ is acute, $a_{ijk}>0$. From $D\succeq0$, we have $D_{l_{ij}l_{ik}}^2\leq D_{l_{ij}l_{ij}}D_{l_{ik}l_{ik}}$. Then
$$(\bar{x}_i-\bar{x}_j)^{\top}(\bar{x}_i-\bar{x}_k)\leq a_{ijk}||\bar{x}_i-\bar{x}_j||||\bar{x}_i-\bar{x}_k||.$$
Let $\theta_1$ be the angle between $\bar{x}_i-\bar{x}_j$ and $\bar{x}_i-\bar{x}_k$. Then $\theta_1\geq\arccos a_{ijk}$. Similarly, let $\theta_2$ and $\theta_3$ be the angles between $\bar{x}_j-\bar{x}_i$ and $\bar{x}_j-\bar{x}_k$, $\bar{x}_k-\bar{x}_i$ and $\bar{x}_k-\bar{x}_j$, respectively. It holds that $\theta_2\geq\arccos a_{jik}$ and $\theta_3\geq\arccos a_{kij}$. Note that $\arccos a_{ijk}+\arccos a_{jik}+\arccos a_{kij}=\pi$, and $\theta_1+\theta_2+\theta_3=\pi$, it follows that $\theta_1=\arccos a_{ijk}$, $\theta_2=\arccos a_{jik}$ and $\theta_3=\arccos a_{kij}$. As a result, $D_{l_{ij}l_{ik}}^2= D_{l_{ij}l_{ij}}D_{l_{ik}l_{ik}}$.

By Lemma \ref{le M_{23}}, if $(i,j,k)\in\mathcal{T}_{\hat{\mathcal{G}}}$, $(j,k)\in\hat{\mathcal{E}}$ and $(i,j,h)\in\mathcal{T}_{\hat{\mathcal{G}}}$, $(j,h)\in\hat{\mathcal{E}}$, then $D_{l_{ik}l_{ih}}^2= D_{l_{ik}l_{ik}}D_{l_{ih}l_{ih}}$. Without loss of generality, let $k<h$, $l_{ij}<l_{ik}<l_{ih}$, $y=D_{l_{ik}l_{ih}}$. From $D\succeq0$, we have 
$$\det\begin{pmatrix}
D_{l_{ij}l_{ij}} & D_{l_{ij}l_{ik}} & D_{l_{ij}l_{ih}}\\
D_{l_{ij}l_{ik}} & D_{l_{ik}l_{ik}} & y\\
D_{l_{ij}l_{ih}} &      y           & D_{l_{ih}l_{ih}}  
\end{pmatrix}\geq0.$$
Together with $D_{l_{ij}l_{ik}}^2= D_{l_{ij}l_{ij}}D_{l_{ik}l_{ik}}$ and $D_{l_{ij}l_{ih}}^2= D_{l_{ij}l_{ij}}D_{l_{ih}l_{ih}}$, we can derive that $y=D_{l_{ik}l_{ik}}D_{l_{ih}l_{ih}}$.

By Lemma \ref{le M_{23}}, we can obtain that for any three edges in the graph, e.g., $(i,j)$, $(k,h)$ and $(u,v)$, if $D_{l_{ij}l_{kh}}^2=D_{l_{ij}l_{ij}}D_{l_{kh}l_{kh}}>0$ and $D_{l_{kh}l_{uv}}^2=D_{l_{kh}l_{kh}}D_{l_{uv}l_{uv}}>0$, then there must hold that $D_{l_{ij}l_{uv}}^2=D_{l_{ij}l_{ij}}D_{l_{uv}l_{uv}}>0$. Since $(\hat{\mathcal{G}},x)$ is triangulated, and the anchors are not all collinear, we have $D_{ij}^2=D_{ii}D_{jj}>0$ for all $i,j\in\{1,...,m\}$. That is, $\rank(D)=1$.
\QEDA

Proof of Theorem \ref{th A is chordal}: We will show that if $(i,j),(i,k)\in E(A)$ and $j\neq k$, then $(j,k)\in E(A)$, i.e., $A_{jk}>0$. Note that $i,j,k\in\{1,...,n_s+2\}$ and 

\[
\begin{split}
A=&\frac12\sum_{(i,j,k)\in\mathcal{T}_{\hat{\mathcal{G}}}}\bigg|\left[(f_i-f_k)(f_i-f_j)^{\top}+(f_i-f_j)(f_i-f_k)^{\top}\right]\bigg|\\
&+\sum_{(i,j)\in\hat{\mathcal{E}}}\bigg|(f_i-f_j)(f_i-f_j)^{\top}\bigg|.
\end{split}	
\] 
Without loss of generality, we consider the following cases:

Case 1. $i,j\in\{1,2\}$, $k>2$. Let $k'=k-2+n_a$, then $k'\in\mathcal{S}$. Note that $A_{ik}\neq0$ only if there exists at least one anchor $i'$ such that $(i',k')\in\mathcal{E}$. Let $j'$ be another anchor distinct to $i'$ such that $(p_{i'}-p_{j'})_x(p_{i'}-p_{j'})_y\neq0$, then 
\[
\begin{split}
M&=(f_{i'}-f_{j'})(f_{i'}-f_{k'})^{\top}=\begin{pmatrix}
p_{i'}-p_{j'}\\\mathbf{0}_{n_s\times1}
\end{pmatrix}(p_{i'}^{\top},-e_{k-2}^{\top})\\
&=\begin{pmatrix}
(p_{i'}-p_{j'})p_{i'}^{\top} & -(p_{i'}-p_{j'})e_{k-2}^{\top}\\
\mathbf{0}_{n_s\times2} & \mathbf{0}_{n_s\times n_s}
\end{pmatrix}.
\end{split}
\] It can be computed that $M_{jk}=(p_{i'}-p_{j'})_x$ if $j=1$ and $M_{jk}=(p_{i'}-p_{j'})_y$ if $j=2$. As a result, $A_{jk}\geq \frac12|M_{jk}|>0$.

Case 2. $i\in\{1,2\}$, $j,k>2$. $(i,j),(i,k)\in E(A)$ implies that there exist $i'\in\mathcal{A}$, $j'=j-2+n_a\in\mathcal{S}$ and $k'=k-2+n_a\in\mathcal{S}$ such that $(i,j),(i,k)\in\mathcal{E}$. It follows that 
\[
\begin{split}
M&=(f_{i'}-f_{j'})(f_{i'}-f_{k'})^{\top}=\begin{pmatrix}
p_{i'}\\-e_{j-2}
\end{pmatrix}(p_{i'}^{\top},-e_{k-2}^{\top})\\
&=\begin{pmatrix}
p_{i'}p_{i'}^{\top} & -p_{i'}e_{k-2}^{\top}\\
-e_{j-2}p_{i'}^{\top} & e_{j-2}e_{k-2}^{\top}
\end{pmatrix}.
\end{split}
\]
Since $M_{jk}=1$, we have $A_{jk}\geq \frac12|M_{jk}|=\frac12$.

Case 3. $i,j,k>2$. Let $i'=i-2+n_a$, $j'=j-2+n_a$ and $k'=k-2+n_a$, we have 
\[
\begin{split}
M&=(f_{i'}-f_{j'})(f_{i'}-f_{k'})^{\top}\\
&=\begin{pmatrix}
\mathbf{0}_{2\times1}\\e_{i-2}-e_{j-2}
\end{pmatrix}(\mathbf{0}_{1\times2},e_{i-2}^{\top}-e_{k-2}^{\top})\\
&=\begin{pmatrix}
\mathbf{0}_{2\times2} & -p_{i'}e_{k-2}^{\top}\\
-e_{j-2}p_{i'}^{\top} & e_{j-2}e_{k-2}^{\top}
\end{pmatrix}.
\end{split}
\] 
Similar to Case 2, $A_{jk}\geq \frac12|M_{jk}|=\frac12$.	
\QEDA

Proof of Theorem \ref{th decomposeD}: From the proof of Theorem \ref{th D rank1}, one can realize that in the absence of the rank constraint on $D$, if the $3\times3$ submatrix corresponding to a triangle (e.g., composed of $i$, $j$ and $k$) is positive semi-definite, then constraints on angles in this triangle are exact (being equalities rather than inequalities). Moreover, if the $3\times3$ submatrix corresponding to a pair of angles sharing a common edge is positive semi-definite, then the corresponding three angle constraints are exact. For example, suppose $(i,j),(i,k),(i,h)\in\hat{\mathcal{E}}$, and $l_{ij}<l_{ik}<l_{ih}$, if the third order principal submatrix of $D$ corresponding to $l_{ij}$, $l_{ik}$, $l_{ih}$ is positive semi-definite, then $\frac{(x_i-x_j)^{\top}}{||x_i-x_j||}\frac{(x_i-x_k)}{||x_i-x_k||}=a_{ijk}$, $\frac{(x_i-x_k)^{\top}}{||x_i-x_k||}\frac{(x_i-x_h)}{||x_i-x_h||}=a_{ikh}$, $\frac{(x_i-x_j)^{\top}}{||x_i-x_j||}\frac{(x_i-x_h)}{||x_i-x_h||}=a_{ijh}$. This implies that if angles within each triangle are exactly constrained, then angles between edges in different triangles can also be exactly constrained. Note that for any $(i,j),(i,k),(i,h)\in\hat{\mathcal{E}}$, $l_{ij}$, $l_{ik}$ and $l_{ih}$ must be adjacent to each other in graph $\mathcal{G}(B)$. Moreover, the three edges of each triangle are also adjacent to each other in graph $\mathcal{G}(B)$. Hence, we only require the third order principal submatrix of $D$ corresponding to each 3-point clique in $\mathcal{G}(B)$ to be positive semi-definite, which must hold if $D_i=Q_{\mathcal{C}_i(B)}DQ_{\mathcal{C}_i(B)}^{\top}\succeq0$ for all maximal cliques $\mathcal{C}_i(B)$ of graph $\mathcal{G}(B)$.
\QEDA

\section{Acknowledgement}

The authors would like to thank the anonymous reviewers for their comprehensive comments and constructive suggestions on how to improve this paper. The authors also thank Prof. Brian D. O. Anderson and Prof. Shiyu Zhao for insightful conversations.

\vspace{-1cm}
\begin{IEEEbiography}[{\includegraphics[width=1in,height=1.25in,clip,keepaspectratio]{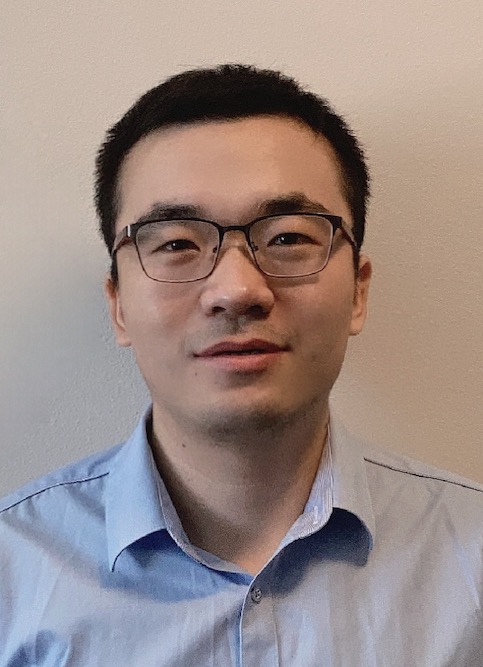}}]{Gangshan Jing}
	received the Ph.D. degree in Control Theory and Control Engineering from Xidian University, Xi'an, China, in 2018. From Dec. 2016- May. 2017, and Nov. 2017- Jan. 2018, he was a research assistant at  Department of Applied Mathematics, Hong Kong Polytechnic University, Hong Kong. From Oct. 2018- Sept. 2019, he was a postdoctoral researcher at Department of Mechanical and Aerospace Engineering, The Ohio State University, USA. Since Sept. 2019,  he has been a postdoctoral researcher at Department of Electrical and Computer Engineering, North Carolina State University, USA. His current research interests include control, optimization, and machine learning for network systems.
\end{IEEEbiography}

\begin{IEEEbiography}[{\includegraphics[width=1in,height=1.25in,clip,keepaspectratio]{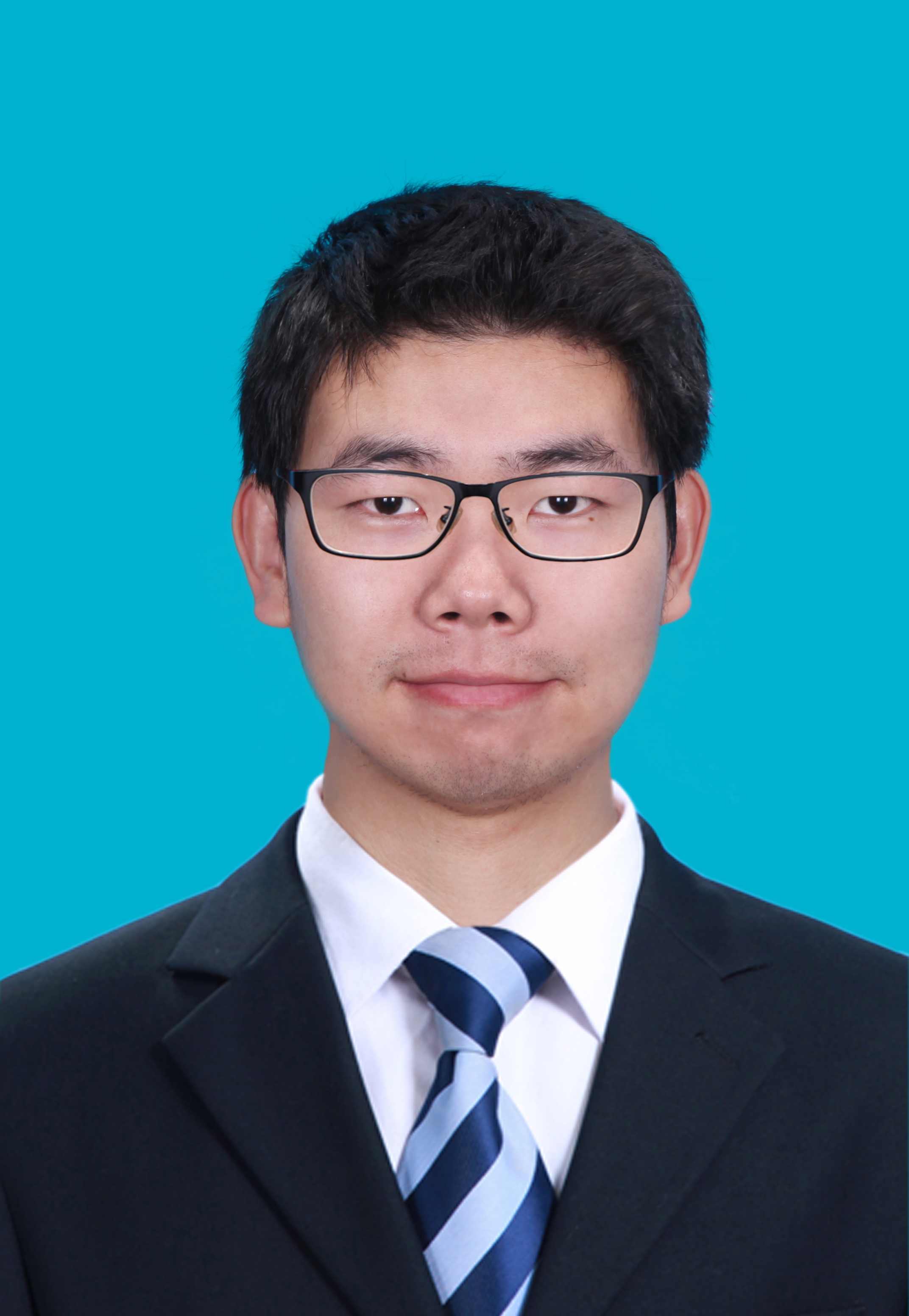}}]{Changhuang (Charlie) Wan}
	received his bachelor and master degrees in Spacecraft Design and Engineering from Beihang University, Beijing, China, in 2013 and 2016, respectively. He is currently working towards the Ph.D. degree in the Mechanical and Aerospace Engineering Department at The Ohio State University, Columbus, OH. His research interests include numerical optimization and autonomous systems.
\end{IEEEbiography}

\begin{IEEEbiography}[{\includegraphics[width=1in,height=1.25in,clip,keepaspectratio]{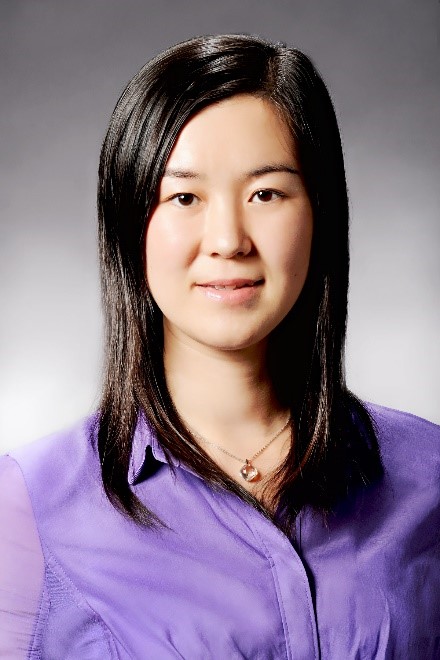}}]{Ran Dai}
	is an associate professor in School of Aeronautics and Astronautics at Purdue University. She received her B.S. degree in Automation Science from Beihang University and her M.S. and Ph.D. degrees in Aerospace Engineering from Auburn University. Dr. Dai's research focuses on control of autonomous systems, numerical optimization, and networked dynamical systems. She is an associate editor of IEEE transaction on Aerospace and Electronic Systems, and a recipient of the National Science Foundation Career Award and NASA Early Faculty Career Award.
\end{IEEEbiography}


\begin{thebibliography}{99}
	
\bibitem{Aspnes06} J. Aspnes, T. Eren, D.K. Goldenberg, A.S. Morse, W. Whiteley, Y.R. Yang, B.D.O. Anderson, and P.N. Belhumeur, ``A theory of network localization," IEEE Transactions on Mobile Computing, vol. 5, no. 12, pp. 1663-1678, 2006.
\bibitem{Mao07} G. Mao, B. Fidan, and B.D.O. Anderson, ``Wireless sensor network localization techniques," Computer networks, vol. 51, no. 10, pp. 2529-2553, 2007.
\bibitem{Paul17} A. Paul, and T. Sato, ``Localization in wireless sensor networks: a survey on algorithms, measurement techniques, applications and challenges," Journal of Sensor and Actuator Networks, vol. 6, no. 4, pp. 1-23, 2017.
\bibitem{Barooah08} P. Barooah, and J.P. Hespanha, ``Estimation from relative measurements: Electrical analogy and large graphs," IEEE Transactions on Signal Processing, vol. 56, no. 6, pp.2181-2193, 2008.
\bibitem{Biswas04} P. Biswas, and Y. Ye, ``Semidefinite programming for ad hoc wireless sensor network localization," In Proceedings of the 3rd international symposium on Information processing in sensor networks, pp. 46-54, 2004,
\bibitem{Biswas06} P. Biswas, T.C. Lian, T.C. Wang, and Y. Ye, ``Semidefinite programming based algorithms for sensor network localization," ACM Transactions on Sensor Networks (TOSN), vol. 2, no. 2, pp.188-220, 2006.
\bibitem{So07} A.M.C. So, and Y. Ye, ``Theory of semidefinite programming for sensor network localization," Mathematical Programming, vol. 109, no. 2-3, pp.367-384, 2007.
\bibitem{Fang09} J. Fang, M. Cao, A.S. Morse, and B.D.O. Anderson, ``Sequential localization of sensor networks," SIAM Journal on Control and Optimization, vol. 48, no. 1, pp.321-350, 2009.
\bibitem{Simonetto14} A. Simonetto, and G. Leus, ``Distributed maximum likelihood sensor network localization," IEEE Transactions on Signal Processing, vol. 62, no. 6, pp. 1424-1437, 2014.
\bibitem{Wan19} C. Wan, G. Jing, S. You, and R. Dai, ``Sensor network localization via alternating rank minimization algorithms," IEEE Transactions on Control of Network Systems. DOI: 10.1109/TCNS.2019.2926775, 2019.
\bibitem{Eren06} T. Eren, W. Whiteley, and P.N. Belhumeur, ``Using angle of arrival (bearing) information in network localization," In Proceedings of the 45th IEEE Conference on Decision and Control, pp. 4676-4681, 2006.
\bibitem{Zhu14} G. Zhu, and J. Hu, ``A distributed continuous-time algorithm for network localization using angle-of-arrival information," Automatica, vol. 50, no. 1, pp. 53-63, 2014.
\bibitem{Shames12} I. Shames, A. N. Bishop, and B. D. O. Anderson, ``Analysis of noisy bearing-only network localization," IEEE Transactions on Automatic Control, vol. 58, no. 1, pp.247-252, 2012. 
\bibitem{Zhao16} S. Zhao, and D. Zelazo, ``Localizability and distributed protocols for bearing-based network localization in arbitrary dimensions," Automatica, vol. 69, pp.334-341, 2016.
\bibitem{Lin16} Z. Lin, T. Han, R. Zheng, and M. Fu, ``Distributed localization for 2-D sensor networks with bearing-only measurements under switching topologies," IEEE Transactions on Signal Processing, vol. 64, no. 23, pp. 6345-6359, 2016.
\bibitem{Zhao18} S. Zhao, and D. Zelazo, ``Bearing rigidity theory and its applications for control and estimation of network systems: Life beyond distance rigidity," IEEE Control Systems Magazine, vol. 39, no. 2, pp. 66-83, 2018.
\bibitem{Trinh18} M. H. Trinh, B. H. Lee, M. Ye, and H. S. Ahn, ``Bearing-based formation control and network localization via global orientation estimation," In 2018 IEEE Conference on Control Technology and Applications (CCTA), pp. 1084-1089, 2018.
\bibitem{Li18} R. Li, Y. Shi, and Y. Song, ``Localization and circumnavigation of multiple agents along an unknown target based on bearing-only measurement: A three dimensional solution," {\it Automatica}, vol. 94, pp. 18-25, 2018.
\bibitem{Li19} X. Li, X. Luo, S. Zhao, ``Globally convergent distributed network localization using locally measured bearings," IEEE Transactions on Control of Network Systems, DOI: 10.1109/TCNS.2019.2921290, 2019.
\bibitem{Van18} Q. Van Tran, H. S. Ahn, and B. D. O. Anderson,  ``Distributed orientation localization of multi-agent systems in 3-dimensional space with direction-only measurements," In 2018 IEEE Conference on Decision and Control (CDC), pp. 2883-2889, 2018.
\bibitem{Trinh18b} M. H. Trinh, S. Zhao, Z. Sun, D. Zelazo, B. D. O. Anderson, and H.S. Ahn, ``Bearing-based formation control of a group of agents with leader-first follower structure," IEEE Transactions on Automatic Control, vol. 64, no. 2, pp. 598-613, 2018.
\bibitem{Trinh19} M. H. Trinh, Q. Van Tran, and H.S. Ahn, ``Minimal and Redundant Bearing Rigidity: Conditions and Applications," IEEE Transactions on Automatic Control, DOI: 10.1109/TAC.2019.2958563, 2019.
\bibitem{Asimow78} L. Asimow, and B. Roth, ``The rigidity of graphs," Transactions of the American Mathematical Society, vol. 245, pp. 279-289, 1978.
\bibitem{Hendrickson92} B. Hendrickson, ``Conditions for unique graph realizations," SIAM journal on computing, vol. 21, no. 1, pp. 65-84, 1992.
\bibitem{Laman70} G. Laman, ``On graphs and rigidity of plane skeletal structures," Journal of Engineering mathematics, vol. 4, no. 4, pp.331-340, 1970.
\bibitem{Henneberg11} L. Henneberg, ``Die graphische Statik der starren Systeme" (Vol. 31), BG Teubner, 1911.
\bibitem{Jing18} G. Jing, G. Zhang, H.W.J. Lee, and L. Wang, ``Angle-based shape determination theory of planar graphs with application to formation stabilization," Automatica, vol. 105, pp. 117-129, 2019.
\bibitem{Eren03} T. Eren, W. Whiteley, A.S. Morse, P.N. Belhumeur, and B.D.O. Anderson, ``Sensor and network topologies of formations with direction, bearing, and angle information between agents," In 42nd IEEE International Conference on Decision and Control (IEEE Cat. No. 03CH37475), Vol. 3, pp. 3064-3069, 2003.
\bibitem{Buckley17} I. Buckley, and M. Egerstedt, ``Infinitesimally shape-similar motions using relative angle measurements," In 2017 IEEE/RSJ International Conference on Intelligent Robots and Systems (IROS), pp. 1077-1082, 2017.
\bibitem{Buckley18} I. Buckley, and M. Egerstedt, ``Self-assembly of a class of infinitesimally shape-similar frameworks" In 2018 IEEE/RSJ International Conference on Intelligent Robots and Systems (IROS), pp. 3751-3756, 2018.
\bibitem{Jing19} G. Jing, and L. Wang, ``Multi-agent flocking with angle-based formation shape control," IEEE Transactions on Automatic Control, vol. 65, no. 2, 2020.
\bibitem{Chen19} L. Chen, M. Cao, and C. Li, ``Angle rigidity and its usage to stabilize multi-agent formations in 2D," IEEE Transactions on Automatic Control, DOI: 10.1109/TAC.2020.3025539, 2020.
\bibitem{Fang20} X. Fang, X. Li,  and L. Xie, ``Angle-displacement rigidity theory with application to distributed network localization," IEEE Transactions on Automatic Control, doi: 10.1109/ TAC.2020.3012630, 2020.
\bibitem{Jing19con} G. Jing, C. Wan and R. Dai, ``Angle fixability and angle-based sensor network localization," In Proceedings of the 58th IEEE Conference on Decision and Control, pp. 7899-7904, 2019.
\bibitem{Jing19arxiv} G. Jing, C. Wan and R. Dai, ``Angle-based sensor network localization," arXiv preprint arXiv:1912.01665, 2019.
\bibitem{Sun17} C. Sun, and R. Dai, ``Rank-constrained optimization and its applications," Automatica, vol. 82, pp. 128-136. 2017.
\bibitem{You19} S. You, C. Wan, and R. Dai, ``Iterative learning optimization for UAV path planning with avoidance zones," In Proceedings of the 2019 IEEE Conference on Decision and Control, 2019.
\bibitem{Kwon18} S. H. Kwon, M. H. Trinh, K.H. Oh, S. Zhao, and H.S. Ahn, ``Infinitesimal Weak Rigidity and Stability Analysis on Three-Agent Formations", In 2018 57th Annual Conference of the Society of Instrument and Control Engineers of Japan (SICE), pp. 266-271, 2018.
\bibitem{Jing18w} G. Jing, G. Zhang, H.W.J. Lee, and L. Wang, ``Weak rigidity theory and its application to formation stabilization", SIAM Journal on Control and optimization, vol. 56, no. 3, pp. 2248-2273, 2018.
\bibitem{Miller19} J. Miller, Y. Zheng, B. Roig-Solvas, M. Sznaier, A. Papachristodoulou, ``Chordal decomposition in rank minimized semidefinite programs with applications to subspace clustering," arXiv: 1904.10041, 2019.
\bibitem{Grone84} R. Grone, C.R. Johnson, E.M. S\'{a}, and H. Wolkowicz, ``Positive definite completions of partial Hermitian matrices," Linear algebra and its applications, vol. 58, pp.109-124, 1984.
\bibitem{Grant09} M. Grant, S. Boyd, and Y. Ye, ``CVX users guide," Technical Report [Online]. Available: http://citeseerx.ist.psu.edu/viewdoc/download, 2009.
\bibitem{Sun14} Y. Sun, M.S. Andersen, and L. Vandenberghe, ``Decomposition in conic optimization with partially separable structure," SIAM Journal on Optimization, vol. 24, no. 2, pp. 873-897, 2014.
\bibitem{Kalbat15} A. Kalbat, and J. Lavaei, ``A fast distributed algorithm for decomposable semidefinite programs," In 2015 54th IEEE Conference on Decision and Control (CDC), pp. 1742-1749, 2015.
\bibitem{Zheng19} Y. Zheng, G. Fantuzzi, A. Papachristodoulou, P. Goulart, and A. Wynn, ``Chordal decomposition in operator-splitting methods for sparse semidefinite programs," Mathematical Programming, https://doi.org/10.1007/s10107-019-01366-3, 2019.






%
%
%
%
%
%
\end{thebibliography}
\end{document}